\def\>{\ensuremath{\rangle}}
\def\<{\ensuremath{\langle}}
\newcommand {\supp } {{\rm supp}}
\newcommand {\E } {{\mathcal{E}}}
\newcommand{\hs}{\mathcal{H}}
\newcommand {\tr} {{\mathit{tr}}}
\newcommand {\sem}[1]{\ensuremath{\llbracket {#1}\rrbracket_\mathbb{I}}}
\newcommand {\semp}[1]{\ensuremath{\llbracket {#1}\rrbracket_\mathbb{I}^\ast}}
\newtheorem{thm}{Theorem}[section]
\newtheorem{lem}{Lemma}[section]
\newtheorem{defn}{Definition}[section]
\newtheorem{exam}{Example}[section]
\newtheorem{rem}{Remark}[section]
\begin{document}

\mainmatter

\title{\textit{Birkhoff-von Neumann Quantum Logic} as an Assertion Language for Quantum Programs}
\author{Mingsheng Ying}
\institute{State Key Laboratory of Computer Science, Institute of Software, Chinese Academy of Sciences, China\\     
    Department of Computer Science and Technology, Tsinghua University, China\\
\email{yingms@ios.ac.cn; yingmsh@tsinghua.edu.cn}}
\titlerunning{Assertion Language for Quantum Programs}
\authorrunning{Ying}

\maketitle

\begin{abstract} A first-order logic with quantum variables is needed as an assertion language for specifying and reasoning about various properties (e.g. correctness) of quantum programs. 
Surprisingly, such a logic is missing in the literature, and the existing first-order Birkhoff-von Neumann quantum logic deals with only classical variables and quantifications over them.  
In this paper, we fill in this gap by introducing a first-order extension of Birkhoff-von Neumann quantum logic with universal and existential quantifiers over quantum variables. Examples are presented to show our logic is particularly suitable for specifying some important properties studied in quantum computation and quantum information. We further incorporate this logic into quantum Hoare logic as an assertion logic so that it can play a role similar to that of first-order logic for classical Hoare logic and BI-logic for separation logic. In particular, we show how it can be used to define and derive quantum generalisations of some adaptation rules that have been applied to significantly simplify verification of classical programs. It is expected that the assertion logic defined in this paper - first-order quantum logic with quantum variables - can be combined with various quantum program logics to serve as a solid logical foundation upon which verification tools can be built using proof assistants such as Coq and Isabelle/HOL. 
\keywords{Quantum programs, assertions, quantum predicates, Birkhoff-von Neumann quantum logic, quantum Hoare logic.}
\end{abstract}

\section{Introduction}\label{Intro}

\ \ \ \ \ \ \textbf{Program Logics and Assertion Logics}: A major class of verification techniques for classical programs are based on program logics; in particular, Hoare logic and its various extensions. Program logics are designed for specifying dynamic properties of programs. Usually, a program logic is built upon an assertion logic that is employed to describe static properties of program variables \cite{Apt19}. Certainly, the effectiveness of  these verification techniques comes from a combined power of program logics and assertion logics rather than the sole role of the former. However, this point has often not been seriously noticed. The reason is possibly that first-order logic is commonly adopted as an assertion logic, it is ubiquitous in mathematics, computer science and many other fields, and thus its role is considered for granted and frequently overlooked. The important role of assertion logics in program verification became particularly clear through the great success of separation logic \cite{SL02,SL19}, which enables local reasoning by expanding assertion logic with new connectives (namely separation conjunction and the associated implication) that are not definable in first-order logic \cite{BI01}, especially by adopting the logic BI of bunched implications \cite{BI99} as its assertion language. 

{\vskip 3pt}

\ \ \ \ \textbf{Quantum Hoare Logic}: The rapid progress of quantum computing hardware in the last decade has stimulated recent intensive research on quantum programming methodology. In particular, several program logics have been defined \cite{Jor04,Cha06,BS06,Feng07,Kaku09,Unruh19a,Unruh19b,Barthe20,Kart20,Zhou21,Le22,Li21} and various verification and analysis techniques have been developed \cite{Mary17,Mary19,Vali21,Mary21a,Mary21b,Wu19,Gu21,Gu19,Carbin22,ETH21,Yu21,YDFJ09} for quantum programs (see also surveys \cite{Ying19,Survey_UK,Survey_FR}). Among them, D'Hondt and Panangaden \cite{DP06} introduced the notion of quantum weakest precondition, where a quantum predicate is considered as a physical observable with eigenvalues in the unit interval, which can be mathematically modelled as a Hermitian operator between the zero and identity operators, and is often called an effect in the quantum foundations literature. A (relatively) complete quantum Hoare logic (QHL for short) with such quantum predicates was established in \cite{Ying19}. Moreover, a QHL theorem prover was implemented based on Isabelle/HOL for verification of quantum programs \cite{Liu19}.        

A major hurdle for the applicability of the current version of QHL to verification of large quantum algorithms comes from the poor expressivity of its assertion language. 
To see this more clearly, let us compare the assertions used in classical Hoare logic and QHL. A classical assertion for a program is a predicate, i.e. a Boolean-valued function, over the state space of the program. First-order logic used as an assertion language enables that every assertion is represented by a logical formula constructed from atomic formulas using propositional connectives and universal and existential quantifications. It enhances the applicability of Hoare logic in at least two ways: (i) a logical representation of an assertion is often much more economic than as a Boolean-valued function over the entire state space; (ii) first-order logic can be used to infer entailment between assertions, which can help us to apply the rules of Hoare logic more efficiently. In contrast, following \cite{DP06}, a quantum predicate in QHL is currently described as a Hermitian operator on the Hilbert space of quantum variables, which is, for example, a $2^n\times 2^n$ matrix for the case of $n$ qubits. In a sense, this can be seen as a quantum counterpart of Boolean-valued function representation of a classical predicate. Then a verification condition for a quantum program with $n$ qubits is derived in a QHL prover \cite{Liu19} as an inequality between two $2^n\times 2^n$ matrices or equivalently the semi-definite positivity of their difference, which is hard to check when dealing with large quantum algorithms because the size of the involved matrices grows up exponentially as the number of program variables. 
This is very different from a verification condition for a classical program, which is written as a first-order logical formula and can be inferred from validity of its sub-formulas structurally using logical rules for connectives and quantifiers.  
\textit{So, scalable applications of QHL requires an assertion logic that can play a role similar to that of first-order logic for classical Hoare logic?} 

The proof assistant Coq has been very successfully used in building verification tools for quantum compilers as well as quantum algorithms \cite{Mary17,Mary19,Mary21a,Mary21b,Peng22}. Although these tools are not based on quantum Hoare logic, we expect that they can also be empowered by an assertion logic with the same benefits as discussed above. 

{\vskip 3pt}

\ \ \ \ \textbf{Runtime Assertion Checking in Quantum Computing}:  The need of a logical language for quantum assertions also arises in another line of research. As is well-known, runtime assertion checking is one of the most useful automated techniques in classical software testing and analysis for detecting faults and providing information about their locations \cite{Clarke06}. This technique has recently been extended to quantum computing. The first assertion scheme was proposed in \cite{Huang19} to check whether a program variable is in a given quantum state. This scheme is essentially statistical and cannot be implemented dynamically at runtime because the destructive measurements used there may cause collapse of quantum states. A runtime assertion scheme was then introduced in \cite{Liu20} by employing non-destructive measurement and SWAP test. The assertions in \cite{Huang19,Liu20} specify that a program variable is in a single state (or multiple variables are in an entangled state, which is still a single state). The scheme in \cite{Liu20} was further generalised in \cite{Liu21} to assert that program variables are in one of several quantum states, say $|\psi_1\rangle,...,|\psi_n\rangle$. Unfortunately, lacking a precisely defined assertion language leads to some inaccuracy and even incorrectness there, for example the confusion between (i) a program variable is in one of quantum state $|\psi_1\rangle,...|\psi_n\rangle$, or in set $S=\{|\psi_1\rangle,...|\psi_n\rangle\}$, which is a proposition in classical logic;
and (ii) a program variable is in the subspace $X$ spanned by $S$, which is a proposition in Birkhoff-von Neumann quantum logic \cite{BvN36}. This confusion can also be seen as a misunderstanding of the different interpretations of connective \textquotedblleft $\vee$ (or)\textquotedblright\ in classical logic and quantum logic: if we write $\alpha_i$ for the proposition of being in state $|\psi_i\rangle$, then set $S$ is the semantics of $\bigvee_{i=1}^n\alpha_i$ in classical logic, and subspace $X$ is the semantics of $\bigvee_{i=1}^n\alpha_i$ in quantum logic (see Example \ref{q-connectives} for more detailed discussion). \textit{Therefore, one can expect that a logical language for specifying quantum assertions can help to prevent these slips.}     

{\vskip 3pt}

\textbf{Quantum Logic with Classical Variables}: 
Quantum logic (QL for short) has been developed for about 80 years since Birkhoff and von Neumann's seminal paper \cite{BvN36} to provide an appropriate logic for reasoning about quantum mechanic systems. 
Then a natural question is: \textit{can Birkhoff-von Neumann quantum logic be directly used as an assertion language for quantum programs?} To answer this question, let us first briefly review the basic ideas of quantum logic. 
It was identified in \cite{BvN36} that a proposition about a quantum system should be mathematically represented by a closed subspace of (or equivalently, a projection operator on) the state space of the system, which is a Hilbert space $\hs$ according to the postulates of quantum mechanics. The early research on quantum logic had been focusing on its algebraic aspect, namely understanding the algebraic structure of the set $S(\hs)$ of closed subspaces of $\hs$, and it was proved that $S(\hs)$ equipped with orthocomplement $^\bot$ and intersection $\cap$ is an orthomodular lattice. The later research naturally turned to a more logic flavour \cite{Dalla04}; that is, quantum logic is defined as a logic with truth values as elements of $S(\hs)$ or even an abstract orthomodular lattice. More precisely, the language of propositional quantum logic is the standard propositional language, but the truth value of each propositional variable is taken from $S(\hs)$, and logical connectives $\neg, \wedge$ are interpreted as $^\bot, \cap$, respectively.  
Furthermore, first-order quantum logic has the standard first-order language with, say, individual variables $x,y,z,...$, function symbols $f,g,...$, predicate symbols $P,Q,...$, connectives $\neg,\wedge$ and quantifier $\forall$. The individual variables $x,y,z,...$ are still classical variables with values taken from a domain $D$. Then function symbols $f,g,...$ are interpreted in the same way as in classical logic, and quantifier $\forall$ binds classical variables. The only difference between the quantum and classical logics is that 
the former is $S(\hs)$-valued; that is, an $n$-ary predicate symbol, say $P$, is interpreted as a mapping $D^n\rightarrow S(\hs)$, and connectives $\neg,\wedge$ are interpreted as $^\bot$ and $\cap$, respectively. 

{\vskip 3pt}

\textbf{Why A Logic with Quantum Variables?} The existing first-order quantum logic (as described above) only has classical variables and seems not a desirable assertion language for quantum programs (and more broadly, logical tool for formal reasoning in quantum computation and quantum information). Instead, a first-order logic with quantum variables whose values are quantum states rather than classical ones will be often more convenient, in particular in analysis and verification of quantum programs and quantum cryptographic and communication protocols. To see this, let us consider a simple example: 


\begin{exam}[Inputs/Outputs of quantum circuit]\label{example-1} The quantum circuit in Figure \ref{combinational-quantum-circuit}
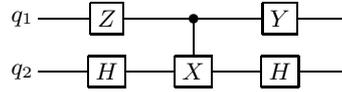
\begin{figure}
\centerline{
\Qcircuit @C=1em @R=0.9em {
q_1\ \ \ \ &\qw  & \gate{Z}  &\qw    &\ctrl{1}   &   \qw  & \gate{Y}  &\qw&\qw\\
q_2\ \ \ \ &\qw  & \gate{H}  &\qw    &\gate{X}   &   \qw  & \gate{H}  &\qw&\qw\\
}
}
    \caption{A quantum circuit.}
    \label{combinational-quantum-circuit}
\end{figure}
can be written as a term with quantum variables:
\begin{equation}\label{term-exam}\tau=Z(q_1)H(q_2)C(q_1,q_2)Y(q_1)H(q_2)\end{equation} where $q_1,q_2$ are qubit variables, operation symbols $Z,Y,H,C$ denote Pauli gates $Z, Y$, Hadamard gate, and controlled-Not, i.e. CNOT, respectively. It is easy to see that whenever in the input to the circuit, $q_1$ is in basis state $|0\rangle$, then the state of $q_2$ in the output is the same as in the input. This fact can be expressed by the first-order formula:
\begin{equation}\label{term-exam1}\beta=(\forall q_1)(\forall q_2)[P_0(q_1)\wedge P(q_1,q_2)\rightarrow P(\tau)]\end{equation} where $P_0$ is a predicate symbol for a single qubit denoting the one-dimensional space spanned by $|0\rangle$, and $P$ is a predicate symbol for two qubits denoting a subspace of the form $\hs_2\otimes X$ with $\hs_2$ being the state space of the first qubit and $X$ an arbitrary subspace of the state space of the second qubit. 
\end{exam}

Quantum variables $q_1,q_2$ and universal/existential quantifications over them appear in a natural way in the above example. But surprisingly, \textit{a first-order quantum logic with quantum variables is still missing in the literature}. The \underline{first contribution} of this paper is to fill in this gap and to define such a logic $\mathcal{QL}$ in which logical formulas can contain quantum variables as their individual variables, and thus 
 (\ref{term-exam1}) is an eligible logical formula. $\mathcal{QL}$ is designed to be a first-order logic with equality $=$ so that we can use it to reason about equality of two quantum states and equivalence of two quantum circuits.

{\vskip 3pt}

\textbf{Quantifications over Quantum Variables}: The introduction of quantum variables into a first-order logic leads us to a fundamentally new issue --- quantification over quantum variables. To see this, let us consider the universal quantification. The existential quantification is similar. Recall that in classical first-order logic, a universally quantified formula $(\forall x)\beta$ is interpreted by  
\begin{equation}\label{classical-qunantifier}(\mathbb{I},v)\models (\forall x)\beta\ {\rm iff}\ (\mathbb{I},v[a/x])\models\beta\ {\rm for\ any\ possible\ value}\ a\ {\rm of}\ x\end{equation}
where $\mathbb{I}$ is an interpretation of the first-order language under consideration, $v$ is a valuation function of individual variables, and $v[a/x]$ is the valuation function that coincides with $v$ for all variables $y\neq x$ but takes the value $a$ for variable $x$. A natural quantum extension of universal quantification is a formula $(\forall \overline{q})\beta$ with a sequence $\overline{q}$ of quantum variables that is interpretd by 
\begin{equation}\label{quantum-quantifier}(\mathbb{I},\rho)\models (\forall\overline{q})\beta\ {\rm iff}\ (\mathbb{I},\E(\rho))\models\beta\ {\rm for\ any}\ \E\in\mathcal{O}_{\overline{q}}\end{equation}
where $\mathcal{O}_{\overline{q}}$ is a set of quantum operations that are allowed to perform on $\overline{q}.$ 

Quantification over quantum variables is much more intrigues than that over classical variables. We note that the quantification in (\ref{quantum-quantifier}) is defined over a sequence of quantum variables rather than over a single variable as usual in classical logic. The reason is that in general, a joint quantum operation on several variables cannot be implemented by a series of local operations on a single variable. Moreover, although we only consider the quantification defined in (\ref{quantum-quantifier}) by allowed operations in logic $\mathcal{QL}$, quantification over quantum variables can be defined in several other different ways that reflect some characteristic features of quantum systems, as will be briefly discussed at the end of this paper. 

{\vskip 3pt}

\textbf{$\mathcal{QL}$ as An Assertion Language for Quantum Programs}: After establishing logic $\mathcal{QL}$, the \underline{second contribution} of this paper is to incorporate $\mathcal{QL}$ into QHL (quantum Hoare logic) so that program logic QHL and assertion logic $\mathcal{QL}$ can properly work together in verification and analysis of quantum programs. In particular, with the help of $\mathcal{QL}$, a series of auxiliary and adaptation rules can be defined and derived for more convenient applications of QHL.   

QHL was first developed in \cite{Ying11} for Hoare triples with pre/postconditions being general quantum predicates represented as Hermitian operators between the zero and identity operators, i.e. effects. A variant of QHL is derived in \cite{Zhou19} by using a special class of quantum predicates, namely projections, as pre/postconditions. It was shown that this variant can be used for simplifying verification of some quantum algorithms, for example HHL (Harrow-Hassidim-Lloyd) algorithm for solving systems of linear equations and qPCA (quantum Principal Component Analysis), as well as runtime assertion checking and testing of quantum programs \cite{Li20}. Correspondingly, there are mainly two variants of QL (quantum logic). The original QL was defined in \cite{BvN36} for logical propositions interpreted as projections (or equivalently, closed subspaces of the Hilbert space of the quantum system under consideration), and now it is often called sharp QL. Unsharp QL was introduced in the framework of effect-based formulation of quantum theory \cite{Kraus}, where logical propositions are interpreted as effects, i.e. quantum predicates as defined in \cite{DP06}. In this paper, we choose to focus on the extension $\mathcal{QL}$ of \textit{sharp QL} as an assertion language for the simplified version of QHL with projections as pre/postconditions. The main ideas and results of this paper will be generalised to the case of \textit{unsharp QL} and QHL with general quantum predicates in a companion paper. 

{\vskip 3pt}

\textbf{Adaptation Rules for Quantum Programs}: As is well-know, there are usually two types of proof rules in a classical program logic (e.g. Hoare logic or separation logic), namely \textit{construct rules} and \textit{adaptation rules} \cite{Apt09,Apt19}. A construct rule is defined for reasoning about correctness of the program construct under consideration. The construct rules make syntax-directed program verification possible. On the other hand, an adaptation rule derives a correctness formula $\{\beta^\prime\} S\{\gamma^\prime\}$ of a program $S$ from an already established correctness formula $\{\beta\}S\{\gamma\}$ of the same program. Such a rule enable us to adapt correctness $\{\beta\}S\{\gamma\}$ to a new context. Adaptation rules can often help us to simplify program verification significantly.   

It is naturally desirable to extend these adaptation rules for verification of quantum programs. Indeed, one of them, namely the consequence rule can be straightforwardly generalised to the quantum case (see e.g. \cite{Ying16}, rule (R.Or) in Figure 4.2). However,  
the pre/postconditions $\beta^\prime$ and $\gamma^\prime$ in the conclusions of other adaptation rules are formed from the pre/postconditions $\beta,\gamma$ in their premises using logical connectives and quantifiers. They have not been generalised to the quantum cases in the previous works due to the lack of proper logical tools. Now $\mathcal{QL}$ provides with us the necessary logical tools, and the \underline{third contribution} of this paper is to derive a series of useful adaptation rules for quantum programs with the help of $\mathcal{QL}$. In particular, quantifiers over quantum variables introduced in $\mathcal{QL}$ are essential in deriving the quantum generalisations of the $\exists$-introduction rule and Hoare adaptation rule which plays a crucial role in reasoning about classical recursive programs. 

{\vskip 3pt}
 
\textbf{Organisation of This Paper}: For convenience of the reader, we briefly review Birkhoff-von Neumann (propositional and first-order) quantum logic in Section \ref{Sec-QL-P}. Our new first-order quantum logic $\mathcal{QL}$ with quantum variables is introduced in Section \ref{sec-QL-new}. As a preparation of the subsequent sections, Section \ref{sec-QP} is devoted to a brief review of the syntax and semantics of quantum while-programs. In Section \ref{Sec-QHL}, quantum Hoare logic (QHL) is recasted with $\mathcal{QH}$ defined in Section \ref{sec-QL-new} as its assertion logic. In particular, assertion logic $\mathcal{QL}$ enables us to formulate the relative completeness of QHL in a more formal way than \cite{Ying11,Zhou19}. As applications of a combined power of QHL and $\mathcal{QL}$, in Section \ref{sec-Applications}, we derive a series of adaptation rules for quantum programs, including a quantum generalisation of Hoare's adaptation rule which has played a crucial role in reasoning about procedure calls and recursion in classical programming \cite{Hoare71}, and show how they can be used to help runtime assertions \cite{Liu21}. The paper is concluded in Section \ref{sec-Con} where several open problems are pointed out. 

\section{Birkhoff-von Neumann Quantum Logic}\label{Sec-QL-P}

To set the stage, in this section we first recall some basic ideas of quantum logic (QL) from \cite{BvN36,Dalla04,Kal83}.
  
\subsection{Physical Observables and Propositions}
Let us start from seeing how we can describe a proposition about a quantum system. 
A piece of information about a physical system is called a \textit{state} of the system. In classical physics, a state of a system is usually described by a real vector, say an $n$-dimensional vector $\omega=(x_1,...,x_n)$ with all $x_i$ being real numbers, and the \textit{state space} $\Omega$ of the system is then the $n$-dimensional real vector space. A \textit{proposition} about a classical system asserts that a physical \textit{observable} has a certain value, and thus determines a subset $X\subseteq\Omega$ in which the proposition holds.
Thus, a state $\omega\in\Omega$ satisfies a proposition $X$, written $\omega\models X$, if and only if $\omega\in X$. 

According to the basic postulates of quantum mechanics, however, a \textit{state} of a quantum system is represented by a complex vector $|\psi\rangle$ (in Dirac's notation), and the \textit{state space} of the system is a Hilbert space $\hs$, i.e. a complex vector space equipped with an inner product $\langle\cdot |\cdot\rangle$ (satisfying certain completeness in the infinite-dimensional case). 
Then a \textit{proposition} asserting that a physical \textit{observable} has a certain value is mathematically represented by a closed subspace $X$ of $\hs$. 
A basic difference between a classical proposition and a quantum proposition is that the former can be an arbitrary subset of the state space, whereas the latter must be a closed subspace, that is, those subsets of $\hs$ closed under linear combination (and limit whenever $\hs$ is infinite-dimensional). 
Moreover, different from classical physics, quantum physical laws are essentially statistical. For any $|\psi\rangle\in\hs$ and $X\in\mathcal{P}$, the probability that the system in state $|\psi\rangle$ satisfies proposition $X$ is computed using Born's rule: $\textrm{Prob}(|\psi\rangle\models X)=\langle\psi|P_X|\psi\rangle,$ where $P_X$ is the projection onto closed subspace $X$. In particular, $\textrm{Prob}(|\psi\rangle\models X)=1$ if and only if $|\psi\rangle\in X$.      

\subsection{Operations of Closed Subspaces of a Hilbert Space}
For classical propositions, the logical connectives $\wedge$ (and), $\vee$ (or), $\neg$ (not) can be simply interpreted as set-theoretic operations $\cap,\cup$ and $^c$, respectively:
$$\omega\models X\wedge Y\ {\rm iff}\ \omega\in X\cap Y,\qquad \omega\models X\vee Y\ {\rm iff}\ \omega\in X\cup Y,\qquad \omega\models \neg X\ {\rm iff}\ \omega\in X^c.$$
However, this interpretation of connectives cannot be directly generalised to the quantum case because although $\cap$ preserves closeness under linear combination, $\cup$ and $^c$ do not.  
The operations appropriate for the interpretations of logical connectives $\wedge$ (and), $\vee$ (or), $\neg$ (not) for quantum propositions are defined as follows: for any closed subspaces $X,Y$ of $\hs$, 
\begin{itemize}
\item \textit{Meet}: $X\wedge Y=X\cap Y.$
\item \textit{Join}: $X\vee Y=\overline{\mathrm{span} (X\cup Y)}$, where for any subset $Z\subseteq\hs$, $\overline{\mathrm{span}\ Z}$ is the closed subspace of $\hs$ generated by $Z$; more precisely, $\overline{Z}$ stands for the topological closure of $Z$, and $\mathrm{span}\ Z$ is the smallest subspace of $\hs$ containing $Z$:
\begin{align*}\mathrm{span}\ Z&=\left\{\sum_{i=1}^n\alpha_i|\psi_i\rangle:n\geq 1, \alpha_i\in\mathbb{C}\ {\rm and}\ |\psi_i\rangle\in Z\ {\rm for\ all}\ 1\leq i\leq n \right\}.\end{align*}
\item\textit{Orthocomplement}: $X^\perp =\{|\psi\rangle\in\hs:|\psi\rangle\bot|\varphi\rangle\ {\rm for\ all}\ |\varphi\rangle\in X\},$ where $|\psi\rangle\bot |\varphi\rangle$ means that $|\psi\rangle$ and $|\varphi\rangle$ are orthogonal; that is, their inner product $\langle\psi|\varphi\rangle=0$.
\end{itemize}

For a classical system with state space $\Omega$, we write $2^\Omega$ for the power set of $\Omega$, i.e. the set of all subsets of $\Omega$. It is well known that $(2^\Omega,\cap, \cup,^c)$ is a (complete) Boolean algebra. Therefore, classical (Boolean) logic  is appropriate for reasoning about classical systems. Correspondingly, for a quantum system with state space $\hs$, let $S(\hs)$ stand for the set of all closed subspaces of $\hs$. Then:    

\begin{thm}[Sasaki 1954]\label{thm-Sasaki}\begin{enumerate}\item $(S(\hs),\wedge,\vee,\perp)$ is a complete orthomodular lattice, in which the partial order is set inclusion $\subseteq$, and the least and greatest elements are the $0$-dimensional closed subspace $\mathbf{0}=\{0\}$ and $\hs$, respectively. That is,   
$(\mathcal{S}(\hs),\wedge,\vee)$ is a complete lattice, and the following conditions are satisfied: for any $X,Y, Z\in S(\hs)$, \begin{itemize}
\item \textit{Ortho-modularity}: $X\subseteq Y$ implies $Y=X\vee (X^\perp\wedge Y).$
\item \textit{Contradiction and Excluded Middle Laws}: $X\wedge X^\perp =\mathbf{0}\ {\rm and}\ X\vee X^\perp=\hs.$   
\end{itemize} 
\item $(S(\hs),\wedge,\vee,\perp)$ is a modular lattice, that is, it satisfies the following:\begin{itemize}\item \textit{Modularity}: $X\subseteq Y$ implies $X\vee(Z\wedge Y)=(X\vee Z)\wedge Y$
\end{itemize} if and only if $\hs$ is finite-dimensional. 
\end{enumerate}\end{thm}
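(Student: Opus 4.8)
The plan is to treat the two parts separately, and within Part 1 to reduce everything to the orthogonal decomposition theorem for Hilbert spaces. First I would establish that $(S(\hs),\wedge,\vee)$ is a complete lattice: for any family $\{X_i\}$ of closed subspaces, the intersection $\bigcap_i X_i$ is again a closed subspace and is their infimum, while $\overline{\spans\left(\bigcup_i X_i\right)}$ is their supremum; set inclusion $\subseteq$ is manifestly the induced order, with bottom $\mathbf{0}$ and top $\hs$. Next I would verify that $^\perp$ is an orthocomplementation. The crucial input is the projection theorem: every closed subspace $X$ yields an orthogonal decomposition $\hs=X\oplus X^\perp$. From this, $X^{\perp\perp}=X$ (involution) and order-reversal ($X\subseteq Y$ implies $Y^\perp\subseteq X^\perp$) follow directly, and the Contradiction and Excluded Middle laws $X\wedge X^\perp=\mathbf{0}$, $X\vee X^\perp=\hs$ are immediate, since a vector orthogonal to itself is $0$ and every vector splits across $X$ and $X^\perp$.

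The heart of Part 1 is ortho-modularity. Assuming $X\subseteq Y$, the inclusion $X\vee(X^\perp\wedge Y)\subseteq Y$ is automatic. For the reverse, I would take $|\psi\rangle\in Y$, decompose it via the projection onto $X$ as $|\psi\rangle=|\psi_1\rangle+|\psi_2\rangle$ with $|\psi_1\rangle\in X$ and $|\psi_2\rangle\in X^\perp$, and observe that $|\psi_2\rangle=|\psi\rangle-|\psi_1\rangle\in Y$ because $X\subseteq Y$, whence $|\psi_2\rangle\in X^\perp\wedge Y$ and $|\psi\rangle\in X\vee(X^\perp\wedge Y)$. One subtlety to flag is that since $X$ and $X^\perp\wedge Y$ are orthogonal closed subspaces, their algebraic sum is already closed, so the join coincides with the direct sum and no extra closure is required; this is precisely where orthogonality (rather than mere lattice structure) enters, explaining why ortho-modularity survives where full modularity, as shown below, will not.

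For Part 2, the \textquotedblleft if\textquotedblright\ direction I would handle by the Grassmann dimension formula. In any lattice the modular inequality $X\vee(Z\wedge Y)\subseteq(X\vee Z)\wedge Y$ holds whenever $X\subseteq Y$; to upgrade it to equality in finite dimensions I would compute the dimensions of both sides using $\dim(A+B)=\dim A+\dim B-\dim(A\cap B)$ together with $X\subseteq Y$, find that both equal $\dim X-\dim(X\cap Z)+\dim(Z\cap Y)$, and conclude equality from the already-known inclusion.

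The \textquotedblleft only if\textquotedblright\ direction is the main obstacle, and I expect it to require an explicit counterexample exploiting that in infinite dimensions the sum of two closed subspaces can fail to be closed. Concretely, writing $\hs=K\oplus K$ with $K$ infinite-dimensional and $T:K\to K$ a bounded injective operator with dense but non-closed range (e.g. $Te_n=e_n/n$ on $\ell^2$), I would set $c=\{(x,Tx):x\in K\}$ (the closed graph of $T$), $a=K\oplus\mathbf{0}$, and $b=(K\oplus\mathbf{0})+\mathbb{C}(0,y_0)$ for some $y_0\notin T(K)$. Injectivity of $T$ gives $a\wedge c=b\wedge c=\mathbf{0}$, density of the range gives $a\vee c=\hs\supseteq b$, yet $a\subsetneq b$, so $a\vee(c\wedge b)=a\neq b=(a\vee c)\wedge b$ and modularity fails; this configuration is exactly a pentagon $N_5$, which by Dedekind's criterion obstructs modularity. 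For a general (possibly non-separable) infinite-dimensional $\hs$ the same triple can be placed inside a separable infinite-dimensional closed subspace $\hs_0$, and the relations $a\subsetneq b\subseteq a\vee c$ and $b\wedge c=\mathbf{0}$ persist, giving the identical failure. The delicate points to verify carefully are that the graph of $T$ is closed (boundedness of $T$), that $b$ is closed (a finite-dimensional extension of a closed subspace), and that all joins are computed as topological closures of algebraic sums.
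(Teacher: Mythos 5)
The paper does not prove this theorem: it is stated as classical background (attributed to Sasaki 1954, with the proof residing in the cited literature \cite{BvN36,Dalla04,Kal83}), so there is no in-paper argument to compare yours against; your proposal must stand on its own, and it does. Your Part 1 is the standard argument and is correct: completeness via intersections and closed spans, the orthocomplementation laws from the projection theorem $\hs=X\oplus X^\perp$, and ortho-modularity by decomposing $|\psi\rangle\in Y$ as $|\psi_1\rangle+|\psi_2\rangle$ with $|\psi_1\rangle\in X$, $|\psi_2\rangle\in X^\perp\cap Y$ (your closedness remark about the orthogonal sum is a nice observation but not even needed, since $|\psi\rangle$ lies in the algebraic sum, which is contained in the join). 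In Part 2, the finite-dimensional direction checks out: with $X\subseteq Y$ one gets $\dim\left(X+(Z\cap Y)\right)=\dim X+\dim(Z\cap Y)-\dim(X\cap Z)$ and $\dim\left((X+Z)\cap Y\right)=\dim(X+Z)+\dim Y-\dim(Z+Y)=\dim X-\dim(X\cap Z)+\dim(Z\cap Y)$, so the modular inequality forces equality. Your counterexample for the infinite-dimensional direction is also sound: with $Te_n=e_n/n$, the graph $c$ is closed by boundedness, $a\wedge c=b\wedge c=\mathbf{0}$ by injectivity and $y_0\notin T(K)$, $a\vee c=\overline{K\oplus T(K)}=\hs$ by density, and $b$ is closed as a finite-dimensional extension of a closed subspace; hence $a\vee(c\wedge b)=a\neq b=(a\vee c)\wedge b$, a pentagon $N_5$. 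The reduction of the non-separable case to a separable closed subspace $\hs_0$ works because meets are intersections and joins of subsets of $\hs_0$ computed in $S(\hs)$ agree with those computed in $S(\hs_0)$, so the same five elements violate modularity in $S(\hs)$. This is essentially the textbook route (cf. Kalmbach), and I find no gap.
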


Obviously, modularity implies ortho-modularity. Furthermore, a Boolean algebra  satisfies:\begin{itemize}\item \textit{Distributivity}: $X\wedge (Y\vee Z)=(X\wedge Y)\vee (X\wedge Z)\ {\rm and}\  X\vee (Y\wedge Z)=(X\vee Y)\wedge (X\vee Z)$
\end{itemize} which is stronger than modularity.

\subsection{Propositional Quantum Logic}\label{subsec-pql}

As is well-known, classical logic is Boolean-valued, meaning that its truth values are taken from a Boolean algebra. Motivated by the above Sasaki theorem, quantum logic (QL) is defined as orthomodular lattice-valued logic. We first introduce propositional QL. It adopts a standard propositional language with an alphabet consisting of:\begin{itemize}\item[(i)] a set of propositional variables $P_0,P_1,P_2,...$; and \item[(ii)] connectives $\wedge$ (conjunction) and $\neg$ (negation).\end{itemize} Quantum propositional formulas can be defined in a familiar way. 
The disjunction is defined as a derived connective by $\beta\vee\gamma:=\neg(\neg\beta\wedge\neg\gamma).$ 

The semantics of propositional QL is then defined as follows.
Given an orthomodular lattice $\mathcal{L}=(L,\wedge,\vee,\perp)$. An $\mathcal{L}$-valued interpretation is a valuation function $v:\{P_0,P_1,P_2,...\}$ (propositional variables) $\rightarrow L,$ and it can be extended to all propositional formulas by the following valuation rules:
$$v(\beta\wedge \gamma)=v(\beta)\wedge v(\gamma),\qquad v(\neg \beta)=v(\beta)^\perp.$$ 
It can be verified that $v(\beta\vee \gamma)=v(\beta)\vee v(\gamma).$ Note that symbols $\wedge,\vee,\neg$ in the right-hand sides of these rules denote the operations in $\mathcal{L}$.     
For a set $\Sigma$ of propositional formulas and a proposition formula $\beta$, $\beta$ is called a consequence of $\Sigma$ in $\mathcal{L}$, written $\Sigma\models_\mathcal{L} \beta$ if for any valuation function $v$, and for any $a\in L$: whenever $a\leq v(\gamma)$ for all $\gamma\in\Sigma$, then $a\leq v(\beta).$ When $\Sigma$ is finite, say $\Sigma=\{\gamma_1,...,\gamma_n\}$, then $\Sigma\models_\mathcal{L} \beta$ if and only in for any valuation $v$: $\bigwedge_{i=1}^nv(\gamma_i)\leq v(\beta).$ In particular, $\emptyset\models_\mathcal{L} \beta$ if and only if $v(\beta)=\mathbf{1}$ for all valuation function $v$, where $\mathbf{1}$ is the greatest element of $\mathcal{L}$. In this case, $\beta$ is said to be true in $\mathcal{L}$ and we write $\models_\mathcal{L} \beta$. 
 
An axiomatisation of propositional QL in the Gentzen-style is presented in Figure \ref{fig QL-P}. 
\begin{figure}[h]\centering
\begin{equation*}\begin{split}
&({\rm QL1})\ \ \ \Sigma\cup\{\beta\}\vdash \beta\qquad\qquad\qquad\qquad\qquad\quad\quad\ ({\rm QL2})\ \ \ \frac{\Sigma\vdash \beta\qquad \Sigma^\prime\cup\{\beta\}\vdash \gamma}{\Sigma\cup\Sigma^\prime\vdash \gamma}\\
&({\rm QL3})\ \ \ \Sigma\cup\{\beta\wedge \gamma\}\vdash \beta\quad \Sigma\cup\{\beta\wedge \gamma\}\vdash \gamma\quad\quad\ \ ({\rm QL4})\ \ \ \frac{\Sigma\vdash \beta\qquad\Sigma\vdash \gamma}{\Sigma\vdash \beta\wedge \gamma}\\
&({\rm QL5})\ \ \ \frac{\Sigma\cup\{\beta,\gamma\}\vdash \delta}{\Sigma\cup\{\beta\wedge \gamma\}\vdash \delta}\qquad\qquad\qquad\qquad\quad\ \ \ \ \ ({\rm QL6})\ \ \ \frac{\beta\vdash \gamma\qquad \beta\vdash\neg \gamma}{\neg \beta}\\
&({\rm QL7})\ \ \ \Sigma\cup\{\beta\}\vdash\neg\neg \beta\qquad\qquad\qquad\qquad\qquad\quad\ ({\rm QL8})\ \ \ \Sigma\cup\{\neg\neg \beta\}\vdash \beta\\
&({\rm QL9})\ \ \ \Sigma\cup\{\beta\wedge\neg \beta\}\vdash \gamma\qquad\qquad\qquad\qquad\qquad\ ({\rm QL10})\ \ \ \frac{\beta\vdash \gamma}{\neg \gamma\vdash \neg \beta}\\
&({\rm QL11})\ \ \ \beta\wedge \neg(\beta\wedge\neg (\beta\wedge \gamma))\vdash \gamma
\end{split}\end{equation*}
\caption{Axiomatic System of Propositional QL.}\label{fig QL-P}
\end{figure}

\begin{rem}\label{remark-Sasaki}
All implications that can be reasonably defined in QL are anomalous to a certain extent. A minimal requirement for an operation $\rightarrow$ in an orthomodular lattice $\mathcal{L}$ that can serve as an interpretation of implication is that for all $a,b\in \mathcal{L}$, $a\rightarrow b= 1$ iff $a\leq b$. It was proved that there are only five such operations that can be defined in terms of $\wedge,\vee,\perp$. Among them, only the Sasaki implication:
$$a\rightarrow b=a^\perp \vee (a\wedge b)$$    satisfies the import-export condition: for all $a,b,c\in\mathcal{L}$, $a\wedge b\leq c$ iff $a\leq b\rightarrow c$. In this paper, we always use the Sasaki implication. 
\end{rem}

\subsection{First-Order Quantum Logic with Classical Variables}\label{Sec-QL-FO}

Now let us move on to consider first-order QL. It uses a standard first-order language with the following alphabet: \begin{itemize}\item[(i)] a set of individual variables $x,y,z,...$; \item[(ii)] a set of function symbols $f,g,...$ (including constants $c,d,...$ as $0$-ary functions); \item[(iii)] a set of predicate symbols $P,Q,,...$; \item[(iv)] connectives $\wedge, \neg$; and \item[(v)] universal quantifier $\forall$. \end{itemize}
The terms and first-order logical formulas in QL are also defined in a familiar way. 
The existential quantification can be defined as a derived formula: $(\exists x)\gamma:=\neg (\forall x)\neg \gamma$. 

Let $\mathcal{L}=(L,\wedge,\vee,\perp)$ be a complete orthomodular lattice. Then an $\mathcal{L}$-valued interpretation $\mathbb{I}$ of logic QL consists of:\begin{itemize}\item a nonempty set $D$, called the domain of $\mathbb{I}$; 
\item for each $n$-ary function symbol $f$, it is interpreted as a mapping $f^\mathbb{I}:D^n\rightarrow D$. In particular, a constant $c$ is interpreted as an element $c^\mathbb{I}\in D$; \item for each $n$-ary predicate symbol $P$, it is interpreted as an $\mathcal{L}$-valued relation, i.e., a mapping $P^\mathbb{I}:D^n\rightarrow L$.  
\end{itemize} Given an interpretation $\mathbb{I}$ and a valuation function $\sigma:\{x,y,z,...\}\ ({\rm individual\ variables}) \rightarrow D$. They 
define the semantics of terms exactly in the same way as in classical first order logic. Furthermore, 
they define a truth valuation:
\begin{itemize}\item for an atomic formula $\beta=P(t_1,...,t_n)$, $v(\beta)=P^\mathbb{I}(v(t_1),...,v(t_n))$, where $v(t_i)$ stands for the value of term $t_i$;
\item $v(\beta\wedge\gamma)=v(\beta)\wedge v(\gamma)$ and $v(\neg \beta)=v(\beta)^\perp$;
\item $v((\forall x)\beta)=\bigwedge\{v[d/x](\beta):d\in D\}$, where $v[d/x]$ is the truth valuation defined by the same model $\mathcal{M}$ together with valuation function $\sigma[d/x]$, which coincides with $\sigma$ except that $\sigma[d/x](x)=d$. 
\end{itemize} 
The valuation rule for existential quantifier can be derived as $v((\exists x)\beta)=\bigvee\{v[d/x](\beta):d\in D\}$. 

It should be particularly noted that in QL individual variables $x,y,z,...$ are still classical variables with values taken from an ordinary domain $D$. The only difference between QL and classical first order logic is that the set of truth values in QL is an orthomodular lattice $\mathcal{L}$, which is set to be the lattice $S(\hs)$ of closed subspaces of $\hs$ when QL is applied to specify and reason about a quantum system with state space $\hs$.     

An axiomatisation of first-order QL can be obtained by adding the two rules for universal quantifier presented in Figure \ref{fig QL-FO} to the axiomatic system of propositional QL given in Figure \ref{fig QL-P}.
\begin{figure}[h]\centering
\begin{equation*}\begin{split}
&({\rm QL12})\ \ \ \Sigma\cup\{(\forall x)\beta\}\vdash\beta[t/x]\qquad\qquad\qquad({\rm QL13})\ \ \ \frac{\Sigma\vdash\beta}{\Sigma\vdash (\forall x)\beta}\ (x\ {\rm is\ not\ free\ in}\ \Sigma)
\end{split}\end{equation*}
\caption{Axiomatic System of First-order QL. In (QL12), $\beta[t/x]$ stands for the substitution of variable $x$ by term $t$ in $\beta$.}\label{fig QL-FO}
\end{figure}

It was proved that the axiomatic system of (propositional and first-order) QL given in Figures \ref{fig QL-P} and \ref{fig QL-FO} is complete with respect to orthomular lattice-valued semantics; that is, we have:

\begin{thm}[Goldblatt 1974]\label{thm-QL-complete} $\Sigma\vdash P$ iff $\Sigma\models_\mathcal{L} P$ for every orthomodular lattice $\mathcal{L}.$
\end{thm}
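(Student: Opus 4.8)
The plan is to establish the biconditional by proving its two directions separately: \emph{soundness} ($\Sigma\vdash P$ implies $\Sigma\models_{\mathcal L}P$ for every orthomodular lattice $\mathcal L$) and \emph{completeness} (the converse). For soundness I would induct on the length of derivations, checking that every axiom scheme QL1--QL13 is valid in an arbitrary orthomodular lattice and that each inference rule preserves the consequence relation $\models_{\mathcal L}$. Most of these checks are routine; the instructive ones are that QL7/QL8 encode the involution $a^{\perp\perp}=a$, QL10 the antitonicity of ${}^\perp$, QL9 the contradiction law $a\wedge a^\perp=\mathbf 0$, QL6 the reductio principle, and---crucially---that QL11 encodes orthomodularity. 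Reading $\neg\mapsto{}^\perp$ and $\wedge\mapsto\wedge$, the left-hand side of QL11 evaluates by De Morgan and involution to $a\wedge(a^\perp\vee(a\wedge b))$ with $a=v(\beta)$, $b=v(\gamma)$, and the demand that this be $\le b$ is precisely the orthomodular identity $a\wedge(a^\perp\vee(a\wedge b))=a\wedge b$. For the quantifier rules I would use $v((\forall x)\beta)=\bigwedge_{d\in D}v[d/x](\beta)$, so that QL12 (instantiation is below the meet) and QL13 (generalisation when $x$ is not free in $\Sigma$) are immediate.

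For completeness I would run the Lindenbaum--Tarski construction. Define $\beta\approx\gamma$ iff both $\beta\vdash\gamma$ and $\gamma\vdash\beta$ are derivable, form the quotient $L=\{[\beta]\}$, and set $[\beta]\le[\gamma]$ iff $\beta\vdash\gamma$, with $[\beta]\wedge[\gamma]=[\beta\wedge\gamma]$ and $[\beta]^\perp=[\neg\beta]$. Using QL1--QL11 one shows $\le$ is a well-defined partial order, meet is the greatest lower bound (QL3, QL4), ${}^\perp$ is an orthocomplementation (QL7--QL10 give involution and antitonicity, QL9 and its dual give the contradiction and excluded-middle laws), and QL11 makes $L$ orthomodular via the computation above; hence $(L,\wedge,\vee,{}^\perp)$ is an orthomodular lattice. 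Taking the canonical interpretation with truth values in $L$ and canonical valuation $\beta\mapsto[\beta]$ (each predicate symbol interpreted by $P(\cdots)\mapsto[P(\cdots)]$ over a term domain), a finite $\Sigma=\{\gamma_1,\dots,\gamma_n\}$ with $\Sigma\not\vdash P$ is refuted by the witness $a=\bigwedge_i[\gamma_i]$: one has $a\le v(\gamma_i)$ for every $i$, yet $a\not\le[P]=v(P)$ by QL4, so $\Sigma\models_{\mathcal L}P$ fails for this particular $\mathcal L=L$.

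The hard part will be the first-order quantifier clause. The semantics forces $v((\forall x)\beta)$ to be the infimum $\bigwedge_{d\in D}v[d/x](\beta)$ over the whole domain, but the Lindenbaum--Tarski algebra need not be complete, so this infimum may simply not exist in $L$; and one cannot repair this by passing to the MacNeille completion, since the MacNeille completion of an orthomodular lattice need not stay orthomodular. The remedy I would follow (after Dishkant and Goldblatt) is a Henkin-style term model: expand the language with countably many fresh witness constants and saturate $\Sigma$ so that each $(\forall x)\beta$ is provably equivalent to the family of its instances $\beta[c/x]$, arranging that $[(\forall x)\beta]$ is the genuine greatest lower bound of $\{[\beta[t/x]]:t\text{ closed}\}$ in the term algebra. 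Verifying from QL12/QL13 that this element really is the infimum rather than merely a lower bound, and that the saturation can be carried out while keeping the algebra orthomodular and the canonical valuation coherent with the quantifier semantics, is the delicate core of the argument; the treatment of an infinite premise set $\Sigma$ (replacing the finite meet above by the corresponding compactness argument) is then folded into the same construction.
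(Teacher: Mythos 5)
The paper itself never proves Theorem~\ref{thm-QL-complete}: it is quoted as a known result attributed to Goldblatt 1974, so your attempt can only be measured against the standard arguments in the cited literature. Your soundness direction and your propositional Lindenbaum--Tarski construction are correct and are essentially that standard proof; in particular you isolate the right key computation, namely that the left-hand side of QL11 evaluates to $a\wedge(a^\perp\vee(a\wedge b))$, and that requiring this to lie below $b$ is equivalent, over ortholattices, to orthomodularity --- which is what makes the Lindenbaum quotient orthomodular in the converse direction.

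The first-order half of your argument, however --- which is exactly what this theorem adds beyond propositional completeness --- is a statement of intent rather than a proof, and the difficulty you defer is not merely \emph{delicate}: it is where the construction, as you have set it up, fails relative to the paper's own definitions. The paper's definition of an $\mathcal{L}$-valued first-order interpretation requires $\mathcal{L}$ to be a \emph{complete} orthomodular lattice, because $v((\forall x)\beta)$ is an infinite meet over the whole domain. Your Henkin-saturated term algebra, even granting that every class $[(\forall x)\beta]$ becomes the genuine infimum of its instances, is still not a complete lattice, so the countermodel you produce is not an admissible interpretation in the paper's sense; and the generic repair, completion, is blocked by the very fact you yourself cite (MacNeille completion need not preserve orthomodularity). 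So you must either prove that the term algebra embeds into a complete orthomodular lattice preserving the relevant meets --- which is the open heart of the matter, not a routine verification --- or re-read the semantics so that incomplete lattices with \emph{safe} valuations (those for which the needed infima exist) are admitted, which changes the statement being proved. Separately, your one-line treatment of infinite $\Sigma$ via \emph{the corresponding compactness argument} is not available as stated: the refuting element $a$ must be a common lower bound of infinitely many classes $[\gamma]$, which need not exist in the term algebra, and the restricted structural rules of QL (QL6 and QL10 take only single-formula sequents as premises, so contraposition is not available relative to a background theory) block the usual move of quotienting by the theory generated by $\Sigma$. Both points need genuine arguments before this can be called a proof.
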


\section{First-Order Quantum Logic with Quantum Variables}\label{sec-QL-new}

As pointed out in the Introduction, QL with classical variables presented in the previous section is often inconvenient when used in quantum computation and information. In this section, we turn to define our new logic --- a first-order quantum logic $\mathcal{QL}$ with quantum variables.  

\subsection{Syntax of $\mathcal{QL}$}

The syntax of $\mathcal{QL}$ is similar to that of QL except that classical individual variables in QL are replaced by quantum variables in $\mathcal{QL}$. Accordingly, 
function symbols in QL are replaced by symbols that denote quantum operations over (the states of) quantum variables. The predicate symbols in both QL and $\mathcal{QL}$ are interpreted as an orthomodular lattice-valued functions, but in QL they are over a domain of classical individuals and in $\mathcal{QL}$ they are over a domain of quantum states.    

\subsubsection{Alphabet} Formally, the alphabet of $\mathcal{QL}$ consists of:\begin{enumerate}
\item A set $\mathit{Var}$ of quantum variables $q,q_1,q_2,...$;
\item A set of quantum operation symbols $\mathcal{E},\mathcal{E}_1,\mathcal{E}_2,...$, with a subset of unitary symbols $\mathcal{U},\mathcal{U}_1,\mathcal{U}_2,...$ and their inverses $\mathcal{U}^{-1},\mathcal{U}^{-1}_1,\mathcal{U}^{-1}_2,...$; 
\item A set of quantum predicate symbols $P,Q,...$;
\item Connectives $\neg,\wedge$;
\item Universal quantifier $\forall$. 
\end{enumerate}
To each quantum variable $q\in\mathit{Var}$, a nonnegative integer $d$ or $d=\infty$ is assigned, called the dimension of $q$. To each quantum operation symbol $\mathcal{E}$, a nonnegative integer $n$ and an $n$-tuple $\overline{d}=(d_1,..., d_n)$ of nonnegative integers or $\infty$ are assigned, called the arity and signature of $P$, respectively. To each quantum predicate symbol $P$, an arity and a signature are assigned too.   

\subsubsection{Terms} The notions of term in classical first-order logic can be straightforwardly generalised into $\mathcal{QL}$, with some formation rules specifically for modelling quantum operations.  
\begin{defn}\label{def-terms}Quantum terms $\tau$ and their variables $\mathit{var}(\tau)$ are inductively defined as follows:\begin{enumerate}
\item (Basic terms) If $\mathcal{E}$ is an $n$-ary quantum operation symbol with signature $(d_1,...,d_n)$, and $\overline{q}=q_1,...,q_n$ where $q_i$ is a $d_i$-dimensional quantum variable for each $1\leq i\leq n$, then $\tau=\mathcal{E}(\overline{q})$ is a quantum term and $\mathit{var}(\tau)=\overline{q}$;   
\item (Sequential composition) If $\tau_1,\tau_2$ are quantum terms, so is $\tau=\tau_1\tau_2$ and $\mathit{var}(\tau)=\mathit{var}(\tau_1)\cup\mathit{var}(\tau_2)$; 
\item (Tensor product) If $\tau_1,\tau_2$ are quantum terms and $\mathit{var}(\tau_1)\cap\mathit{var}(\tau_2)=\emptyset$, then $\tau=\tau_1\otimes\tau_2$ is a quantum term and $\mathit{var}(\tau)=\mathit{var}(\tau_1)\cup\mathit{var}(\tau_2)$;
\item (Probabilistic combination) If $\{\tau_i\}$ is a family of quantum terms with the same variables $\mathit{var}(\tau_i)=V$, and $\{p_i\}$ is a sub-probability distribution; that is, $p_i> 0$ for all $i$ and $\sum_ip_i\leq 1$, then $\tau=\sum_ip_i\tau_i$ is a quantum term, and $\mathit{var}(\tau)=V$.   
\end{enumerate}
In particular, if $\tau$ is generated only by clauses (1) - (3) and all quantum operation symbols in $\tau$ are unitary symbols, then $\tau$ is called a unitary term, and its inverse $\tau^{-1}$ is defined as follows:\begin{enumerate}
\item[(i)] If $\tau=\mathcal{U}(\overline{q})$, then $\tau^{-1}=\mathcal{U}^{-1}(\overline{q})$;
\item[(ii)] If $\tau=\tau_1\otimes\tau_2$, then $\tau^{-1}=\tau_1^{-1}\otimes\tau_2^{-1}$;
\item[(iii)] If $\tau=\tau_1\tau_2$, then $\tau^{-1}=\tau_2^{-1}\tau_1^{-1}$. 
\end{enumerate}
\end{defn} 

Clause (1) in the above definition defines the basic quantum operations. Clauses (2) and (3) are introduced for describing the sequential composition of two quantum operations and a separable operation on a composed system, respectively. 
The term $\tau_1\otimes\tau_2$ can also be understood as the parallel composition of $\tau_1$ and $\tau_2$. 
Quantum terms defined by clause (4) are introduced for modelling a probabilistic combination of quantum states, in particular for merging the outcomes from different branches of a computation; i.e. a mixed state that is formed as an ensemble of output quantum states from different branches of the computation. It is interesting to note that tensor product $\tau_1\otimes\tau_2$ is equivalent to sequential composition $\tau_1\tau_2$ because it is required in its definition that $\mathit{var}(\tau_1)\cap \mathit{var}(\tau_2)=\emptyset.$ However, a probabilistic combination $\sum_ip_i(\tau_{1i}\otimes\tau_{2i})$ of multiple tensor products cannot always be expressed by sequential composition.   

\begin{exam}\label{noisy-term}\label{example-noisy}The term $\tau$ defined by equation (\ref{term-exam}) in Example \ref{example-1} is a unitary term that expresses the quantum circuit in Figure \ref{combinational-quantum-circuit}. Obviously, all (combinational) quantum circuits, including noisy quantum circuits, can be written as quantum terms. For example, 
if a bit-flip noise $\E_{bf}$ occurs immediately after the gate $Z$ and a phase-flip noise $\E_{pf}$ occurs on qubit $q_2$ after the CNOT gate n Figure \ref{combinational-quantum-circuit}, then the noisy circuit can be  
 written as the following term:
\begin{equation}\label{term-exam2}\begin{split}\tau^\prime&=Z(q_1)\E_{bf}(q_1)H(q_2)C(q_1,q_2)\E_{pf}(q_2)Y(q_1)H(q_2)\\
&=[Z(q_1)\E_{bf}(q_1)\otimes H(q_2)]C(q_1,q_2)[Y(q_1)\otimes \E_{pf}(q_2)H(q_2)].\end{split}\end{equation}\end{exam}

\subsubsection{Logical Formulas}\label{sec-logic-formula}

The logical formulas in $\mathcal{QL}$ are also straightforward generalisation of the standard first-order logical formulas excepted those given by clause (4) in the following definition. 

\begin{defn}\label{def-logic-formula}The formulas $\beta$ of logic $\mathcal{QL}$ and their free variables $\mathit{free}(\beta)$ are inductively defined as follows:
\begin{enumerate}
\item If $P$ is an $n$-ary quantum predicate symbol with signature $(d_1,...,d_n)$, and $\tau$ a term with $\mathit{var}(\tau)=\overline{q}=q_1,...,q_n$, where $q_i$ is a $d_i$-dimensional quantum variable for each $1\leq i\leq n$, then $\beta=P(\tau)$ is a formula and $\mathit{free}(\beta)=\mathit{var}(\tau)$; 
\item If $\beta^\prime$ is a formula, so is $\beta=\neg\beta^\prime$ and $\mathit{free}(\beta)=\mathit{free}(\beta^\prime)$; \item If $\beta_1,\beta_2$ are formulas, so is $\beta=\beta_1\wedge\beta_2$ and $\mathit{free}(\beta)=\mathit{free}(\beta_1)\cup\mathit{free}(\beta_2)$;
\item If $\beta^\prime$ is a formula and $\tau$ a term, then $\beta=\tau^\ast(\beta^\prime)$ is a formula and $\mathit{free}(\beta)=\mathit{var}(\tau)\cup\mathit{free}(\beta^\prime)$; 
\item If $\beta^\prime$ is a formula and $\overline{q}$ is a sequence of quantum variables, then $\beta=(\forall \overline{q})\beta^\prime$ is a formula and $\mathit{free}(\beta)=\mathit{free}(\beta^\prime)\setminus\overline{q}$;
\end{enumerate} 
\end{defn}

If $\mathcal{I}$ is a symbol for the identity operation, then $P(\mathcal{I}(q_1)...\mathcal{I}(q_n))$ is a formula, often written as $P(q_1,...,q_n)$ for simplicity. The existential quantification can be defined as a derived formula:
$(\exists\overline{q})\beta=\neg (\forall\overline{q})\neg\beta.$

\begin{exam}\label{exam-wff} We use the quantum variables and quantum operation symbols in Example \ref{example-noisy}. Moreover, let $P_0,P_1$ be two quantum predicate symbols for a single qubit, and $P_e$ a quantum predicate symbol for two qubits. The the following are two logical formulas in $\mathcal{QL}$:\begin{enumerate}\item $\beta_1 = \neg P_0(Z(q_1)\mathcal{E}_\mathit{bf}(q_1))\wedge P_2(H(q_2))$;
\item $\beta_2 = P_e(\tau^\prime)\wedge (\forall q_2)P_2(q_2)$, where $\tau^\prime$ is the quantum term given in equation (\ref{term-exam2}).
\end{enumerate} Intuitively, formula $\beta_1$ expresses that after the Pauli gate and bit-flip noise, the state of qubit $q_1$ is in the subspace denoted by $P_0$, and after the Hadamard gate, the state of $q_2$ is in the subspace denoted by $P_2$; formula $\beta_2$ says that the output of the noisy circuit (\ref{term-exam2}) is in the subspace denoted by $P_e$, and after any allowed operation, the state of $q_2$ is still in the subspace denoted by $P_2$.   
\end{exam}

Clause (4) in the above definition shows a fundamental difference between classical first-order logic and our logic $\mathcal{QL}$ and deserve a careful explanation. 
A formula of the form $\beta=\tau^\ast(\beta^\prime)$ is called a \textit{term-adjoint formula}. Essentially, whenever $\tau$ contains a single quantum variable $q$, i.e. $\mathit{var}(\tau)=q$, it is the quantum version of substitution $\beta^\prime[t/x]$ of variable $x$ in logical formula $\beta^\prime$ by term $t$. In classical logic, $\beta^\prime[t/x]$ is obtained by substituting all free occurrences of $x$ in $\beta^\prime$ with $t$. However, in the quantum case, substitution cannot be defined in such a way due to the so-called Schr\"{o}dinger-Heisenberg duality. Whence an interpretation is given, a quantum term $\tau$ denotes a quantum state, which should be considered in the Schr\"{o}dinger picture. On the other hand, a $\mathcal{QL}$ formula $\beta^\prime$ denotes a closed subspace of the state Hilbert space (equivalently, a projection operator as a special form of observable) and thus should be considered in the Heisenberg picture. Consequently, when applying term $\tau$ to modify formula $\beta^\prime$, we must use the dual $\tau^\ast$ of $\tau$ rather than $\tau$ itself. If $\tau$ contains more than one quantum variable, say $\mathit{var}(\tau)=q_1...q_n$, then $\tau^\ast(\beta^\prime)$ can be understood as a quantum analogue of simultaneous substitution $\beta^\prime[t_1/x_1,...,t_n.x_n]$, but we must keep in mind that $\tau$ may denote an entangled state of $q_1,...,q_n$, and thus $q_1,...,q_n$ cannot be separately substituted. This point will be seen more clearly from the semantics of $\beta=\tau^\ast(\beta^\prime)$ below. As we will see in Subsection \ref{sec-QHL-axioms}, term-adjoint formulas are needed in defining the proof rules for some basic quantum programs. 

\subsection{Semantics of $\mathcal{QL}$}\label{subsec-ql-semantics}

In this subsection, we define the semantics of $\mathcal{QL}$ formulas, from which we will see various differences between $\mathcal{QL}$ and QL. 

\subsubsection{Interpretations} 

First of all, each individual variable $q$ in $\mathcal{QL}$ is a quantum variable, and its values are quantum states in its state space $\hs_q$. For a $d$-dimensional quantum variable $q$, if $d<\infty$ then its state space is (isomorphic to) the Hilbert space with orthonormal basis $\{|0\rangle,...,|d-1\rangle\}$: $$\hs_q=\left\{\sum_{i=0}^{d-1}c_i|i\rangle:c_i\in\mathbb{C}\ (0\leq i<d)\right\}.$$ 
In this paper, we only consider separable Hilbert spaces. Thus, if $d=\infty$ then we can assume: \begin{equation}\label{infinite-Hilbert}\hs_q=\left\{\sum_{i=-\infty}^{\infty}c_i|i\rangle:c_i\in\mathbb{C}\ (0\leq i<d)\ {\rm with}\ \sum_{i=-\infty}^\infty|c_i|^2<\infty\right\}\end{equation} with orthonormal basis $\{|i\rangle:i\in\mathbb{Z}\ {\rm (intergers)}\}$. For any subset of quantum variables $V\subseteq\mathit{Var}$, we write $\hs_V=\bigotimes_{q\in V}\hs_q$, and the identity quantum operation on $\hs_V$ is denoted $\mathcal{I}_V$. In particular, $\hs_\mathit{Var}$ is the state space of all quantum variables assumed in our logic. 
Secondly, we recall from \cite{NC00} that a quantum operation on a Hilbert space $\hs$ is defined as a completely positive and trace-non-increasing super-operator, i.e. a linear map from operators on $\hs$ to themselves.       
We assume a family $\mathcal{O}=\left\{\mathcal{O}_{\overline{d}}\right\}$, where for each signature $\overline{d}=(d_1,...,d_n)$, $\mathcal{O}_{\overline{d}}$ is a set of quantum operations on the $\prod_{i=1}^nd_i$-dimensional Hilbert space, called \textit{allowed operations}. Thirdly, recall from the last section that the first-order QL in Section \ref{Sec-QL-FO} is interpreted in an arbitrary orthomodular lattice $\mathcal{L}$. However, our logic $\mathcal{QL}$ is only interpreted in a special class of $\mathcal{L}=S(\hs)$ (the orthomodular lattice of closed  subspaces of $\hs$) for certain Hilbert spaces $\hs$. Formally, we have: 
\begin{defn}An interpretation $\mathbb{I}$ of logic $\mathcal{QL}$ is defined as follows:\begin{enumerate}\item To each $d$-dimensional quantum variable $q\in\mathit{Var}$, a $d$-dimensional Hilbert space $\hs_q$ is associated, called the state space of $q$; 
\item Each $n$-ary quantum operation symbol $\mathcal{E}$ with signature $(d_1,...,d_n)$ is interpreted as an allowed quantum operation $\mathcal{E}^\mathbb{I}\in\mathcal{O}_{\overline{d}}$ on the $\prod_{i=1}^nd_i$-dimensional Hilbert space. It is required that for each unitary symbol $\mathcal{U}$, $(\mathcal{U}^{-1})^\mathbb{I}=(\mathcal{U}^\mathbb{I})^{-1}$; 
\item Each $n$-ary quantum predicate symbol $\mathcal{E}$ with signature $(d_1,...,d_n)$ is interpreted as a closed subspace $P^\mathbb{I}$ of the $\prod_{i=1}^nd_i$-dimensional Hilbert space.
\end{enumerate}\end{defn}

The difference between an interpretation of $\mathcal{QL}$ and that of QL defined in Section \ref{Sec-QL-FO} is obvious. Essentially, all other differences between $\mathcal{QL}$ and QL originate from it. Furthermore, the (meta-)logical properties of $\mathcal{QL}$ heavily depends on the allowed quantum operations $\mathcal{O}=\left\{\mathcal{O}_{\overline{d}}\right\}$.  

\subsubsection{Semantics of Terms} 

Now let us define the semantics of quantum terms. As mentioned in the last section, a quantum term can be considered in two different ways. In the Schr\"{o}dinger picture, it is interpreted as a mapping from quantum states to quantum states. Recall from \cite{NC00} that a (mixed) state of a quantum system with Hilbert space $\hs$ as its state space is described as a density operator on $\hs$.  
An operator $\rho$ on $\hs$ is called a partial density operator if it is positive and $\tr(\rho)\leq 1$. In particular, if $\tr(\rho)= 1$, then $\rho$ is called a density operator.   
We use $\mathcal{D}(\hs)$ to denote the set of partial density operators on $\hs$. 

\begin{defn}\label{def-sem-terms} Given an interpretation $\mathbb{I}$. The Schr\"{o}dinger semantics of a term $\tau$ is a mapping $\llbracket \tau\rrbracket_\mathbb{I}:\mathcal{D}(\mathcal{H}_{\mathit{Var}})\rightarrow \mathcal{D}(\mathcal{H}_{\mathit{Var}}).$ For each $\rho\in \mathcal{D}(\mathcal{H}_{\mathit{Var}})$, $\llbracket \tau\rrbracket_\mathbb{I}(\rho)$ is defined as follows: \begin{enumerate}\item If $\tau=\mathcal{E}(\overline{q})$, then 
$\llbracket \tau\rrbracket_\mathbb{I}(\rho)=\left(\mathcal{E}^\mathbb{I}\otimes \mathcal{I}_{\mathit{Var}\setminus\overline{q}}\right)(\rho);$ 
\item If $\tau=\tau_1\tau_2$, then $\llbracket \tau\rrbracket_\mathbb{I}(\rho)=\llbracket \tau_2\rrbracket_\mathbb{I}\left(\llbracket \tau_1\rrbracket_\mathbb{I}(\rho)\right);$
\item If $\tau=\tau_1\otimes\tau_2$, then $\llbracket \tau\rrbracket_\mathbb{I}(\rho)=\llbracket \tau_2\rrbracket_\mathbb{I}\left(\llbracket \tau_1\rrbracket_\mathbb{I}(\rho)\right)=\llbracket \tau_1\rrbracket_\mathbb{I}\left(\llbracket \tau_2\rrbracket_\mathbb{I}(\rho)\right);$
\item If $\tau=\sum_ip_i\tau_i$, then $\llbracket\tau\rrbracket_\mathbb{I}(\rho)=\sum_ip_i\llbracket\tau_i\rrbracket_\mathbb{I}(\rho).$
\end{enumerate}\end{defn}

It is easy to show that $\llbracket \tau\rrbracket_\mathbb{I}$ is a quantum operation (i.e. completely positive super-operator that does not increase trace) on $\hs_{\mathit{Var}}$. 

In the Heisenberg picture, however, a quantum term should be interpreted as a mapping from observables to observables. In this paper, we focus on so-called sharp quantum logic with closed subspaces (equivalently, projection operators) as logical propositions, and thus the Heisenberg interpretation of a term is defined as a mapping from subspaces to themselves. Let $\mathcal{E}$ be a quantum operation on Hilbert space $\hs$ and $X\in\mathcal{S}(\hs)$. Then the image of $X$ under $\E$ is defined as  
\begin{equation}\label{definition-image}\E(X)=\bigvee_{|\psi\rangle\in X}\supp[\E(|\psi\rangle\langle\psi)]\end{equation} where $\supp(\rho)$ denotes the support of $\rho$, i.e. the subspace spanned by the eigenvectors of $\rho$ corresponding to nonzero eigenvalues.

\begin{defn} Given an interpretation $\mathbb{I}$. The Heisenberg semantics of a term $\tau$ is a mapping $\llbracket \tau\rrbracket_\mathbb{I}^\ast:S(\mathcal{H}_{\mathit{Var}})\rightarrow S(\mathcal{H}_{\mathit{Var}})$. For each $X\in S(\mathcal{H}_{\mathit{Var}})$, $\llbracket \tau\rrbracket_\mathbb{I}^\ast(X)$ is defined as follows: \begin{enumerate}\item If $\tau=\mathcal{E}(\overline{q})$, then 
$\llbracket \tau\rrbracket_\mathbb{I}^\ast(X)=\left((\mathcal{E}^\mathbb{I})^\ast\otimes \mathcal{I}_{\mathit{Var}\setminus\overline{q}}\right)(X);$ 
\item If $\tau=\tau_1\tau_2$, then $\llbracket \tau\rrbracket_\mathbb{I}^\ast(X)=\llbracket \tau_1\rrbracket_\mathbb{I}^\ast\left(\llbracket \tau_2\rrbracket_\mathbb{I}^\ast(X)\right);$
\item If $\tau=\tau_1\otimes\tau_2$, then $\llbracket \tau\rrbracket_\mathbb{I}^\ast(X)=\llbracket \tau_1\rrbracket_\mathbb{I}^\ast\left(\llbracket \tau_2\rrbracket_\mathbb{I}^\ast(X)\right)=\llbracket \tau_2\rrbracket_\mathbb{I}^\ast\left(\llbracket \tau_1\rrbracket_\mathbb{I}^\ast(X)\right);$
\item If $\tau=\sum_ip_i\tau_i$, then $\llbracket\tau\rrbracket_\mathbb{I}^\ast(X)=\bigvee_i \llbracket\tau_i\rrbracket_\mathbb{I}^\ast(X).$
\end{enumerate}\end{defn}

It should be noted that as usual for simplicity of presentation, both the Schr\"{o}dinger and Heisenberg semantics of a term $\tau$ are defined on the state space $\hs_\mathit{Var}$ of all variables $\mathit{Var}$. But we can show that only the variables $\mathit{var}(\tau)$ appearing in $\tau$ are essential for them. To this end, let us recall that for any density operator $\rho\in\mathcal{D}(\hs_1\otimes\hs_2)$, its restriction on $\hs_1$ is $\rho\downarrow \hs_1=\tr_{\hs_2}(\rho)$, where partial trace $\tr_{\hs_2}$ over $\hs_2$ is defined by $\tr_{\hs_2}(|\varphi_1\rangle\langle\psi_1|\otimes |\varphi_2\rangle\langle\psi_2|)=\langle\psi_2|\varphi_2\rangle\cdot |\varphi_1\rangle\langle\psi_1|$ for any $|\varphi_i\rangle, |\psi_i\rangle\in\hs_i$ $(i=1,2),$ together with linearity. In particular, if $\rho\in\mathcal{D}(\hs_V)$ for some $V\subseteq\mathit{Var}$ and $V^\prime\subseteq V$, we simply write $\rho\downarrow V^\prime$ for $\rho\downarrow\hs_{V^\prime}.$ On the other hand, for any subspace $X\in\mathcal{S}(\hs_1\otimes\hs_2)$, its restriction on $\hs_1$ is defined as $X\downarrow \hs_1=\supp[\tr_{\hs_2}(P_X)]$, 
where $P_X$ is the projection operator onto $X$. In particular, if $X\in S(\hs_V)$ for some $V\subseteq\mathit{Var}$ and $V^\prime\subseteq V$, we simply write $X\downarrow V^\prime$ for $X\downarrow\hs_{V^\prime}.$ Then we have: 

\begin{lem}[Coincidence]\label{term-coincide} Let $\mathit{var}(\tau)\subseteq V\subseteq\mathit{Var}$. Then: \begin{enumerate}\item $\rho\downarrow V=\rho^\prime\downarrow V$ implies $\llbracket \tau\rrbracket_\mathbb{I}(\rho)\downarrow V=\llbracket \tau\rrbracket_\mathbb{I}(\rho^\prime)\downarrow V;$
\item $X\downarrow V=X^\prime\downarrow V$ implies $\llbracket \tau\rrbracket_\mathbb{I}^\ast(X)\downarrow V=\llbracket \tau\rrbracket_\mathbb{I}^\ast(X^\prime)\downarrow V.$ 
\end{enumerate}
\end{lem}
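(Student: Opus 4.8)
The plan is to prove both (1) and (2) in parallel by structural induction on the term $\tau$, following the four clauses of Definition \ref{def-terms}, since the two statements share the same skeleton. Throughout I write $W=\mathit{Var}\setminus V$, so that $\hs_{\mathit{Var}}=\hs_V\otimes\hs_W$; the hypothesis $\mathit{var}(\tau)\subseteq V$ ensures $\mathit{var}(\tau_i)\subseteq V$ for every subterm $\tau_i$, so the induction hypothesis always applies with the \emph{same} $V$. The single fact that drives everything is that partial trace commutes with a super-operator acting on the retained subsystem: for any super-operator $\mathcal{F}$ on $\hs_V$ and any operator $A$ on $\hs_{\mathit{Var}}$,
$$\tr_W\!\left[(\mathcal{F}\otimes\mathcal{I}_W)(A)\right]=\mathcal{F}\!\left(\tr_W(A)\right),$$
which follows from a Kraus decomposition of $\mathcal{F}$ together with $\tr_W[(F\otimes I)A(F^\dagger\otimes I)]=F\,\tr_W(A)\,F^\dagger$.

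For part (1), the base case $\tau=\E(\overline{q})$ with $\overline{q}\subseteq V$ is settled by putting $\mathcal{F}=\mathcal{E}^\mathbb{I}\otimes\mathcal{I}_{V\setminus\overline{q}}$, so that $\llbracket\tau\rrbracket_\mathbb{I}=\mathcal{F}\otimes\mathcal{I}_W$ as super-operators and the displayed identity gives $\llbracket\tau\rrbracket_\mathbb{I}(\rho)\downarrow V=\mathcal{F}(\rho\downarrow V)$, which depends only on $\rho\downarrow V$. The sequential and tensor cases are handled by chaining the induction hypothesis (apply it to the inner term to obtain agreement of the intermediate states on $V$, then to the outer term), and the probabilistic case by linearity of $\tr_W$.

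For part (2), the only genuinely new work is the base case, for which I will use two standard facts about supports, both obtained from $\supp(E\rho E^\dagger)=E(\supp\rho)$ for a single Kraus operator $E$: the image of a subspace is the support of the image of its projection, $\E(X)=\supp[\E(P_X)]$, and $\supp[\E(\rho)]=\E(\supp\rho)$ (valid for any completely positive $\E$, in particular for the non-trace-decreasing duals $(\mathcal{E}^\mathbb{I})^\ast$). Writing $\mathcal{F}=(\mathcal{E}^\mathbb{I})^\ast\otimes\mathcal{I}_{V\setminus\overline{q}}$ and $\sigma=(\mathcal{F}\otimes\mathcal{I}_W)(P_X)$, so that $\llbracket\tau\rrbracket_\mathbb{I}^\ast(X)=\supp(\sigma)$, one computes
$$\llbracket\tau\rrbracket_\mathbb{I}^\ast(X)\downarrow V=\supp\!\left[\tr_W\!\left(P_{\supp\sigma}\right)\right]=\supp\!\left[\tr_W(\sigma)\right]=\supp\!\left[\mathcal{F}(\tr_W(P_X))\right]=\mathcal{F}\!\left(X\downarrow V\right),$$
using in turn the definition of $\downarrow V$, the fact that $\supp[\tr_W(\cdot)]$ depends only on the support of its argument (so $P_{\supp\sigma}$ may be replaced by $\sigma$), the commutation identity above, and $\supp[\E(\rho)]=\E(\supp\rho)$ together with $X\downarrow V=\supp[\tr_W(P_X)]$. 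The outcome again depends only on $X\downarrow V$. The sequential and tensor inductive steps mirror part (1); the extra ingredient is the probabilistic clause, where $\llbracket\tau\rrbracket_\mathbb{I}^\ast(X)=\bigvee_i\llbracket\tau_i\rrbracket_\mathbb{I}^\ast(X)$ requires that restriction commute with joins, $(\bigvee_iY_i)\downarrow V=\bigvee_i(Y_i\downarrow V)$; this follows from the characterisation of $Y\downarrow V$ as the closed span of the partial inner products $\langle\eta|\xi\rangle$ with $|\xi\rangle\in Y$ and $|\eta\rangle\in\hs_W$, a map that is monotone and preserves spans.

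I expect the main obstacle to be precisely the base case of part (2), where one must reconcile the subspace/projection picture with the density-operator picture. The load-bearing step is the replacement of $P_{\supp\sigma}$ by $\sigma$ inside $\supp[\tr_W(\cdot)]$, justified by the observation that $\supp[\tr_W(\rho)]$ is a function of $\supp\rho$ alone; once this bridge is crossed, the Heisenberg base case collapses onto the very same commutation computation as the Schrödinger one. In infinite dimension one should additionally verify that all spans and supports are taken as closed subspaces and that the partial-inner-product description of $\downarrow V$ survives completion, but since we work only with separable Hilbert spaces these reduce to routine continuity arguments.
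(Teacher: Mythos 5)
Your proposal is correct and takes the approach one would expect (and that the paper, which states this lemma without proof in the main text, takes in essence): structural induction over the four formation clauses of quantum terms, with the commutation of partial trace with super-operators acting only on the retained subsystem, $\tr_W\left[(\mathcal{F}\otimes\mathcal{I}_W)(A)\right]=\mathcal{F}\left(\tr_W(A)\right)$, as the driving fact in both base cases. The two support-theoretic bridges you flag for part (2) --- that $\supp[\tr_W(\cdot)]$ depends only on the support of its argument, and that $(\cdot)\downarrow V$ preserves joins --- are both true and adequately justified by your partial-inner-product characterisation of $Y\downarrow V$, so the Heisenberg case does indeed reduce to the same commutation computation as the Schr\"{o}dinger one.
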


We already mentioned that certain duality exists between the Schr\"{o}dinger and Heisenberg semantics of quantum terms. It is precisely described in the following:   

\begin{lem}[Schr\"{o}dinger-Heisenberg Duality]\label{lem-duality}
	For any term $\tau$, density operator $\rho$, and closed subspace $X$, we have: $\sem{\tau}(\rho)\in X\Leftrightarrow\rho\in \left(\semp{\tau}(X^\perp)\right)^\perp.$ 
	In particular, if $\tau$ is unitary, or the allowed quantum operations in $\mathcal{O}_{\overline{d}}$ for the signature $\overline{d}$ of all operation symbols appearing in $\tau$ are unitary, then
	$\sem{\tau}(\rho)\in X\Leftrightarrow\rho\in \semp{\tau}(X).$
\end{lem}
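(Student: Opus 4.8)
The plan is to prove the main biconditional by induction on the structure of the term $\tau$, after isolating the single-operation case as the essential lemma. Two auxiliary facts will be used repeatedly. First, for a positive operator $A$ and a closed subspace $X$, one has $A\in X$ (i.e. $\supp(A)\subseteq X$) if and only if $\tr(P_{X^\perp}A)=0$, where $P_{X^\perp}$ is the projection onto $X^\perp$; this is immediate from the positivity of $A$ and $P_{X^\perp}$, since $\tr(P_{X^\perp}A)=0$ forces $\supp(A)\perp X^\perp$. Second, I will use the image-support identity $\E(X)=\supp[\E(P_X)]$ for any quantum operation $\E$ and subspace $X$, which follows from equation (\ref{definition-image}) together with monotonicity of $\E$ (from $|\psi\rangle\langle\psi|\leq P_X$ we get $\supp[\E(|\psi\rangle\langle\psi|)]\subseteq\supp[\E(P_X)]$) and the fact that the support of a sum of positive operators is the join of their supports.

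For the base case $\tau=\E(\overline{q})$, write $\F=\E^\mathbb{I}\otimes\mathcal{I}_{\mathit{Var}\setminus\overline{q}}$, so that $\sem{\tau}(\rho)=\F(\rho)$, while the Heisenberg clause applies the image operation of $(\E^\mathbb{I})^\ast\otimes\mathcal{I}_{\mathit{Var}\setminus\overline{q}}=\F^\ast$, whence $\semp{\tau}(X)=\supp[\F^\ast(P_X)]$ by the image-support identity, with $\F^\ast$ the Hilbert-Schmidt dual of $\F$. Then $\sem{\tau}(\rho)\in X$ iff $\tr(P_{X^\perp}\F(\rho))=0$; by the adjoint relation $\tr(P_{X^\perp}\F(\rho))=\tr(\F^\ast(P_{X^\perp})\rho)$ this holds iff $\tr(\F^\ast(P_{X^\perp})\rho)=0$, which, as $\F^\ast(P_{X^\perp})$ and $\rho$ are both positive, is equivalent to $\supp(\rho)\subseteq(\supp[\F^\ast(P_{X^\perp})])^\perp=(\semp{\tau}(X^\perp))^\perp$, i.e. $\rho\in(\semp{\tau}(X^\perp))^\perp$.

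The inductive cases $\tau=\tau_1\tau_2$ and $\tau=\tau_1\otimes\tau_2$ are handled uniformly, since $\sem{\tau}(\rho)=\sem{\tau_2}(\sem{\tau_1}(\rho))$ while the Heisenberg semantics composes contravariantly as $\semp{\tau}(X)=\semp{\tau_1}(\semp{\tau_2}(X))$. Applying the induction hypothesis first to $\tau_2$ at the state $\sem{\tau_1}(\rho)$ and then to $\tau_1$ at the subspace $(\semp{\tau_2}(X^\perp))^\perp$, and using that $(\cdot)^{\perp\perp}$ is the identity on closed subspaces, collapses the two steps to $\rho\in(\semp{\tau_1}(\semp{\tau_2}(X^\perp)))^\perp=(\semp{\tau}(X^\perp))^\perp$; the contravariance of the Heisenberg composition is exactly what makes the two orthocomplements cancel. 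For $\tau=\sum_i p_i\tau_i$ I will use that, since each $\sem{\tau_i}(\rho)$ is positive and each $p_i>0$, one has $\sem{\tau}(\rho)\in X$ iff $\sem{\tau_i}(\rho)\in X$ for all $i$; the induction hypothesis then rewrites this as $\rho\in\bigcap_i(\semp{\tau_i}(X^\perp))^\perp=(\bigvee_i\semp{\tau_i}(X^\perp))^\perp=(\semp{\tau}(X^\perp))^\perp$, using the De Morgan law $\bigcap_i Y_i^\perp=(\bigvee_i Y_i)^\perp$.

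For the unitary clause it suffices to show that $\semp{\tau}$ preserves orthocomplements, i.e. $\semp{\tau}(X^\perp)=(\semp{\tau}(X))^\perp$, for then $(\semp{\tau}(X^\perp))^\perp=\semp{\tau}(X)$ and the second equivalence follows from the first. For a basic unitary term the dual super-operator is again unitary (conjugation by $U^{-1}$) and its image operation is $X\mapsto U^{-1}X$, which preserves orthocomplementation because $U^{-1}$ preserves inner products; this property propagates through sequential composition and tensor product by a straightforward induction. I expect the main obstacle to be the base case: the crux is not the trace manipulation but the clean identification $\semp{\E(\overline{q})}(X)=\supp[\F^\ast(P_X)]$ connecting the join-of-supports definition (\ref{definition-image}) of the Heisenberg image with the dual super-operator applied to a single projection, since it is exactly this identity that lets the Hilbert-Schmidt adjoint be brought into play and converts an a priori infinite join over $X$ into one support computation.
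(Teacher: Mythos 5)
Your proof is correct and takes the approach that the paper's definitions make natural: structural induction on $\tau$, with the base case settled by the adjoint relation $\tr(P_{X^\perp}\F(\rho))=\tr(\F^\ast(P_{X^\perp})\rho)$ combined with the image-support identity $\E(X)=\supp[\E(P_X)]$, and the contravariance of the Heisenberg composition making the orthocomplements cancel in the composite cases. Two small points to tighten. First, since $\sem{\tau_1}(\rho)$ is in general only a \emph{partial} density operator (quantum operations are trace-non-increasing), the induction hypothesis should be stated for arbitrary positive trace-class operators; your base-case argument already works verbatim at that level of generality, so this is a restatement, not a repair. Second, your decision to prove orthocomplement-preservation of $\semp{\tau}$ only for terms generated by clauses (1)--(3) of Definition \ref{def-terms} is not merely convenient but essential: for a probabilistic combination of unitaries, e.g. $\tau=\frac{1}{2}\mathcal{I}+\frac{1}{2}X$ on a qubit with $Y$ the span of $|0\rangle$, one has $\semp{\tau}(Y)=Y\vee XY=\hs$ while $\sem{\tau}(|0\rangle\langle 0|)\notin Y$, so the second equivalence of the lemma genuinely fails there; this shows the "in particular" clause must be read as restricted to unitary terms (which by definition exclude clause (4)), exactly the scope your argument covers.
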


\subsubsection{Semantics of Logical Formulas}\label{ss-section-sem}

We now move on to define the semantics of logical formulas in $\mathcal{QL}$. To this end, we need the following notations: 
\begin{itemize}\item Let $\rho\in\mathcal{D}(\hs)$ and $X\in S(\hs)$. Then we define: $\rho\in X\ {\rm iff}\ \supp(\rho)\subseteq X.$
\item Two partial density operators $\rho_1,\rho_2\in \mathcal{D}(\hs)$ are orthogonal, written $\rho_1\bot\rho_2$, if $\supp(\rho_1)\bot\supp(\rho_2)$.  
\end{itemize}

\begin{defn}\label{QFO-semantics}Given an interpretation $\mathbb{I}$ and a state $\rho\in\mathcal{D}(\hs_\mathit{Var})$. Let $\beta$ be a formula. Then satisfaction relation $(\mathbb{I},\rho)\models\beta$ is inductively defined as follows:
\begin{enumerate}
\item If $\beta=P(\tau)$, then $(\mathbb{I},\rho)\models\beta$ iff $\llbracket \tau\rrbracket_\mathbb{I}(\rho)\downarrow\mathit{var}(\tau)\in P^\mathbb{I}$;
\item If $\beta=\neg\beta^\prime$, then $(\mathbb{I},\rho)\models\beta$ iff for all $\rho^\prime$, $(\mathbb{I},\rho^\prime)\models\beta^\prime$ implies $\rho^\prime\bot\rho$;  
\item If $\beta=\beta_1\wedge\beta_2$, then $(\mathbb{I},\rho)\models\beta$ iff $(\mathbb{I},\rho)\models\beta_1$ and $(\mathbb{I},\rho)\models\beta_2$;
\item If $\beta=\tau^\ast(\beta^\prime)$, then $(\mathbb{I},\rho)\models\beta$ iff $(\mathbb{I},\llbracket \tau\rrbracket_\mathbb{I}(\rho))\models\beta^\prime;$
\item If $\beta=(\forall\overline{q})\beta^\prime$, then $(\mathbb{I},\rho)\models\beta$ iff for any term $\tau$ with $\mathit{var}(\tau)\subseteq\overline{q}$, it holds that $(\mathbb{I},\rho)\models\tau^\ast(\beta^\prime)$.
\end{enumerate}
\end{defn}

Clauses (1) and (3) are easy to understand. From clause (2) we see that if $(\mathbb{I},\rho)\models\neg\beta^\prime$ then $(\mathbb{I},\rho)\models\beta^\prime$ does not hold, but not vice versa. Indeed, $(\mathbb{I},\rho)\models\neg\beta^\prime$ means that   $(\mathbb{I},\rho^\prime)\models\beta^\prime$ is not true for all $\rho^\prime$ that is not orthogonal to $\rho$. Clause (4) reflects the Schr\"{o}dinger-Heisenberg duality discussed before. Clause (5) is essentially a restatement of equation (\ref{quantum-quantifier}), but here the allowed operations $\mathcal{E}\in\mathcal{O}_{\overline{q}}$ in equation (\ref{quantum-quantifier}) are syntactically expressed by quantum terms $\tau$. In classical program logics (e.g. Hoare logic and separation logic), substitution is needed in defining the proof rules for some basic program constructs. We pointed out in Subsection \ref{sec-logic-formula} that a term-adjoint formula of the form $\tau^\ast(\beta)$ is introduced for a role in our $\mathcal{QL}$ as the one of substitution in classical first-order logic. But its definition (see clause (4) of Definition \ref{def-logic-formula}) is quite different from that in the classical case: the former is defined as a primitive syntactic notion, whereas the latter is defined as a derived syntactic notion. A discussion about the relationship between semantics of term-adjoint formulas and substitutions is given in the Appendix.

To illustrate the above definition, let us see two examples. The first one is a continuation of Example \ref{example-1}: 

\begin{exam} Let $\mathbb{I}$ be the usual interpretation where $H, X, Y, Z, C$ denotes the Hadamard gate, Pauli gates and CNOT, respectively.
We consider the logical formula in Example \ref{example-1}: $$\beta=(\forall q_1)(\forall q_2)\beta^\prime,\ {\rm where}\ \beta^\prime=P_0(q_1)\wedge P(q_1,q_2)\rightarrow P(\tau).$$ It is easy to show that $(\mathbb{I},|0\rangle|0\rangle)\models\beta$. Intuitively, the two-qubit system $q_1q_2$ is initialised in basis state $|0\rangle|0\rangle$. For $i=1,2,$ each allowed operation on $q_i$ can be syntactically expressed by a quantum term $\tau_i$. After $\tau_i$, the subsystem $q_i$ is prepared in state $\rho_i=\llbracket\tau_i\rrbracket_\mathbb{I}(|0\rangle).$ Then the universal quantifications $(\forall q_1), (\forall q_2)$ mean that for any preparation operations $\tau_1,\tau_2$, the product state $\rho_1\otimes\rho_2$ satisfies $\beta^\prime$. Indeed, we have a stronger conclusion: $(\mathbb{I},|0\rangle|0\rangle)\models (\forall q_1 q_2)\beta^\prime$, which means that after any joint preparation operation on $q_1$ and $q_2$ together, denoted by a quantum term $\tau$, the state $\rho=\llbracket\tau\rrbracket_\mathbb{I}(|0\rangle|0\rangle)$ satisfies $\beta^\prime$. It should be noted that state $\rho$ prepared by the joint operation $\tau$ can be an entanglement between $q_1$ and $q_2$. 
\end{exam}

The second example shows how some important notions in fault-tolerant quantum computation \cite{Gott98} can be conveniently described in our logic $\mathcal{QL}$:

\begin{exam}Let $q_1,q_2$ be two quantum variables. Two states $|\phi\rangle,|\Psi\rangle\in\hs_{q_1}\otimes\hs_{q_2}$ are global unitary equivalent if there exists a unitary $U$ on $\hs_{q_1}\otimes\hs_{q_2}$ such that $|\Psi\rangle=U|\Phi\rangle$. They are local unitary equivalent (respectively, local Clifford equivalent), if there exist unitariies (respectively, Clifford operators) $U_1$ on $\hs_{q_1}$ and $U_2$ on $\hs_{q_2}$ such that $|\Psi\rangle=(U_1\otimes U_2)|\Phi\rangle$.   
Consider an interpretation $\mathbb{I}$ where quantum predicate symbol $P$ is interpreted as the $1$-dimensional subspace of $\hs_{q_1}\otimes\hs_{q_2}$ spanned by state $|\Psi\rangle$, and all quantum operation symbols are interpreted as unitary operators. Then global and local unitary equivalences of $|\Phi\rangle$ and $|\Psi\rangle$ can be expressed as \begin{align}\label{g-equivalence}&(\mathbb{I},|\Phi\rangle)\models(\exists q_1q_2)P(q_1,q_2),\\
\label{l-equivalence}&(\mathbb{I},|\Phi\rangle)\models(\exists q_1)(\exists q_2)P(q_1,q_2),\end{align} respectively. If all quantum operation symbols are interpreted as Clifford operators, then equation (\ref{l-equivalence}) expresses local Clifford equivalence of $|\Phi\rangle$ and $|\Psi\rangle$. It was conjectured in \cite{prob-list} but disproved in \cite{Ji10} that local unitary equivalence and local Clifford equivalence are equivalent
\end{exam}

As in classical first-order logic, we can show that the semantics of a logical formula in $\mathcal{QL}$ depends only on its free variables $\mathit{free}(\beta)$. 

\begin{defn} An interpretation $\mathbb{I}$ is called term-expressive if for any quantum variables $\overline{q}$ and for any $\rho,\rho^\prime$ and for any $\epsilon>0$, there exists a quantum term $\tau$ such that $\mathit{var}(\tau)\subseteq\overline{q}$ and $D\left(\llbracket\tau\rrbracket_\mathbb{I}(\rho),\rho^\prime\right)\leq\epsilon,$ where $D$ stands for the trace distance; that is, $D(\sigma,\sigma^\prime)=\frac{1}{2}\tr|\sigma-\sigma^\prime|$ for any density operators $\sigma,\sigma^\prime.$ 
\end{defn}

For example, the usual interpretation $\mathbb{I}$ of $H,S,T,\mathit{CNOT}$ as the Hadamard, phase, $\pi/8$ and controlled-NOT gates is term-expressive.  

\begin{lem}[Coincidence and Renaming]\label{lem-coincide}\begin{enumerate}
\item For any term-expressive interpretation $\mathbb{I}$, if $\rho\downarrow\mathit{free}(\beta)=\rho^\prime\downarrow\mathit{free}(\beta)$, then $(\mathbb{I},\rho)\models\beta$ iff $(\mathbb{I},\rho^\prime)\models\beta$. If $\beta$ is quantifier-free, then the requirement of term-expressivity is unnecessary.   
\item For any formula $\beta$, assume that $q\notin\mathit{free}(\beta)$, $q^\prime$ does not occur in $\beta$, and their dimensions are the same. If $\beta^\prime$ is obtained by replacing all bound occurrences of $q$ in $\beta$ with $q^\prime$, then $\beta\equiv \beta^\prime.$  
\end{enumerate}\end{lem}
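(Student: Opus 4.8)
The plan is to prove the two parts of Lemma \ref{lem-coincide} (Coincidence and Renaming) by structural induction on the formula $\beta$, following the inductive definition of satisfaction in Definition \ref{QFO-semantics}. For Part (1), I would establish the statement simultaneously for all states $\rho,\rho'$ agreeing on $\mathit{free}(\beta)$. The atomic case $\beta=P(\tau)$ is the base case: since $\mathit{free}(\beta)=\mathit{var}(\tau)$, the hypothesis $\rho\downarrow\mathit{free}(\beta)=\rho'\downarrow\mathit{free}(\beta)$ gives $\rho\downarrow\mathit{var}(\tau)=\rho'\downarrow\mathit{var}(\tau)$, and the Coincidence Lemma for terms (Lemma \ref{term-coincide}) yields $\llbracket\tau\rrbracket_\mathbb{I}(\rho)\downarrow\mathit{var}(\tau)=\llbracket\tau\rrbracket_\mathbb{I}(\rho')\downarrow\mathit{var}(\tau)$, so the two satisfaction conditions coincide. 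The conjunction case is immediate from the induction hypotheses (both conjuncts have free variables contained in $\mathit{free}(\beta)$). The term-adjoint case $\beta=\tau^\ast(\beta')$ should reduce, via clause (4), to comparing $(\mathbb{I},\llbracket\tau\rrbracket_\mathbb{I}(\rho))\models\beta'$ with the primed version, again invoking Lemma \ref{term-coincide} to control how $\llbracket\tau\rrbracket_\mathbb{I}$ acts on the relevant variables together with the induction hypothesis on $\beta'$.

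The two genuinely delicate cases are negation and the universal quantifier, and this is where I expect the main obstacle and the role of term-expressivity to appear. For negation $\beta=\neg\beta'$, the semantics in clause (2) quantifies over \emph{all} states $\rho''$ and requires orthogonality, so the agreement of $\rho$ and $\rho'$ only on $\mathit{free}(\beta')=\mathit{free}(\beta)$ does not obviously preserve the orthogonality condition $\rho''\bot\rho$ versus $\rho''\bot\rho'$, because orthogonality of supports is a global property on $\hs_\mathit{Var}$ rather than a property of restrictions. The natural strategy is to reformulate clause (2) semantically: show first (perhaps as a preliminary claim) that for term-expressive $\mathbb{I}$ the satisfaction set $\{\rho : (\mathbb{I},\rho)\models\beta'\}$ is determined by a closed subspace of $\hs_{\mathit{free}(\beta')}$, so that $(\mathbb{I},\rho)\models\neg\beta'$ is equivalent to $\supp(\rho\downarrow\mathit{free}(\beta'))$ being orthogonal to that subspace; then agreement on $\mathit{free}(\beta)$ finishes the case. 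Term-expressivity is exactly what guarantees that the set of reachable states $\llbracket\tau\rrbracket_\mathbb{I}(\rho)$ is dense, which is needed both here and for the quantifier.

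For the quantifier case $\beta=(\forall\overline{q})\beta'$, clause (5) ranges over all terms $\tau$ with $\mathit{var}(\tau)\subseteq\overline{q}$ and asks $(\mathbb{I},\rho)\models\tau^\ast(\beta')$, i.e. $(\mathbb{I},\llbracket\tau\rrbracket_\mathbb{I}(\rho))\models\beta'$. Here $\mathit{free}(\beta)=\mathit{free}(\beta')\setminus\overline{q}$, so $\rho$ and $\rho'$ may differ arbitrarily on $\overline{q}$. The key point is that applying a term $\tau$ supported on $\overline{q}$ overwrites the $\overline{q}$-part of the state, and by term-expressivity the image $\llbracket\tau\rrbracket_\mathbb{I}(\rho)\downarrow\overline{q}$ can approximate any target state on $\hs_{\overline{q}}$ arbitrarily well; combined with the fact (to be used from the already-proved inductive cases) that satisfaction of $\beta'$ is continuous in the trace distance on the relevant restriction, I would argue that the quantified condition depends only on $\rho\downarrow\mathit{free}(\beta)$ and not on the $\overline{q}$-part. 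This continuity-plus-density argument is the technical heart of the proof, and it is precisely why the term-expressivity hypothesis is needed for quantified formulas but dispensable in the quantifier-free case noted in the statement.

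For Part (2), the renaming claim $\beta\equiv\beta'$, I would again induct on the structure of $\beta$, observing that replacing a bound variable $q$ (with $q\notin\mathit{free}(\beta)$) by a fresh $q'$ of the same dimension only affects subformulas under the binder $(\forall\overline{q})$ or $(\exists\overline{q})$ containing $q$. The substance reduces to showing that, inside such a binder, quantifying over terms on $\overline{q}$ versus on the renamed sequence gives the same satisfaction; since $q$ and $q'$ have identical dimension their state spaces are isomorphic, so there is a bijection between admissible terms under the obvious relabelling, and the allowed-operations family $\mathcal{O}=\{\mathcal{O}_{\overline{d}}\}$ depends only on signatures, hence is invariant under this relabelling. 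I expect Part (2) to be routine once the bookkeeping of free and bound occurrences is set up carefully, with the only care needed being that $q'$ does not occur at all in $\beta$ so no accidental capture is introduced.
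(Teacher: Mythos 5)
Your overall skeleton (structural induction, Lemma~\ref{term-coincide} for the atomic and term-adjoint cases, and a reformulation of satisfaction via closed subspaces) is the right one; indeed the cleanest organization is to strengthen the induction hypothesis to the localized form of Lemma~\ref{subspace-satisfaction}/Theorem~\ref{thm-subspace-satisfaction}: for every formula $\beta$ there is a closed subspace $Y_\beta\subseteq\hs_{\mathit{free}(\beta)}$ such that $(\mathbb{I},\rho)\models\beta$ iff $\supp(\rho\downarrow\mathit{free}(\beta))\subseteq Y_\beta$, from which Part (1) is immediate. However, you misplace where term-expressivity is needed, and this creates a real gap. For negation no term-expressivity is required: if the induction hypothesis gives the cylinder form for $\beta'$, then the span of the supports of all states satisfying $\beta'$ is exactly $Y_{\beta'}\otimes\hs_{\mathit{Var}\setminus\mathit{free}(\beta')}$, and clause (2) of Definition~\ref{QFO-semantics} says precisely that $\supp(\rho)$ is orthogonal to this cylinder, i.e.\ $\supp(\rho\downarrow\mathit{free}(\beta'))\subseteq Y_{\beta'}^{\perp}$, using $(Y\otimes\hs)^{\perp}=Y^{\perp}\otimes\hs$. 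Because your proof invokes term-expressivity (density of reachable states) in the negation case, your induction cannot establish the second sentence of Part (1), namely that quantifier-free formulas need no term-expressivity hypothesis; that refinement is part of the statement you are proving.

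More seriously, your argument for the quantifier case would fail as sketched, because the obstruction is correlations, not marginals. For a term $\tau'$ with $\mathit{var}(\tau')\subseteq\overline{q}$ (e.g.\ the identity, or any unitary), $\llbracket\tau'\rrbracket_\mathbb{I}(\rho')$ retains whatever entanglement $\rho'$ carries between $\overline{q}$ and $\mathit{free}(\beta)$, and no term applied to $\rho$ can approximate such a state in trace distance when $\rho$ has different correlations, since operations local to $\overline{q}$ cannot create correlations with the outside. So a state-matching ``density plus continuity'' argument cannot connect the two sides; moreover satisfaction is not continuous but only closed under limits (Lemma~\ref{lem-convex}(3)), which is one-directional. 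The argument that does work goes through the subspace structure: from $(\mathbb{I},\rho)\models(\forall\overline{q})\beta'$, apply the terms that (by term-expressivity, up to $\epsilon$) discard $\overline{q}$ and re-prepare an arbitrary $\sigma$, and use Lemma~\ref{lem-convex}(3) to conclude that $\sigma\otimes\bigl(\rho\downarrow(\mathit{Var}\setminus\overline{q})\bigr)$ satisfies $\beta'$ for every $\sigma$; since by the strengthened induction hypothesis the satisfying states form a cylinder subspace, this yields $\hs_{\overline{q}}\otimes\supp\bigl(\rho\downarrow(\mathit{Var}\setminus\overline{q})\bigr)\subseteq Y_{\beta'}\otimes\hs_{\mathit{Var}\setminus\mathit{free}(\beta')}$. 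Then for \emph{any} term $\tau'$ on $\overline{q}$ applied to $\rho'$ one has $\supp\bigl(\llbracket\tau'\rrbracket_\mathbb{I}(\rho')\bigr)\subseteq\hs_{\overline{q}}\otimes\supp\bigl(\rho'\downarrow(\mathit{Var}\setminus\overline{q})\bigr)$ — this is how correlated states are handled without ever approximating them — and whether that cylinder lies inside the $Y_{\beta'}$-cylinder depends only on $\supp(\rho'\downarrow\mathit{free}(\beta))=\supp(\rho\downarrow\mathit{free}(\beta))$. Your plan for Part (2) (bijective renaming of terms over variables of equal dimension, with $\mathcal{O}$ depending only on signatures) is sound and routine once Part (1) is in place.
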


The next lemma indicates that satisfaction relation is preserved by inclusion relation, convex combination and limit of quantum states. Essentially, this property comes from the linearity of quantum mechanics and quantum predicates as close subspaces.

\begin{lem}[Monotonicity, Convex combination and Limit]\label{lem-convex}\begin{enumerate}\item If $\supp\rho\subseteq\supp\sigma$ and $(\mathbb{I},\sigma)\models\beta$, then $(\mathbb{I},\rho)\models\beta$.   
\item If for each $i$, $(\mathbb{I},\rho_i)\models\beta$, then for any probability distribution $\{p_i\}$, we have $\left(\mathbb{I},\sum_i p_i\rho_i\right)\models\beta.$
\item If $(\mathbb{I},\rho_n)\models\beta$ for all $n$, and $\lim_{n\rightarrow}\rho_n=\rho$ with the trace distance, then $(\mathbb{I},\rho)\models\beta$.
\end{enumerate}\end{lem}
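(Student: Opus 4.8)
The plan is to prove all three statements simultaneously by structural induction on the formula $\beta$, following the clauses of Definition \ref{QFO-semantics}. Before starting the induction I would isolate four elementary facts about the Schr\"odinger semantics $\sem{\tau}$ and the restriction operation $\downarrow$, which are the only analytic ingredients needed: (i) every $\sem{\tau}$ is a quantum operation and hence, together with the partial trace used in $\downarrow$, \emph{preserves support inclusion}, i.e. $\supp\rho\subseteq\supp\sigma$ implies $\supp\sem{\tau}(\rho)\subseteq\supp\sem{\tau}(\sigma)$ and likewise after $\downarrow V$ (this is the monotonicity of the image map of equation (\ref{definition-image}), together with the identity that $\supp\sem{\tau}(\rho)$ equals the image of $\supp\rho$ under that map); (ii) $\sem{\tau}$ and $\downarrow$ are \emph{linear}, and for positive operators $A_i$ with weights $p_i>0$ one has $\supp(\sum_i p_iA_i)=\bigvee_i\supp A_i$; (iii) $\sem{\tau}$ and $\downarrow$ are \emph{trace-distance non-increasing}, hence continuous; and (iv) for a fixed closed subspace $X$ the condition $\supp\sigma\subseteq X$ is \emph{closed} in trace norm, being equivalent to $\tr(P_{X^\perp}\sigma)=0$ with $\sigma\mapsto\tr(P_{X^\perp}\sigma)$ trace-norm continuous.

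With these in hand, the atomic case $\beta=P(\tau)$ is immediate: writing $f(\rho)=\sem{\tau}(\rho)\downarrow\mathit{var}(\tau)$, part (1) follows from fact (i) since $\supp f(\rho)\subseteq\supp f(\sigma)\subseteq P^\mathbb{I}$; part (2) from fact (ii), since $f(\sum_i p_i\rho_i)=\sum_i p_i f(\rho_i)$ has support $\bigvee_i\supp f(\rho_i)\subseteq P^\mathbb{I}$; and part (3) from facts (iii)--(iv), since $f(\rho_n)\to f(\rho)$ while each $\supp f(\rho_n)\subseteq P^\mathbb{I}$ forces $\supp f(\rho)\subseteq P^\mathbb{I}$ by closedness. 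The conjunction case $\beta=\beta_1\wedge\beta_2$ is trivial by applying the induction hypothesis to each conjunct. The term-adjoint case $\beta=\tau^\ast(\beta')$ and the universal case $\beta=(\forall\overline{q})\beta'$ both reduce, by clauses (4)--(5) of Definition \ref{QFO-semantics}, to satisfaction of $\beta'$ at a transformed state $\sem{\tau}(\rho)$ (ranging over all admissible $\tau$ in the $\forall$ case): I would push $\rho\mapsto\sem{\tau}(\rho)$ through facts (i)--(iii) to convert an inclusion, convex combination, or limit of the $\rho$'s into the corresponding relation among the $\sem{\tau}(\rho)$'s, and then invoke the induction hypothesis on $\beta'$.

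The only case requiring genuine care is negation $\beta=\neg\beta'$, whose semantics quantifies over all states rather than recursing on a subformula. Fix an arbitrary $\rho'$ with $(\mathbb{I},\rho')\models\beta'$; the task is to deduce $\rho'\bot\rho$, i.e. $\supp\rho'\bot\supp\rho$, from the corresponding hypotheses. For part (1), $\supp\rho'\bot\supp\sigma$ together with $\supp\rho\subseteq\supp\sigma$ yields $\supp\rho'\bot\supp\rho$ at once; monotonicity survives negation here precisely because of the asymmetric orthogonality semantics of clause (2). For part (2), $\supp\rho'\bot\supp\rho_i$ for every $i$ means $\supp\rho'\subseteq\bigcap_i(\supp\rho_i)^\perp=\big(\bigvee_i\supp\rho_i\big)^\perp=\big(\supp\textstyle\sum_ip_i\rho_i\big)^\perp$, so orthogonality is preserved. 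For part (3), I would rephrase $\supp\rho'\bot\supp\rho_n$ as $\tr(P_{\supp\rho'}\rho_n)=0$ and pass to the limit using continuity of the trace against the fixed bounded projection $P_{\supp\rho'}$, obtaining $\tr(P_{\supp\rho'}\rho)=0$, i.e. $\rho'\bot\rho$.

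I expect the main obstacle to be the limit statement (3), where support is genuinely discontinuous under trace-norm convergence and can only jump \emph{up}. The resolution is conceptual rather than computational: because all quantum predicates are interpreted as \emph{closed} subspaces, both ``having support inside $P^\mathbb{I}$'' (atomic case) and ``being orthogonal to a fixed support'' (negation case) are closed conditions, reformulable as the vanishing of a trace functional against a fixed bounded projection. This closedness is exactly where the sharp-quantum-logic choice of closed subspaces, rather than arbitrary subsets, is essential.
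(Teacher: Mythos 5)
Your proof is correct, and it follows essentially the route the paper itself indicates for this lemma (the remark preceding it attributes the property exactly to ``the linearity of quantum mechanics and quantum predicates as closed subspaces''): a structural induction over Definition \ref{QFO-semantics}, driven by monotonicity/linearity/contractivity of $\sem{\tau}$ and $\downarrow$, closedness of the conditions $\supp\sigma\subseteq X$ and $\supp\sigma\perp Y$ under trace-norm limits, and a direct orthogonality argument for the non-recursive negation clause. Note also that your order of development is the right one: the shortcut of deducing all three parts from Lemma \ref{subspace-satisfaction} would be circular, since that characterisation is established after (and relies on) the present lemma.
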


The notions of logical validity and consequence in classical first-order logic can be straightforwardly generalised into $\mathcal{QL}$. A formula $\beta$ is called valid in an interpretation $\mathbb{I}$, written $\mathbb{I}\models\beta$, if $(\mathbb{I},\rho)\models\beta$ for all $\rho$. The set of formulas valid in $\mathbb{I}$ is denoted $\mathit{Th}(\mathbb{I})=\{\beta|\mathit{I}\models\beta\}.$  A formula $\beta$ is called logical valid if it is valid in any interpretation $\mathbb{I}$. Let $\Sigma$ be a set of formulas. A formula $\beta$ is called a logical consequence of $\Sigma$, written $\Sigma\models\beta$, if for any interpretation $\mathbb{I}$ and for any state $\rho$, whenever $(\mathbb{I},\rho)\models\gamma$ for all $\gamma\in\Sigma$, then $(\mathbb{I},\rho)\models\beta$.     
Two formulas $\beta$ and $\beta^\prime$ are called 
logically equivalent, written $\beta\equiv\beta^\prime$, if $\beta\models\beta^\prime$ and $\beta^\prime\models\beta$. 

\subsubsection{Logical Semantics as Subspaces} As we saw in Subsections \ref{subsec-pql} and \ref{Sec-QL-FO}, in the propositional quantum logic and first-order quantum logic with classical variables, the semantics of a logical formula is defined as a closed subspace of a Hilbert space (or more generally, an element of an orthomodular lattice). However, the semantics of logical formulas in our logic $\mathcal{QL}$ is given in Definition \ref{QFO-semantics} in terms of satisfaction relation. 
On the other hand, for a classical first-order logical formula $\beta$ and an interpretation $\mathbb{I}$ with domain $D$, we have:\begin{equation}\label{two-semantics}\llbracket\beta\rrbracket_\mathbb{I}=\{d\in D:(\mathbb{I},d)\models\beta\}.\end{equation}
It is similar to the case of classical logic that we can establish a close connection between the subspace semantics and the satisfaction relation. To do so, let us first introduce the following definition as a quantum generalisation of equation (\ref{two-semantics}):

\begin{defn}\label{def-subspace-semantics}Given an interpretation $\mathbb{I}$. The semantics of a formula $\beta$ is defined the closed subspace $\llbracket\beta\rrbracket_\mathbb{I}$ of $\hs_\mathit{Var}$:
$$\llbracket\beta\rrbracket_\mathbb{I}=\bigvee_{(\mathbb{I},\rho)\models\beta}\supp \rho \in S(\hs_\mathit{Var})$$ where $\supp \rho$ denotes the support of density operator $\rho$, i.e. the subspace spanned by the eigenvectors of $\rho$ corresponding to its nonzero eigenvalues. 
\end{defn}

A close connection between the satisfaction in Definition \ref{QFO-semantics} and the subspace semantics in Definition \ref{def-subspace-semantics} is presented in the following:

\begin{lem}\label{subspace-satisfaction} For any formula $\beta$, interpretation $\mathbb{I}$ and quantum state $\rho$, $(\mathbb{I},\rho)\models\beta\ {\rm iff}\ \rho\in\llbracket\beta\rrbracket_\mathbb{I}.$  
\end{lem}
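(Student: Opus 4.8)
The plan is to prove the two directions of the equivalence separately, noting first that by Definition \ref{def-subspace-semantics} together with the convention that $\rho\in X$ iff $\supp\rho\subseteq X$, the claim reads: $(\mathbb{I},\rho)\models\beta$ iff $\supp\rho\subseteq\sem{\beta}=\bigvee_{(\mathbb{I},\sigma)\models\beta}\supp\sigma$. The forward direction is then immediate: if $(\mathbb{I},\rho)\models\beta$, then $\supp\rho$ is one of the subspaces over which the join defining $\sem{\beta}$ is taken, so $\supp\rho\subseteq\sem{\beta}$, i.e. $\rho\in\sem{\beta}$.

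The reverse direction carries all the weight. The idea is to manufacture a single satisfying state whose support is the entire subspace $\sem{\beta}$, and then transfer satisfaction down to $\rho$ by monotonicity. First I would reduce the (a priori uncountable) join to a countable one: since we work only with separable Hilbert spaces, $\sem{\beta}$ is separable, and by approximating a countable dense subset of the linear span of $\bigcup_{(\mathbb{I},\sigma)\models\beta}\supp\sigma$ with finite linear combinations, I can extract a countable family of states $\sigma_1,\sigma_2,\ldots$ with $(\mathbb{I},\sigma_n)\models\beta$ for every $n$ and $\bigvee_n\supp\sigma_n=\sem{\beta}$. Next I would form the mixture $\sigma^\ast=\sum_n p_n\sigma_n$ for a strictly positive probability distribution $\{p_n\}$. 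Since $\langle\psi|\sigma^\ast|\psi\rangle=\sum_n p_n\langle\psi|\sigma_n|\psi\rangle$ and every summand is non-negative, we get $\ker\sigma^\ast=\bigcap_n\ker\sigma_n$, whence $\supp\sigma^\ast=\bigvee_n\supp\sigma_n=\sem{\beta}$.

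That $\sigma^\ast$ still satisfies $\beta$ follows from Lemma \ref{lem-convex}: each finite normalised partial sum satisfies $\beta$ by the convex-combination clause, these partial sums converge to $\sigma^\ast$ in trace distance, and so the limit clause gives $(\mathbb{I},\sigma^\ast)\models\beta$. Finally, from $\supp\rho\subseteq\sem{\beta}=\supp\sigma^\ast$ together with $(\mathbb{I},\sigma^\ast)\models\beta$, the monotonicity clause of Lemma \ref{lem-convex} yields $(\mathbb{I},\rho)\models\beta$, completing the reverse direction.

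I expect the main obstacle to be the reduction of the join to a countable family and the verification that the mixture's support is exactly $\sem{\beta}$; both steps rely essentially on separability of the underlying Hilbert space. The whole argument is possible only because the three closure properties of satisfaction collected in Lemma \ref{lem-convex} --- monotonicity, convex combination, and limits --- are precisely what is needed to promote the approximating mixtures into a single maximal satisfying witness.
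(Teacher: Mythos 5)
Your proof is correct and follows exactly the route the paper sets up: the forward direction is immediate from Definition \ref{def-subspace-semantics} together with the convention $\rho\in X$ iff $\supp(\rho)\subseteq X$, and the reverse direction is the intended application of Lemma \ref{lem-convex}, using separability to extract a countable family of satisfying witnesses, convex combination and trace-norm limits to assemble a single satisfying state $\sigma^\ast$ with $\supp(\sigma^\ast)=\llbracket\beta\rrbracket_\mathbb{I}$ (your kernel/De Morgan identification of the support is sound), and monotonicity to transfer satisfaction down to $\rho$.
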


Furthermore, the subspace semantics of $\mathcal{QL}$ formulas enjoys a structural representation:

\begin{thm}\label{thm-subspace-satisfaction}
\begin{enumerate}\item If $\beta=P(\tau)$, then $\llbracket\beta\rrbracket_\mathbb{I}=\llbracket\tau\rrbracket_\mathbb{I}^\ast(P^\mathbb{I})\otimes\hs_{\mathit{Var}\setminus\mathit{var}(\tau)}$;
\item If $\beta=\neg\beta^\prime$, then $\llbracket\beta\rrbracket_\mathbb{I}=\left(\llbracket\beta^\prime\rrbracket_\mathbb{I}\right)^\perp$;
\item If $\beta=\beta_1\wedge\beta_2$, then $\llbracket\beta\rrbracket_\mathbb{I}=\llbracket\beta_1\rrbracket_\mathbb{I}\cap \llbracket\beta_2\rrbracket_\mathbb{I}$;
\item If $\beta=\tau^\ast(\beta^\prime)$, then $\llbracket\beta\rrbracket_\mathbb{I}=\llbracket\tau\rrbracket_\mathbb{I}^\ast(\llbracket\beta^\prime\rrbracket_\mathbb{I})$;
\item If $\beta=(\forall\overline{q})\beta^\prime$, then $\llbracket\beta\rrbracket_\mathbb{I}=\bigcap_{\mathit{var}(\tau)\subseteq\overline{q}}\left(\llbracket\tau\rrbracket^\ast_\mathbb{I}(\llbracket\beta^\prime\rrbracket\right).$
\end{enumerate}
\end{thm}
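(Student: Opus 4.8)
The plan is to establish the five identities one at a time, using as a common bridge Lemma~\ref{subspace-satisfaction}, which asserts $(\mathbb{I},\rho)\models\beta$ iff $\rho\in\sem{\beta}$, i.e. $\supp\rho\subseteq\sem{\beta}$. Together with Definition~\ref{def-subspace-semantics} this yields the reduction principle I would rely on throughout: a closed subspace $Y$ equals $\sem{\beta}$ as soon as one proves $(\mathbb{I},\rho)\models\beta\iff\supp\rho\subseteq Y$ for every $\rho$, since two closed subspaces with the same family of ``supported'' density operators must coincide (test against rank-one projections). So for each clause I would unfold the matching case of the satisfaction relation in Definition~\ref{QFO-semantics}, rewrite the satisfaction of any immediate subformula through Lemma~\ref{subspace-satisfaction}, and then read off $Y$. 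No outer induction is needed: each clause is a standalone compositional identity expressing $\sem{\beta}$ in terms of the $\sem{\cdot}$ of its immediate subformulas.

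The two propositional clauses are the warm-up. For $\beta=\beta_1\wedge\beta_2$, clause (3) of Definition~\ref{QFO-semantics} gives $\supp\rho\subseteq\sem{\beta_1}$ and $\supp\rho\subseteq\sem{\beta_2}$, which is exactly $\supp\rho\subseteq\sem{\beta_1}\cap\sem{\beta_2}$, so $Y=\sem{\beta_1}\cap\sem{\beta_2}$. For $\beta=\neg\beta^\prime$, clause (2) says $(\mathbb{I},\rho)\models\beta$ iff every $\rho^\prime$ supported in $\sem{\beta^\prime}$ is orthogonal to $\rho$; the one small step here is that this universally quantified orthogonality is equivalent to $\supp\rho\perp\sem{\beta^\prime}$, i.e. $\supp\rho\subseteq(\sem{\beta^\prime})^\perp$, which follows by letting $\rho^\prime$ range over the rank-one projections onto vectors of $\sem{\beta^\prime}$ (each satisfies $\beta^\prime$ by Lemma~\ref{subspace-satisfaction}). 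Hence $Y=(\sem{\beta^\prime})^\perp$.

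The term clauses are where the Schr\"{o}dinger--Heisenberg duality does the real work. For $\beta=\tau^\ast(\beta^\prime)$, clause (4) reduces satisfaction to $\sem{\tau}(\rho)\in\sem{\beta^\prime}$, and Lemma~\ref{lem-duality} converts this Schr\"{o}dinger-side condition into the Heisenberg-side condition $\rho\in\semp{\tau}(\sem{\beta^\prime})$ on the precondition, giving $Y=\semp{\tau}(\sem{\beta^\prime})$. For the atomic clause $\beta=P(\tau)$, the extra ingredient is the passage between the reduced and full states: $\sem{\tau}(\rho)\!\downarrow\!\mathit{var}(\tau)\in P^\mathbb{I}$ iff $\sem{\tau}(\rho)\in P^\mathbb{I}\otimes\hs_{\mathit{Var}\setminus\mathit{var}(\tau)}$ (the standard fact that $\supp(\tr_B\sigma)\subseteq A$ iff $\supp\sigma\subseteq A\otimes\hs_B$); applying duality and then the locality of $\semp{\tau}$ outside $\mathit{var}(\tau)$ (Coincidence Lemma~\ref{term-coincide}), which gives $\semp{\tau}(P^\mathbb{I}\otimes\hs_{\mathit{Var}\setminus\mathit{var}(\tau)})=\semp{\tau}(P^\mathbb{I})\otimes\hs_{\mathit{Var}\setminus\mathit{var}(\tau)}$, delivers the claimed tensor form. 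Finally, for $\beta=(\forall\overline{q})\beta^\prime$, clause (5) is a conjunction, over all terms $\tau$ with $\mathit{var}(\tau)\subseteq\overline{q}$, of the already-proved clause for $\tau^\ast(\beta^\prime)$, so the same intersection argument as in the conjunction case yields $Y=\bigcap_{\mathit{var}(\tau)\subseteq\overline{q}}\semp{\tau}(\sem{\beta^\prime})$.

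I expect the delicate point to be the correct application of duality in clauses (1), (4) and (5). In full generality Lemma~\ref{lem-duality} reads $\sem{\tau}(\rho)\in X\iff\rho\in(\semp{\tau}(X^\perp))^\perp$, and the clean form $\rho\in\semp{\tau}(X)$ used above is guaranteed only when the operations involved are unitary; one must therefore either work in that regime or verify that $(\semp{\tau}(X^\perp))^\perp$ collapses to $\semp{\tau}(X)$ for the allowed operations under consideration. Getting this dualisation right, together with the support/partial-trace reduction in clause (1), is the real content of the proof, whereas the propositional bookkeeping in clauses (2) and (3) and the intersection step in clause (5) are routine.
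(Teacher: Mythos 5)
Your overall architecture---the reduction principle built on Definition~\ref{def-subspace-semantics}, Lemma~\ref{subspace-satisfaction} as the bridge for subformulas, and a case-by-case unfolding of the satisfaction clauses of Definition~\ref{QFO-semantics}---is exactly the route the paper's development points to (the paper states Lemma~\ref{subspace-satisfaction} immediately before the theorem and prints no further argument), and your treatment of clauses (2) and (3), including the rank-one test in the negation case, is complete and correct. The partial-trace step you use in clause (1), namely $\supp(\tr_{\hs_B}\sigma)\subseteq A$ iff $\supp\sigma\subseteq A\otimes\hs_B$, is also right.

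The genuine gap is the one you flagged but did not close: the dualisation in clauses (1), (4) and (5). Lemma~\ref{lem-duality} gives $\sem{\tau}(\rho)\in X\Leftrightarrow\rho\in\left(\semp{\tau}(X^\perp)\right)^\perp$ in general, while your argument needs $\rho\in\semp{\tau}(X)$; you propose to ``verify that $(\semp{\tau}(X^\perp))^\perp$ collapses to $\semp{\tau}(X)$ for the allowed operations under consideration'', but that verification cannot succeed outside the unitary regime. Concretely, take $\tau=\E_{bf}(q)$ with the bit-flip channel of Example~\ref{example-noisy}, $\E_{bf}(\rho)=(1-p)\rho+pX\rho X$ for $0<p<1$, and let $\llbracket\beta^\prime\rrbracket_\mathbb{I}=X_0$ be the span of $|0\rangle$. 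Then $\semp{\tau}(X_0)=\supp[\E_{bf}(|0\rangle\langle 0|)]$ is the whole qubit space, whereas $\left(\semp{\tau}(X_0^\perp)\right)^\perp=\mathbf{0}$; and a direct check shows the only $\rho$ with $\sem{\tau}(\rho)\in X_0$ is $\rho=0$, so $\llbracket\tau^\ast(\beta^\prime)\rrbracket_\mathbb{I}=\mathbf{0}\neq\semp{\tau}(X_0)$. Thus the identity you are deriving in clause (4) (and, in fact, the identity asserted by Theorem~\ref{thm-subspace-satisfaction} itself) holds only when $\tau$ is unitary or all allowed operations are unitary; in general the right-hand sides of clauses (1), (4) and (5) must be replaced by the weakest-precondition form $\left(\semp{\tau}((\cdot)^\perp)\right)^\perp$. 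So your proof is complete precisely in the unitary regime; the remaining obstruction is not a missing idea on your side but a hypothesis missing from the statement. The correct resolution is to say this explicitly---prove the unitary case via the clean duality, and prove the general case with the orthocomplemented form---rather than leave the ``collapse'' as an unverified possibility, since that branch of your plan is a dead end.
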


The clauses (2) and (3) in the above theorem indicates that the interpretation of propositional connectives $\neg$ and $\wedge$ in $\mathcal{QL}$ coincides with that in the original Birkhoff-von Neumann quantum logic. As a corollary of clauses (2) and (5) in the above theorem, if $\beta=(\exists\overline{q})\beta^\prime$, then
\begin{equation}\label{sem-exists}\llbracket\beta\rrbracket_\mathbb{I}=\bigvee_{\mathit{var}(\tau)\subseteq\overline{q}}\left(\llbracket\tau\rrbracket^\ast_\mathbb{I}(\llbracket\beta^\prime\rrbracket^\bot\right)^\bot.\end{equation} 

\subsection{Axiomatic System of $\mathcal{QL}$}

In this section, we present an axiomatisation of our logic $\mathcal{QL}$ with quantum variables. We promised in the Introduction that $\mathcal{QL}$ is a logic with equality $=$ for specifying and reasoning about equality of quantum states and equivalence of quantum circuits. But equality $=$ has not been introduced into $\mathcal{QL}$ in the previous subsections. Now we introduce $=$ in the following way: the axiomatic system of $\mathcal{QL}$ is designed as a two-layer system: the first layer is a first-order equational logic QT$_=$ for quantum terms, and the second layer is built upon QT$_=$ and consists of a set of inference rules for reasoning about first-order logical formulas with quantum variables.    

\subsubsection{Equational Logic for Quantum Terms}\label{subsec-equational}
Let us first describe logic QT$_=$. The formulas in QT$_=$ and their free variables are defined as follows: \begin{enumerate}
\item If $\tau_1$ and $\tau_2$ are quantum terms, then $\gamma=\tau_1=\tau_2$ is a formula in QT$_=$ and $\mathit{free}(\gamma)=\mathit{var}(\tau_1)\cup\mathit{var}(\tau_2)$;
\item If $\gamma^\prime$ are formulas in QT$_=$, so is $\gamma=\neg\gamma^\prime$, and $\mathit{free}(\gamma)=\mathit{free}(\gamma^\prime)$;
\item If $\gamma_1,\gamma_2$ are formulas in QT$_=$, so is $\gamma=\gamma_1\wedge\gamma_2,$ and $\mathit{free}(\gamma)=\mathit{free}(\gamma_1)\cup \mathit{free}(\gamma_2)$;
\item If $\gamma^\prime$ is a formula in QT$_=$, and $\overline{q}$ is a sequence of quantum variables, then $\gamma=(\forall\overline{q})\gamma^\prime$ is a formula in QT$_=$, and $\mathit{free}(\gamma)=\mathit{free}(\gamma^\prime)\setminus\overline{q}$.  
\end{enumerate} 
For any interpretation $\mathbb{I}$, quantum state $\rho$ and logical formula $\gamma$ in QT$_=$, satisfaction relation $(\mathbb{I},\rho)\models\gamma$ is defined as follows:
\begin{enumerate}\item If $\gamma=\tau_1=\tau_2$, then $(\mathbb{I},\rho)\models\gamma$ iff $\llbracket\tau_1\rrbracket_\mathbb{I}(\rho)=\llbracket\tau_2\rrbracket_\mathbb{I}(\rho)$;
\item If $\gamma=\neg\gamma^\prime$, then $(\mathbb{I},\rho)\models\gamma$ iff it does not hold that $(\mathbb{I},\rho)\models\gamma^\prime;$
\item If $\gamma=\gamma_1\wedge\gamma_2$, then $(\mathbb{I},\rho)\models\gamma$ iff $(\mathbb{I},\rho)\models\gamma_1$ and $(\mathbb{I},\rho)\models\gamma_2$;
\item If $\gamma=(\forall\overline{q})\gamma^\prime$, then $(\mathbb{I},\rho)\models\gamma$ iff for any quantum term $\tau$ with $\mathit{var}(\tau)\subseteq\overline{q}$, $(\mathbb{I},\llbracket\tau\rrbracket_\mathbb{I}(\rho))\models\gamma^\prime$.  
\end{enumerate} 
Similar to clause (5) in Definition \ref{QFO-semantics}, the semantics of universal quantification in QT$_=$ defined in clause (4) in the above definition follows the idea of equation (\ref{quantum-quantifier}). 

The axiomatic system of QT$_=$ consists of the standard inference rules of the classical first-order equational logic together with the rules given in Figure \ref{fig EQT}. We assume a special quantum operation symbol $I$ for the identity operator. 
 \begin{figure}[h]\centering
\begin{equation*}\begin{split}
&({\rm QT}1)\ \ \ \frac{\tau_1=\tau_2}{\tau\tau_1=\tau\tau_2}\qquad\frac{\tau_1=\tau_2}{\tau_1\tau=\tau_2\tau}\qquad\qquad ({\rm QT}2)\ \ \ \frac{\tau_{1i}=\tau_{i2}}{\sum_ip_i\tau_{1i}= \sum_ip_i\tau_{2i}}\\ 
&({\rm QT}3)\ \ \ \frac{\mathit{var}(\tau_1)\cap \mathit{var}(\tau_2)=\emptyset}{\tau_1\otimes\tau_2=\tau_1\tau_2=\tau_2\tau_1}\qquad\qquad\quad\  
({\rm QT}4)\ \ \ I\tau=\tau I=\tau\\ &({\rm QT}5)\ \ \ \tau_1(\tau_2\tau_3)=(\tau_1\tau_2)\tau_3\qquad\qquad\qquad\ \ ({\rm QT}6)\ \ \ \frac{\tau\ {\rm is\ unitary}}{\tau\tau^{-1}=\tau^{-1}\tau=I}
\end{split}\end{equation*}
\caption{Equational Logic QT$_=$ for Quantum Terms.}\label{fig EQT}
\end{figure}

It is easy to prove the following: 

\begin{lem}[Soundness]\label{Thm-QT-sound}If $\Gamma\vdash\tau_1=\tau_2$ is provable in the equational logic QT$_=$, then $\Gamma\models\tau_1=\tau_2.$ \end{lem}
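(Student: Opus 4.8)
The plan is to establish soundness by structural induction on the derivation of $\Gamma\vdash\tau_1=\tau_2$ in $\mathrm{QT}_=$. Since the axiomatic system consists of the standard inference rules of classical first-order equational logic together with the six rules in Figure~\ref{fig EQT}, the argument splits into two parts: the generic equational-logic rules (reflexivity, symmetry, transitivity, congruence, the quantifier rules, and the propositional rules inherited from $\mathrm{QT}_=$) and the six quantum-specific rules (QT1)--(QT6). For the generic rules, soundness is routine once we observe that the satisfaction clauses for $\neg,\wedge,\forall$ in $\mathrm{QT}_=$ are defined exactly as in classical first-order logic (note in particular that here $\neg$ is classical negation, clause (2) of the $\mathrm{QT}_=$ semantics, \emph{not} the orthocomplement used for $\mathcal{QL}$ formulas), so the usual soundness bookkeeping carries over verbatim. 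Thus the real content is to verify that each of (QT1)--(QT6) preserves the relation $\models$ under the Schr\"{o}dinger semantics $\sem{\cdot}$ of terms.

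First I would fix an arbitrary interpretation $\mathbb{I}$ and state $\rho$, and reduce each rule to an identity of quantum operations on $\hs_{\mathit{Var}}$. The key reformulation is that $(\mathbb{I},\rho)\models\tau_1=\tau_2$ means $\sem{\tau_1}(\rho)=\sem{\tau_2}(\rho)$; hence to validate a rule whose conclusion is an equation (possibly under a side condition), I must show the corresponding pointwise equality of the Schr\"{o}dinger semantics holds for every $\rho$ satisfying the premises. For (QT4), (QT5) and the associativity-type clauses, I would unfold Definition~\ref{def-sem-terms}: $I\tau$ and $\tau I$ both equal $\tau$ because $I^\mathbb{I}$ is the identity super-operator and composition with identity is neutral, and associativity of sequential composition follows from associativity of super-operator composition together with clause~(2) of the term semantics. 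For (QT6), unitarity gives $(\mathcal{U}^{-1})^\mathbb{I}=(\mathcal{U}^\mathbb{I})^{-1}$ by the interpretation requirement, so $\sem{\tau\tau^{-1}}$ collapses to the identity by induction on the structure of the unitary term $\tau$ using clauses (i)--(iii) defining $\tau^{-1}$.

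For the congruence rules (QT1) and (QT2), I would argue that from a premise $\tau_1=\tau_2$ valid at $\rho$, precomposing or postcomposing with a fixed term $\tau$ preserves equality: postcomposition $\tau\tau_i$ gives $\sem{\tau_i}$ followed by $\sem{\tau}$, and since $\sem{\tau_1}(\rho)=\sem{\tau_2}(\rho)$ we may apply $\sem{\tau}$ to both sides; precomposition $\tau_i\tau$ requires the premise to hold at the state $\sem{\tau}(\rho)$ rather than at $\rho$, so I must be careful to track which state the premise is evaluated at and to use the universal closure implicit in the logic when $\tau$ introduces free variables. For (QT2) I would use linearity of the Schr\"{o}dinger semantics of a probabilistic combination (clause~(4) of Definition~\ref{def-sem-terms}) to push the equalities $\sem{\tau_{1i}}(\rho)=\sem{\tau_{2i}}(\rho)$ through the convex sum $\sum_i p_i(\cdot)$. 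For (QT3), the side condition $\mathit{var}(\tau_1)\cap\mathit{var}(\tau_2)=\emptyset$ is exactly what makes $\sem{\tau_1\otimes\tau_2}=\sem{\tau_1\tau_2}=\sem{\tau_2\tau_1}$, since clause~(3) of Definition~\ref{def-sem-terms} already records that disjoint-variable operations commute as super-operators.

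The main obstacle I anticipate is handling the quantifier rule in the presence of the term-indexed quantification of clause~(4) of the $\mathrm{QT}_=$ semantics, and correctly managing the valuation/state on which premises are asserted when congruence rules move a term into the evaluation context. Concretely, the subtle point in (QT1) precomposition and in the $\forall$-introduction rule is that validity is \emph{state-relative}, so the inductive hypothesis must be applied at a shifted state $\sem{\tau}(\rho)$, and I must confirm that the quantifier semantics (ranging over all terms $\tau$ with $\mathit{var}(\tau)\subseteq\overline{q}$) interacts correctly with free-variable conditions --- essentially the quantum analogue of the classical ``$x$ not free in $\Sigma$'' proviso. Once the pointwise super-operator identities are established, assembling them into a clean induction on the derivation is mechanical, so I expect the proof to be short modulo the careful state-tracking just described.
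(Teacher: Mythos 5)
Your overall strategy --- induction on the derivation, with the inherited classical rules handled by routine bookkeeping and the real content concentrated in verifying (QT1)--(QT6) against the Schr\"{o}dinger semantics --- is the right one. In fact the paper gives no proof of this lemma at all (it is declared ``easy to prove''), so this rule-by-rule check is exactly what is implicitly intended, and your treatment of (QT2)--(QT6) as pointwise super-operator identities (linearity for the probabilistic combination, commutation of disjoint-variable operations for (QT3), neutrality of the identity and associativity for (QT4)--(QT5), and structural induction with $(\mathcal{U}^{-1})^\mathbb{I}=(\mathcal{U}^\mathbb{I})^{-1}$ for (QT6)) goes through.

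However, your handling of (QT1) has a genuine flaw, and it sits precisely on the one non-routine point of the lemma. Under clause (2) of Definition \ref{def-sem-terms}, the composition $\tau_1\tau_2$ means ``$\tau_1$ first, then $\tau_2$'': $\sem{\tau_1\tau_2}(\rho)=\sem{\tau_2}\left(\sem{\tau_1}(\rho)\right)$. Hence it is the \emph{prefixing} form $\tau\tau_1=\tau\tau_2$ that needs the premise at the shifted state $\sem{\tau}(\rho)$, and the \emph{suffixing} form $\tau_1\tau=\tau_2\tau$ that is pointwise sound --- the opposite of what you wrote. More importantly, your proposed repair (``use the universal closure implicit in the logic when $\tau$ introduces free variables'') does not work, because the problem arises even when $\tau$ introduces no new variables. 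Concretely, with one qubit $q$, take $\tau_1=Z(q)$, $\tau_2=I(q)$, $\tau=H(q)$ and $\rho=|0\rangle\langle 0|$: the premise $Z(q)=I(q)$ holds at $\rho$ since $Z|0\rangle=|0\rangle$, yet $\sem{\tau\tau_1}(\rho)=|-\rangle\langle -|$ while $\sem{\tau\tau_2}(\rho)=|+\rangle\langle +|$. So if $\Gamma\models\tau_1=\tau_2$ is read as \emph{local} (state-by-state) consequence --- the reading the paper gives for $\mathcal{QL}$ consequence --- the lemma is simply false. The proof must instead fix the consequence relation for QT$_=$ to be the \emph{global} one: for every interpretation $\mathbb{I}$, if every formula of $\Gamma$ is valid in $\mathbb{I}$ (holds at all states), then so is the conclusion; this matches the paper's usage elsewhere (e.g.\ the definition of $\Sigma\models_\mathit{par}$ for Hoare triples). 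With that reading the induction hypothesis ``the premise is valid in $\mathbb{I}$'' may legitimately be applied at the shifted state $\sem{\tau}(\rho)$, and every case --- both forms of (QT1) as well as the $\forall$-introduction rule --- closes without difficulty. Your write-up never commits to this reading, and without it the argument breaks at (QT1).
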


 \subsubsection{Quantum Predicate Calculus}

The second layer of the axiomatic system of $\mathcal{QL}$ builds upon propositional quantum logic described in Subsection \ref{subsec-pql} and the equational logic QT$_=$ for quantum terms. It consists of propositional axiom ($\mathcal{QL}$1), equality axioms, convex combination axiom, term-adjoint axioms and quantifier axioms presented in Figure \ref{fig AIR}.
\begin{figure}[h]\centering
\begin{equation*}\begin{split}
&(\mathcal{QL}1)\ \ \ {\rm Any}\ \Sigma\vdash\beta\ {\rm provable\ in\ propositional\ QL\ given\ in\ Figure\ \ref{fig QL-P}}\\
&(\mathcal{QL}2)\ \ \ \frac{{\rm QT}_=\vdash\tau_1=\tau_2}{P(\tau_1)\vdash P(\tau_2)}\qquad\qquad\qquad\qquad\qquad\ \ \ \ \ \ \ \ (\mathcal{QL}3)\ \ \ \frac{{\rm QT}_=\vdash\tau_1=\tau_2}{\tau_1^\ast(\beta)\vdash \tau_2^\ast(\beta)}\\ 
&(\mathcal{QL}4)\ \ \ \frac{P(\tau_i)\ {\rm for\ all}\ i}{P\left(\sum_ip_i\tau_i\right)}\qquad\qquad\qquad\qquad\qquad\ \ \ \ \ \ \ \ \ \ (\mathcal{QL}5)\ \ \ \tau_1^\ast(\tau_2^\ast(\beta))\equiv(\tau_2\tau_1)^\ast(\beta)\\
&(\mathcal{QL}6)\ \ \ \frac{\beta\vdash\beta^\prime}{\tau^\ast(\beta)\vdash\tau^\ast(\beta^\prime)}\qquad\qquad\qquad\qquad\qquad\ \ \ \ \ \ \ \ \ (\mathcal{QL}7)\ \ \ \tau_1^\ast(P(\tau_2))\equiv P(\tau_1\tau_2)\qquad\qquad\\
&(\mathcal{QL}8)\ \ \ \frac{\tau\ {\rm is\ unitary}}{\tau^\ast(\neg\beta)\equiv\neg\tau^\ast(\beta)} \qquad\qquad\qquad\ \ \ \ \ \ \ \ \ \ \ \ \ \ \ \ \ \ (\mathcal{QL}9)\ \ \ \tau^\ast(\beta_1\wedge\beta_2)\equiv\tau^\ast(\beta_1)\wedge\tau^\ast(\beta_2)\\ 
&(\mathcal{QL}10)\ \ \ \frac{\mathit{free}(\beta_i)\subseteq\mathit{var}(\tau_i)\ {\rm for}\ i=1,2\qquad \mathit{var}(\tau_1)\cap\mathit{var}(\tau_2)=\emptyset}{(\tau_1\otimes\tau_2)^\ast(\beta_1\wedge\beta_2)\equiv\tau_1^\ast(\beta_1)\wedge\tau_2^\ast(\beta_2)}\\ 
&(\mathcal{QL}11)\ \ \ \frac{\tau^\ast(\beta)\vdash\gamma\quad \tau\ {\rm is\ unitary}}{\beta\vdash(\tau^{-1})^\ast\gamma}\qquad\qquad\qquad\quad\ \ \ \ (\mathcal{QL}12)\frac{\beta\vdash\tau^\ast(\gamma)\quad \tau\ {\rm is\ unitary}}{(\tau^{-1})^\ast(\beta)\vdash\gamma}\\
&(\mathcal{QL}13)\ \ \ \frac{\tau\ {\rm is\ unitary}\quad \mathit{var}(\tau)\subseteq\mathit{free}(\beta)\setminus \overline{q}}{\tau^\ast((\forall\overline{q})\beta)\equiv(\forall\overline{q})\tau^\ast(\beta)}\qquad\quad\ \ \
(\mathcal{QL}14)\ \ \ \frac{\mathit{var}(\tau)\subseteq\overline{q}}{\Sigma\cup\{(\forall\overline{q})\beta\}\vdash\tau^\ast(\beta)}\\
&(\mathcal{QL}15)\ \ \ \frac{\Sigma\vdash\beta\quad \overline{q}\cap\mathit{free}(\beta)=\emptyset\ {\rm or}\ \mathit{free}(\Sigma)\subseteq\mathit{free}(\beta)\setminus\overline{q}}{\Sigma\vdash(\forall\overline{q})\beta}
\end{split}\end{equation*}
\caption{Axiomatic System of $\mathcal{QL}$.}\label{fig AIR}
\end{figure}

The soundness of the axiomatic system is presented in the following theorem, but its completeness is still an open problem.   

\begin{thm}[Soundness]\label{QL-sound} If $\Sigma\vdash\beta$ is provable in the axiomatic system of $\mathcal{QL}$ given in Figure \ref{fig AIR}, then $\Sigma\models\beta$. 
\end{thm}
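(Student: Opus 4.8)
The plan is to prove soundness by induction on the length of the derivation of $\Sigma\vdash\beta$, establishing that each axiom and inference rule in Figure \ref{fig AIR} preserves the semantic consequence relation $\models$. For this I would first reduce the whole problem to a collection of small, local semantic facts about the subspace semantics $\llbracket\cdot\rrbracket_\mathbb{I}$, using Lemma \ref{subspace-satisfaction} to translate each syntactic judgment $\beta\vdash\gamma$ into the subspace inclusion $\llbracket\beta\rrbracket_\mathbb{I}\subseteq\llbracket\gamma\rrbracket_\mathbb{I}$ (for single-premise/conclusion rules) and more generally $\bigcap_{\gamma\in\Sigma}\llbracket\gamma\rrbracket_\mathbb{I}\subseteq\llbracket\beta\rrbracket_\mathbb{I}$. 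The structural characterization in Theorem \ref{thm-subspace-satisfaction} is the main workhorse here: it lets me compute $\llbracket\beta\rrbracket_\mathbb{I}$ compositionally, so that verifying soundness of a rule about connectives, term-adjoints, or quantifiers becomes a matter of checking a corresponding identity or inclusion among closed subspaces built from $\llbracket\tau\rrbracket_\mathbb{I}^\ast$ and the lattice operations $\cap,\vee,{}^\perp$.

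The soundness of $(\mathcal{QL}1)$ follows immediately from the Goldblatt completeness/soundness of propositional QL (Theorem \ref{thm-QL-complete}) interpreted in the orthomodular lattice $S(\hs_\mathit{Var})$, combined with clauses (2) and (3) of Theorem \ref{thm-subspace-satisfaction} which confirm that $\neg$ and $\wedge$ are interpreted as $^\perp$ and $\cap$. The equality-driven rules $(\mathcal{QL}2)$ and $(\mathcal{QL}3)$ are handled by invoking Lemma \ref{Thm-QT-sound}: if $\mathrm{QT}_=\vdash\tau_1=\tau_2$ then $\llbracket\tau_1\rrbracket_\mathbb{I}=\llbracket\tau_2\rrbracket_\mathbb{I}$ on all relevant states, and hence $\llbracket\tau_1\rrbracket_\mathbb{I}^\ast=\llbracket\tau_2\rrbracket_\mathbb{I}^\ast$ on the relevant subspaces, giving equal subspace semantics for $P(\tau_1),P(\tau_2)$ (via clause (1) of Theorem \ref{thm-subspace-satisfaction}) and for $\tau_1^\ast(\beta),\tau_2^\ast(\beta)$ (via clause (4)). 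The convex-combination rule $(\mathcal{QL}4)$ is exactly clause (2) of Lemma \ref{lem-convex} repackaged through the term semantics for $\sum_i p_i\tau_i$. The composition laws $(\mathcal{QL}5)$ and $(\mathcal{QL}7)$, and the monotonicity rule $(\mathcal{QL}6)$, reduce to the functoriality of $\llbracket\cdot\rrbracket_\mathbb{I}^\ast$ under sequential composition and its monotonicity with respect to $\subseteq$, both of which I would read off from the definition of the Heisenberg semantics. The distributivity rule $(\mathcal{QL}9)$ and the tensor rule $(\mathcal{QL}10)$ follow from the fact that $\llbracket\tau\rrbracket_\mathbb{I}^\ast$ commutes with $\cap$ on the relevant factors, using the disjointness hypothesis $\mathit{var}(\tau_1)\cap\mathit{var}(\tau_2)=\emptyset$ together with the Coincidence Lemma \ref{term-coincide}.

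The genuinely quantum rules involving negation and unitarity, namely $(\mathcal{QL}8)$, $(\mathcal{QL}11)$, $(\mathcal{QL}12)$ and $(\mathcal{QL}13)$, are where I expect the real content to lie, and these rely critically on the unitarity hypothesis. The point is that for a general quantum operation $\llbracket\tau\rrbracket_\mathbb{I}^\ast$ need not commute with orthocomplementation, but when $\tau$ is unitary the induced map on subspaces is a lattice automorphism that preserves $^\perp$; I would derive this from the Schr\"odinger-Heisenberg Duality (Lemma \ref{lem-duality}), whose unitary special case gives $\sem{\tau}(\rho)\in X\Leftrightarrow\rho\in\semp{\tau}(X)$ and hence that $\semp{\tau}$ is invertible with inverse $\semp{\tau^{-1}}$. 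This yields $\tau^\ast(\neg\beta)\equiv\neg\tau^\ast(\beta)$ directly, and the adjunction-style rules $(\mathcal{QL}11)$, $(\mathcal{QL}12)$ follow by applying $\semp{\tau^{-1}}$ to both sides of an inclusion. For the quantifier rules, $(\mathcal{QL}14)$ and $(\mathcal{QL}15)$ are soundness of $\forall$-instantiation and $\forall$-introduction, handled by unfolding the $\bigcap_{\mathit{var}(\tau)\subseteq\overline q}$ representation in clause (5) of Theorem \ref{thm-subspace-satisfaction} and using the Coincidence Lemma \ref{lem-coincide} to manage the side conditions on free variables; instantiation is an inclusion into an intersection, and introduction uses that $\overline q$ disjoint from the free variables of $\Sigma$ (or the stated alternative) makes the premise's validity stable under all allowed term-preparations.

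The main obstacle, I expect, is the quantifier-commutation rule $(\mathcal{QL}13)$, which asserts $\tau^\ast((\forall\overline q)\beta)\equiv(\forall\overline q)\tau^\ast(\beta)$ for unitary $\tau$ with $\mathit{var}(\tau)\subseteq\mathit{free}(\beta)\setminus\overline q$. Semantically this requires commuting the lattice automorphism $\semp{\tau}$ past the infinitary intersection $\bigcap_{\mathit{var}(\sigma)\subseteq\overline q}\semp{\sigma}(\cdot)$ ranging over \emph{all} terms $\sigma$ on the quantified block $\overline q$. The subtlety is that $\tau$ acts on variables disjoint from $\overline q$, so I must argue that the families of preparations on $\overline q$ and the fixed unitary on $\mathit{var}(\tau)$ act on orthogonal tensor factors and therefore commute; the disjointness side condition is precisely what licenses this, via the tensor-factorization behaviour of the Heisenberg semantics established in the proof of the Coincidence Lemma \ref{term-coincide}. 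Making the commutation of a unitary image with an arbitrary (possibly infinite) join/meet of subspaces rigorous — rather than merely plausible by analogy with the finite classical case — is the delicate step, and I would isolate it as a separate lemma: for a unitary $U$ acting on a factor disjoint from the factor on which a family $\{Y_j\}$ of subspaces lives, $\semp{U}(\bigcap_j Y_j)=\bigcap_j \semp{U}(Y_j)$, proved by exhibiting $\semp{U}$ as conjugation by $U\otimes I$ which preserves all lattice operations including arbitrary intersections.
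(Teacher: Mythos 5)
Your overall architecture is the right one, and structurally it is the only sensible one: induction on the derivation, translation of sequents into subspace inclusions via Lemma \ref{subspace-satisfaction}, compositional computation of $\llbracket\cdot\rrbracket_\mathbb{I}$ via Theorem \ref{thm-subspace-satisfaction}, with $(\mathcal{QL}1)$ discharged by orthomodular-lattice soundness of propositional QL instantiated at $S(\hs_\mathit{Var})$, the equality rules by Lemma \ref{Thm-QT-sound}, $(\mathcal{QL}4)$ by Lemma \ref{lem-convex}(2), the unitary rules by Lemma \ref{lem-duality}, and the quantifier rules by clause (5) of Theorem \ref{thm-subspace-satisfaction} together with the coincidence lemmas. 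The paper itself states the theorem with no in-text proof (it is deferred to an appendix not reproduced here), so there is nothing to compare line by line; but your plan is what such a proof must look like, and your isolation of $(\mathcal{QL}13)$ --- commuting a unitary image past an infinitary meet indexed by all terms on a disjoint block --- as the step deserving its own lemma is exactly right.

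There is, however, one case where the argument you sketch would genuinely fail: $(\mathcal{QL}10)$. You claim the equivalence $(\tau_1\otimes\tau_2)^\ast(\beta_1\wedge\beta_2)\equiv\tau_1^\ast(\beta_1)\wedge\tau_2^\ast(\beta_2)$ follows from commutation of $\llbracket\tau\rrbracket^\ast_\mathbb{I}$ with $\cap$ plus disjointness and the Coincidence Lemma. But terms denote trace-\emph{non-increasing} operations (a measurement branch is a legitimate allowed operation), and such an operation applied on $\mathit{var}(\tau_2)$ can strictly \emph{shrink} the marginal support on the disjoint block $\mathit{var}(\tau_1)$. Coincidence combined with Monotonicity (Lemma \ref{lem-convex}(1)) therefore transports satisfaction of $\beta_1$ in one direction only --- from the state before $\tau_2$ is applied to the state after --- which yields the right-to-left entailment of $(\mathcal{QL}10)$ but not the left-to-right one. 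Concretely, take $\rho$ the Bell state on qubits $q_1q_2$, $\tau_1=Z(q_1)$, $\tau_2$ a term whose interpretation is the projection of $q_2$ onto $|0\rangle$, and let $\beta_i$ be the atomic assertion that $q_i$ lies in ${\rm span}\{|0\rangle\}$; the rule's side conditions all hold, yet $(\mathbb{I},\rho)$ satisfies the left-hand side (after $\tau_1\otimes\tau_2$ both marginals have support ${\rm span}\{|0\rangle\}$) while $\tau_1^\ast(\beta_1)$ fails (the $q_1$-marginal of $\llbracket\tau_1\rrbracket_\mathbb{I}(\rho)$ is maximally mixed). So this case cannot be closed by the argument you give: it is valid only under an extra hypothesis (e.g.\ the terms involved are unitary or trace-preserving), and a complete proof must either impose that restriction or handle the rule by a different route. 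Two smaller points of the same flavour: your treatment of $(\mathcal{QL}15)$ is fine precisely because only the ``support shrinks'' direction is needed there, but you should say so, and be aware that Lemma \ref{lem-coincide}(1) carries a term-expressivity hypothesis that the soundness theorem does not grant you for quantified formulas; and $(\mathcal{QL}5)$ as printed composes the terms as $(\tau_2\tau_1)^\ast(\beta)$, whereas Definitions \ref{def-sem-terms} and \ref{QFO-semantics} make $\tau_1^\ast(\tau_2^\ast(\beta))$ equivalent to $(\tau_1\tau_2)^\ast(\beta)$ --- your appeal to ``functoriality'' silently assumes the latter reading, which is the only one under which that axiom is sound.
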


\section{Quantum Programs}\label{sec-QP} 

The first-order logic $\mathcal{QL}$ with quantum variables was established in the previous section. Now we turn to consider how it can be used as an assertion language for quantum programs. Let us first recast the syntax and semantics of a quantum programming language in the context of $\mathcal{QL}$. We choose to use the quantum extension of \textbf{while}-language defined in \cite{Ying16} because QHL (quantum Hoare logic) can be elegantly presented upon it.    

\subsection{Syntax}\label{sec-prog-syntax}
A classical \textbf{while}-language usually builds upon a first order logic, which is used to define, e.g. the term $t$ in an assignment $x:=t$ and the logical formula $b$ in a conditional statement $\mathbf{if}\ b\ \mathbf{then}\ S_1\ \mathbf{else}\ S_2$ or a \textbf{while}-statement $\mathbf{while}\ b\ \mathbf{do}\ S\ \mathbf{od}$. In the original definition of quantum \textbf{while}-language given in \cite{Ying16}, however, the corresponding parts were left undefined formally due to the lack of an appropriate first-order logic. Now we are able to fill in this gap by incorporating logic $\mathcal{QL}$ into the syntax of quantum \textbf{while}-language. 
In this vein, the alphabet of quantum \textbf{while}-language consists of: \begin{itemize}\item[(i)] a set $\mathit{Var}$ of quantum variables $q,q_1,q_2...$;
 \item[(ii)] a set of unitary symbols $U,U_1,U_2,...$ (for example, a universal set of basic gates:  Hadamard gate $H$, phase gate $S$, $\pi/8$ gate $T$ and ${\rm CNOT}$ - controlled NOT);
 \item[(iii)] a set of measurement symbols $M,M_1,M_2,...$; and 
\item[(vi)] program constructors $:=$ (for initialisation and unitary transformations), $;$ (sequential composition), $\mathbf{if}...\mathbf{fi}$ (case statement); $\mathbf{while}...\mathbf{do}...\mathbf{od}$ (loop).\end{itemize} 

As in the alphabet of logic $\mathcal{QL}$, each variable $q\in\mathit{Var}$ is associated with a nonnegative integer $d$ or $d=\infty$ as its dimension, and each unitary symbol $U$ is associated with an $n$-tuple $(d_1,...,d_n)$ as its signature, where $d_1,...,d_n$ are nonnegative integers or $\infty$, and $n$ is a nonnegative integer, called the arity of $U$. An arity $n$ and a signature $(d_1,...,d_n)$ are also assigned to each measurement symbol $M$. In addition, a set $\mathit{out}(M)$ is assigned to measurement symbol $M$ and stands for the set of all possible outcomes of the measurement denoted by $M$. 

We now can define quantum terms over the above alphabet. To this end, let us introduce some additional quantum operation symbols:
\begin{itemize}\item For any nonnegative integer $d$ or $d=\infty$, we use $\mathbf{0}$ to denote the initialisation of a $d$-dimensional quantum variable in the basis state $|0\rangle$;
\item For each measurement symbol $M$ and for each $m\in\mathit{out}(M)$, we use $M_m$ for the operation that the measurement denoted by $M$ is performed, the outcome $m$ is observed and the state of the measured system is changed accordingly.  
\end{itemize}
Then basic terms includes:\begin{itemize}\item 
 $\mathbf{0}_d(q)$ for each $d$-dimensional quantum variable $q\in\mathit{Var}$, meaning that $q$ is initialised in state $|0\rangle$. It is often simply written as $\mathbf{0}(q)$; 
\item $U(\overline{q}),$ where variables $\overline{q}=q_1...q_n$ match the signature of $U$; that is, if $U$ has signature $(d_1,...,d_n)$, then $q_i$ is $d_i$-dimensional for each $i$;
\item $M_m(\overline{q})$, where variables $\overline{q}=q_1,...,q_n$ match the signature of $M$.
\end{itemize}
Upon them, all quantum terms can be constructed by applying the formation rules (2) - (4) in Definition \ref{def-terms}.
Furthermore, using the notion of quantum term, the syntax of quantum \textbf{while}-language can be restated in the following:   

\begin{defn}Quantum programs are defined by
the syntax:
\begin{align*}S::=\ \mathbf{skip}\ & \ |\ q:=\mathbf{0}(q)\ |\ \overline{q}:=\tau\ |\ S_1;S_2\ |\ \mathbf{if}\ \left(\square m\cdot M[\overline{q}] =m\rightarrow S_m\right)\ \mathbf{fi}\ |\ \mathbf{while}\ M[\overline{q}]=1\ \mathbf{do}\ S\ \mathbf{od}\end{align*}
where $\tau$ is a unitary term with $\mathit{var}(\tau)\subseteq\overline{q}$, and $M$ in the case statement (respectively, the loop) is a measurement symbol with $\mathrm{out}(M)=\{m\}$ (respectively, $\mathrm{out}(M)=\{0,1\}$). \end{defn}

For each program $S$, we write $\mathit{var}(S)$ for the set of quantum variables in $S$.  

\begin{rem} The state-of-the-art quantum hardware is the so-called NISQ (Noisy Intermediate Scale Quantum) devices. So, several papers (e.g. \cite{Wu19, Gu21}) have been devoted to analysis of noisy quantum programs where noise may occur in quantum gates. To include these programs, the above definition can be extended by allowing $\tau$ to be quantum terms that are not unitary.\end{rem}

\subsection{Semantics}\label{sec-prog-semantics}

Based on the semantics of first order logic $\mathcal{QL}$ defined in the last section, the semantics of quantum \textbf{while}-programs can also be formulated in a more precisely way than that originally given in \cite{Ying16}. An interpretation $\mathbb{I}$ of quantum \textbf{while}-language is given as follows:\begin{itemize}\item To each $d$-dimensional quantum variable $q$, a $d$-dimensional Hilbert space $\hs_q$ is assigned, called the state space of $q$;
\item Each unitary symbol $U$ with signature $(d_1,...,d_n)$ is interpreted as a $\prod_{i=1}^nd_i$-dimensional unitary operator $U^\mathbb{I}$; and 
 \item Each measurement symbol $M$ with signature $(d_1,...,d_n)$ and outcomes $\mathit{out}(M)$ is interpreted as a \textit{projective} measurement $M^\mathbb{I}=\{M^\mathbb{I}_m:m\in\mathit{out}(M)\}$ on the $\prod_{i=1}^nd_i$-dimensional Hilbert space, where each $M^\mathbb{I}_m$ is a projection operator.  
 \end{itemize} 

\begin{rem}  In this paper, since we only consider sharp quantum logic with projection operators (equivalently, closed subspaces) as quantum proposition, a measurement symbol $M$ is always interpreted as a projective measurement. But this does not reduce the expressive power of our programming language because a general measurement can be expressed in terms of a projective measurement together with a unitary transformation and some ancillary variables.  
\end{rem}

The semantics of quantum terms in logic $\mathcal{QL}$ can be directly applied here. Given an interpretation $\mathbb{I}$, we write $\hs_V=\bigotimes_{q\in V}\hs_q$  for the state space of the composed system of quantum variables in $V$. Then each term $\tau$ is interpreted as a mapping $\llbracket\tau\rrbracket_\mathbb{I}:\mathcal{D}(\hs_\mathit{Var})\rightarrow\mathcal{D}(\hs_\mathit{Var})$. In particular, the basic terms are interpreted as follows: for any $\rho\in\mathcal{D}(\hs_\mathit{Var})$, 
\begin{itemize}
\item $\llbracket 0(q)\rrbracket_\mathbb{I}(\rho)=\sum_i|0\rangle_q\langle i|\rho|i\rangle_q\langle 0|,$ where $\{|i\rangle\}$ is an orthonormal basis of $\hs_q$;
\item $\llbracket U(\overline{q})\rrbracket_\mathbb{I}(\rho)=(U^\mathbb{I}\otimes I)\rho((U^\mathbb{I})^\dag\otimes I),$ where $I$ is the identity operator on $\hs_{\mathit{Var}\setminus\overline{q}}$;
\item $\llbracket M_m(\overline{q})\rrbracket_\mathbb{I}(\rho)=(M_m^\mathbb{I}\otimes I)\rho((M_m^\mathbb{I})^\dag\otimes I),$ where $I$ is the same as above. 
\end{itemize} For other terms $\tau$, its semantics $\llbracket\tau\rrbracket_\mathbb{I}$ is defined using valuation rules (2) - (4) in Definition \ref{def-sem-terms}. 

Now we can define the semantics of quantum programs based on the semantics of quantum terms. 
 A configuration is defined as a pair $C=\langle S,\rho\rangle$, 
where $S$ is a program or the termination symbol $\downarrow$, and $\rho\in\mathcal{D}(\mathcal{H}_\mathit{Var})$ denotes a state of quantum variables. 

\begin{defn}The operational semantics of quantum programs is a transition relation between configurations defined by the transition rules in Figure \ref{fig QP-OP}. \begin{figure}[h]\centering
\begin{equation*}\begin{split}&({\rm Sk})\ \ \langle\mathbf{skip},\rho\rangle\rightarrow\langle \downarrow,\rho\rangle\qquad\qquad\qquad\qquad\ ({\rm In})\ \ \ \langle
q:=\mathbf{0}(q),\rho\rangle\rightarrow\langle \downarrow,\llbracket 0(q)\rrbracket_\mathbb{I}(\rho)\rangle\\
&({\rm UT})\ \ \langle\overline{q}:=\tau,\rho\rangle\rightarrow\langle
\downarrow, \llbracket\tau\rrbracket_\mathbb{I}(\rho)\rangle\qquad\qquad\ \ \ ({\rm SC})\ \ \ \frac{\langle S_1,\rho\rangle\rightarrow\langle
S_1^{\prime},\rho^{\prime}\rangle} {\langle
S_1;S_2,\rho\rangle\rightarrow\langle
S_1^{\prime};S_2,\rho^\prime\rangle}\\
&({\rm IF})\ \ \ \langle\mathbf{if}\ (\square m\cdot
M[\overline{q}]=m\rightarrow S_m)\ \mathbf{fi},\rho\rangle\rightarrow\langle
S_{m^\prime},\llbracket M_{m^\prime}(\overline{q})\rrbracket_\mathbb{I}(\rho)\rangle\ {\rm for\ every}\ m^\prime\in\mathit{out}(M)\\
&({\rm L}0)\ \ \ \langle\mathbf{while}\
M[\overline{q}]=1\ \mathbf{do}\
S\ \mathbf{od},\rho\rangle\rightarrow\langle \downarrow, \llbracket M_0(\overline{q})\rrbracket_\mathbb{I}(\rho)\rangle\\
&({\rm L}1)\ \ \ \langle\mathbf{while}\
M[\overline{q}]=1\ \mathbf{do}\ S\ \mathbf{od},\rho\rangle\rightarrow
\langle S;\mathbf{while}\ M[\overline{q}]=1\ \mathbf{do}\ S\ \mathbf{od}, \llbracket M_1(\overline{q})\rrbracket_\mathbb{I}(\rho)\rangle\end{split}\end{equation*}
\caption{Operational Semantics of Quantum Programs.}\label{fig QP-OP}
\end{figure}\end{defn}

Transition rules (Sk) and (SC) are the same as in classical programming, and others are defined directly by the basic postulates of quantum mechanics. In particular, the transitions in (IF), (L0) and (L1) are essentially probabilistic; for example, for each $m^\prime$, the transition in (IF) happens with probability $p_{m^\prime}=\mathit{tr}(\llbracket M_{m^\prime}(\overline{q})\rrbracket_\mathbb{I}(\rho))=\mathit{tr}(M_{m^\prime}^\mathbb{I}\rho),$ and the program state is changed from $\rho$ to $\rho_{m^\prime}=\llbracket M_{m^\prime}(\overline{q})\rrbracket_\mathbb{I}(\rho) /p_{m^\prime}.$
For simplicity, following a convention suggested in \cite{Selinger04}, probability $p_{m^\prime}$ and density operator $\rho_{m^\prime}$ are combined into a partial density operator $\llbracket M_{m^\prime}(\overline{q})\rrbracket_\mathbb{I}(\rho)=p_{m^\prime}\rho_{m^\prime}$. 

\begin{defn}The denotational semantics of a quantum $S$ in interpretation $\mathbb{I}$ is a mapping $\llbracket S\rrbracket_\mathbb{I}:\mathcal{D}(\mathcal{H}_\mathit{Var})\rightarrow \mathcal{D}(\mathcal{H}_\mathit{Var})$ from density operators to (partial) density operators. It is defined by \begin{equation*}\llbracket S\rrbracket_\mathbb{I}(\rho)=\sum\left\{|\rho^\prime: \langle S,\rho\rangle\rightarrow^\ast\langle \downarrow,\rho^\prime\rangle|\right\}\end{equation*} for every $\rho\in\mathcal{D}(\mathcal{H}_\mathit{Var})$, where $\rightarrow^\ast$ is the reflexive and transitive closure of transition relation $\rightarrow$ (operational semantics), and $\left\{|\cdot|\right\}$ denotes a multi-set.\end{defn}

\section{Quantum Hoare Logic with $\mathcal{QL}$ as Its Assertion Language}\label{Sec-QHL} 

After redefining the syntax and semantics of quantum program upon first order logic $\mathcal{QL}$ with quantum variables, we are able to further incorporate $\mathcal{QL}$ into quantum Hoare logic (QHL) so that $\mathcal{QL}$ can serve as an assertion logic of QHL. With the help of $\mathcal{QL}$, QHL can be described in a more elegant way; in particular, its relative completeness of QHL can be precisely formulated.

\subsection{Correctness Formulas}
As said in the Introduction, in this paper, we only consider a simplified version of quantum Hoare logic \cite{Ying11} where preconditions and postconditions are modelled as projection operators (equivalently, closed subspaces), as described in \cite{Zhou19}. Using $\mathcal{QL}$, these preconditions and postconditions can be expressed by first-order logical formulas, and thus the notion of quantum Hoare triple and the correctness of quantum programs can be precisely defined as follows. 

\begin{defn}[Hoare Triple] A Hoare triple (or correctness formula) is a Hoare triple, i.e. a statement of the form: $\{\beta\}S\{\gamma\},$ where $S$ is a quantum \textbf{while}-program over the alphabet given in Subsection \ref{sec-prog-syntax}, and both $\beta, \gamma$
are logical formulas in $\mathcal{QL}$ over the same alphabet, called the precondition and postcondition, respectively.\end{defn}

It should be pointed out that the preconditions and postconditions in both the original quantum Hoare logic \cite{Ying16} and the simplified version presented in \cite{Zhou19} were not formally defined in a logical language. Instead, they were simply assumed to be Hermitian operators or projection operators, which are mathematical objects rather than logical formulas. The reason is again that an appropriate logic for specifying them was lacking at that time. The first order logic $\mathcal{QL}$ with quantum variables introduced in Section \ref{sec-QL-new} provides us with the logical tools required in the above definition.      

Let us now define the semantics of Hoare triples. For simplicity, we only consider partial correctness in this paper. Total correctness can be treated by adding certain termination condition as in classical programming. 
\begin{defn} Given an interpretation $\mathbb{I}$ of the language defined in Subsection \ref{sec-prog-syntax}. A Hoare triple $\{\beta\}S\{\gamma\}$ is true in
the sense of partial correctness in $\mathbb{I}$, written
$\models_{\rm par}^\mathbb{I}\{\beta\}S\{\gamma\},$ if we have: 
\begin{equation}\label{par-correct}\llbracket S\rrbracket_\mathbb{I}(\llbracket\beta\rrbracket_\mathbb{I})\subseteq \llbracket\gamma\rrbracket_\mathbb{I}
\end{equation} where $\llbracket S\rrbracket_\mathbb{I}$, $\llbracket\beta\rrbracket_\mathbb{I}$ and $\llbracket\gamma\rrbracket_\mathbb{I}$ are defined as in Subsection \ref{sec-prog-semantics}, and $\llbracket S\rrbracket_\mathbb{I}(\llbracket\beta\rrbracket_\mathbb{I})$ is the image of subspace $\llbracket\beta\rrbracket_\mathbb{I}$ under super-operator $\llbracket S\rrbracket_\mathbb{I}$ as defined by equation (\ref{definition-image}). 
 \end{defn}
 
 For any $X\in\mathcal{S}(\hs_\mathit{all})$ and $\rho\in\mathcal{D}(\hs_\mathit{all})$, we say that $\rho$ belongs to $P$, written
$\rho\in P$, if $\supp(\rho)\subseteq X$.
Then condition (\ref{par-correct}) can be restated in a more intuitive way: for all 
$\rho,$ $\rho\in \llbracket\beta\rrbracket\ {\rm implies}\ \llbracket S\rrbracket_\mathbb{I} (\rho)\in \llbracket\gamma\rrbracket.$
 
 \begin{defn}Let $\Sigma$ be a set of logical formulas in the assertion language. Then a Hoare triple $\{\beta\}S\{\gamma\}$ is called a logical consequence of $\Sigma$ in the sense of partial correctness, written $\Sigma\models_\mathit{par}\{\beta\}S\{\gamma\},$ 
if for any interpretation $\mathbb{I}$, we have:
$${\rm whenever\ all\ formulas\ in}\ \Sigma\ {\rm are\ true\ in}\ \mathbb{I},\ {\rm then}\ \models_\mathit{par}^\mathbb{I}\{\beta\}S\{\gamma\}.$$
 \end{defn}

\subsection{Axiomatic System}\label{sec-QHL-axioms}
An axiomatic system of quantum Hoare logic was first presented in \cite{Ying11} for the general case where preconditions and postconditions can be any quantum predicates, i.e. Hermitian operators between the zero and identity operators. It was slightly simplified in \cite{Zhou19} for the special case where preconditions and postconditions are restricted to projection operators. 
Using $\mathcal{QL}$ as the assertion language, the axiomatic system of \cite{Zhou19} can be recasted in Figure \ref{fig axioms} in a more elegant way. Note that quantum logical connectives $\wedge,\vee$ are used in the rules (R.IF) and (R.LP), and term-adjoint formulas are used in the axiom (Ax.UT). Moreover, entailment in $\mathcal{QL}$ is employed in the rule (R.Con).  

\begin{figure}[h]\centering
\begin{equation*}\begin{split}
&({\rm Ax.Sk})\ \ \ \{\beta\}\mathbf{Skip}\{\beta\}\qquad\qquad \qquad\qquad\ \ \ \ ({\rm Ax.In})\ \ \ \left\{\mathbf{0}(q)^\ast\beta \right\}q:=|0\>\{\beta\}\\
&({\rm Ax.UT})\ \ \
\left\{\tau^\ast(\beta)\right\}\overline{q}:=\tau\{\beta\}\qquad\qquad\qquad ({\rm R.SC})\ \ \
\frac{\{\beta\}S_1\{\gamma\}\ \ \ \ \ \ \{\gamma\}S_2\{\delta\}}{\{\gamma\}S_1;S_2\{\delta\}}\\
&({\rm R.IF})\ \ \  
\frac{\left\{\beta_m\right\}S_m\{\gamma\}\ {\rm for\ all}\ m}{\left\{\bigvee_m(M_m(\overline{q})\wedge \beta_m)\right\}\mathbf{if}\ (\square m\cdot M[\overline{q}] = m \rightarrow S_m )\ \mathbf{fi}\{\gamma\}}\\
&({\rm R.LP})\ \ \
\frac{\{\beta\}S\{(M_0(\overline{q})\wedge \gamma)\vee (M_1(\overline{q})\wedge \beta)\}}{\{(M_0(\overline{q})\wedge \gamma)\vee (M_1(\overline{q})\wedge \beta)\}\mathbf{while}\
M[\overline{q}]=1\ \mathbf{do}\ S\ \mathbf{od}\{\gamma\}}\\
&({\rm R.Con})\ \ \ \frac{\beta\vdash
\beta^{\prime}\ {\rm in}\ \mathcal{QL}\ \ \ \ \{\beta^{\prime}\}S\{\gamma^{\prime}\}\ \ \ \
\gamma^{\prime}\vdash \gamma\ {\rm in}\ \mathcal{QL}}{\{\beta\}S\{\gamma\}}
\end{split}\end{equation*}
\caption{Axiomatic System of Quantum Hoare Logic (for Partial Correctness).}\label{fig axioms}
\end{figure}

It should be particularly pointed out that the axioms and inference rules of quantum Hoare logic are valid in all interpretations and not designed for any specific interpretation. Therefore, to prove the correctness of a quantum program in a specific interpretation $\mathbb{I}$ using quantum Hoare logic, one may need to call upon some logical formulas from the theory $\mathrm{Th}(\mathbb{I})$ of\ $\mathbb{I}$, i.e. the set of all logical formulas $\beta$ in $\mathcal{QL}$ that are true in $\mathbb{I}$. For example, let $q$ be a qubit variable, $H$ the Hadamard gate, and $X$ an arbitrary subspace of the $2$-dimensional Hilbert space. Then the correctness $\models_\mathit{par}\{X\}q:=H[q];q:=H[q]\{X\}$ can be verified using the axiom (Ax.UT) together with the specific property $HH=I$ of the Hadamard gate, but not by quantum Hoare logic solely.   

\subsection{Soundness and Relative Completeness}

The soundness and relative completeness of quantum Hoare logic (QHL) with general quantum predicates was established in \cite{Ying11}. Then the soundness and relative completeness of QHL for the special quantum predicates of projection operators    
, which is exactly the case considered in this paper, was derived in \cite{Zhou19} through a simple reduction from that of original QHL in \cite{Ying11}. 
However, due the lacking of a precisely defined assertion language, the soundness and relative completeness was only described in an informal way there. 
Now with the help of assertion $\mathcal{QL}$, we can present them in a formal way. As usual, the soundness is easy to prove.  

\begin{thm}[Soundness of QHL] For any set $\Sigma$ of logical formulas in the assertion language $\mathbf{QL}$, and for any Hoare triple $\{\beta\}S\{\gamma\}$:
$${\rm if}\ \Sigma\vdash_\mathit{par}\{\beta\}S\{\gamma\}\ {\rm then}\ \Sigma\models_\mathit{par}\{\beta\}S\{\gamma\}.$$
\end{thm}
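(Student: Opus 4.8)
The plan is to prove soundness by structural induction on the derivation of $\Sigma\vdash_\mathit{par}\{\beta\}S\{\gamma\}$ in the axiomatic system of Figure \ref{fig axioms}. Since the premise is that the proof uses formulas from $\Sigma$ together with the QHL rules, I would first reduce to the core semantic claim: for each axiom and inference rule, assuming all formulas in $\Sigma$ are true in an arbitrary interpretation $\mathbb{I}$, the conclusion $\models_\mathit{par}^\mathbb{I}\{\beta\}S\{\gamma\}$ holds whenever the premises do. Because $\Sigma$ enters only through the consequence rule (R.Con) via entailments $\beta\vdash\beta'$ and $\gamma'\vdash\gamma$ in $\mathcal{QL}$, I would invoke the soundness of the $\mathcal{QL}$ axiomatic system (Theorem \ref{QL-sound}) to convert such syntactic entailments into the semantic fact $\llbracket\beta\rrbracket_\mathbb{I}\subseteq\llbracket\beta'\rrbracket_\mathbb{I}$, using Lemma \ref{subspace-satisfaction} to pass between satisfaction and subspace membership.

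\textbf{Base cases and key identities.} For the axioms I would use the structural representation of subspace semantics in Theorem \ref{thm-subspace-satisfaction}. For (Ax.Sk) the claim $\llbracket\mathbf{skip}\rrbracket_\mathbb{I}(\llbracket\beta\rrbracket_\mathbb{I})\subseteq\llbracket\beta\rrbracket_\mathbb{I}$ is immediate since $\mathbf{skip}$ is the identity super-operator. For the assignment axioms (Ax.In) and (Ax.UT), the essential point is the Schr\"odinger--Heisenberg duality and clause (4) of Theorem \ref{thm-subspace-satisfaction}, which gives $\llbracket\tau^\ast(\beta)\rrbracket_\mathbb{I}=\llbracket\tau\rrbracket_\mathbb{I}^\ast(\llbracket\beta\rrbracket_\mathbb{I})$; combined with the definition of the image of a subspace under a quantum operation (equation (\ref{definition-image})) and the denotational semantics of the unitary/initialisation statement, this yields $\llbracket\overline{q}:=\tau\rrbracket_\mathbb{I}(\llbracket\tau^\ast(\beta)\rrbracket_\mathbb{I})\subseteq\llbracket\beta\rrbracket_\mathbb{I}$. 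Concretely I would verify $\E(\E^\ast(X))\subseteq X$ for the relevant super-operator $\E=\llbracket\overline{q}:=\tau\rrbracket_\mathbb{I}$ and its Heisenberg dual.

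\textbf{Inductive cases.} The sequential rule (R.SC) follows from functoriality of the denotational semantics, $\llbracket S_1;S_2\rrbracket_\mathbb{I}=\llbracket S_2\rrbracket_\mathbb{I}\circ\llbracket S_1\rrbracket_\mathbb{I}$, together with monotonicity of the image operation. For (R.IF) I would analyse the branch super-operators: the precondition $\bigvee_m\bigl(M_m(\overline{q})\wedge\beta_m\bigr)$ is, by clauses (2)--(4) of Theorem \ref{thm-subspace-satisfaction}, the join of the subspaces on which each measurement outcome $m$ steers the state into $\llbracket\beta_m\rrbracket_\mathbb{I}$, and applying each $\llbracket M_m(\overline{q})\rrbracket_\mathbb{I}$ followed by the induction hypothesis $\llbracket S_m\rrbracket_\mathbb{I}(\llbracket\beta_m\rrbracket_\mathbb{I})\subseteq\llbracket\gamma\rrbracket_\mathbb{I}$ gives the result. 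The loop rule (R.LP) is the genuinely hard part: the conclusion is a statement about the fixed point $\llbracket\mathbf{while}\,\ldots\rrbracket_\mathbb{I}$, so I would first establish that the loop invariant $I=(M_0(\overline{q})\wedge\gamma)\vee(M_1(\overline{q})\wedge\beta)$ is preserved across one iteration using the premise, then lift this to the full (possibly nonterminating) semantics. Since we only treat partial correctness, I would express $\llbracket\mathbf{while}\,\ldots\rrbracket_\mathbb{I}$ as the supremum of its finite unfoldings and argue that the invariant is preserved at every finite stage; the limit step then needs Lemma \ref{lem-convex}(3), closure of satisfaction under limits of quantum states, to conclude that the image under the limit super-operator still lands in $\llbracket\gamma\rrbracket_\mathbb{I}$. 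The main obstacle will be this loop case: controlling the infinite sum/limit in the denotational semantics and ensuring the invariant survives passage to the fixed point, which is exactly where the monotonicity, convex-combination and limit properties of Lemma \ref{lem-convex} together with the continuity of $\llbracket S\rrbracket_\mathbb{I}$ must be used carefully.
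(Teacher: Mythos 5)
Your proposal is correct and is essentially the proof the paper intends: the paper gives no argument for this theorem beyond remarking that soundness is proved ``as usual'' (deferring to \cite{Ying11,Zhou19}), and that usual argument is precisely your structural induction over the rules of Figure \ref{fig axioms}. You correctly identify all the key ingredients --- Theorem \ref{QL-sound} together with Lemma \ref{subspace-satisfaction} to discharge the $\mathcal{QL}$-entailments in (R.Con) (the only place where $\Sigma$ enters), the term-adjoint/duality semantics of Theorem \ref{thm-subspace-satisfaction} for (Ax.In) and (Ax.UT), monotonicity of the subspace image for (R.SC), and Lemma \ref{lem-convex} (2)--(3) to pass from finite unfoldings to the fixed point in the loop case.
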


As in the case of classical programming, it is easy to see that the inverse of the above soundness theorem is not true. To give a formal presentation of the relative completeness, let us introduce the following:

\begin{defn} An interpretation $\mathbb{I}$ is said to be expressive if for any program $S$ and for any logical formula $\gamma$ in $\mathcal{QL}$, there exists a logical formula $\beta$ in $\mathcal{QL}$ such that \begin{align*}
\llbracket\beta\rrbracket_\mathbb{I}&=\textrm{wlp}.\llbracket S\rrbracket_\mathbb{I}.\llbracket\gamma\rrbracket_\mathbb{I}\ {\rm (weakest\ liberal\ precondition)}\\
&=\overline{\mathrm{span}\left\{|\psi\rangle\in\hs_\mathit{all}:\llbracket S\rrbracket_\mathbb{I}(|\psi\rangle\langle\psi|)\in \llbracket\gamma\rrbracket_\mathbb{I}\right\}}.
\end{align*}
\end{defn}

With the help of the above definition, the relative completeness can be stated as the following: 

\begin{thm}[Relative Completeness of QHL] Let $\mathbb{I}$ be an expressive interpretation. Then for any Hoare triple $\{\beta\}S\{\gamma\}$:
$${\rm if}\ \models^{\mathbb{I}}_\mathit{par}\{\beta\}S\{\gamma\}\ {\rm then}\ \mathrm{Th}(\mathbb{I})\vdash_\mathit{par}\{\beta\}S\{\gamma\}$$ where $\mathrm{Th}(\mathbb{I})$ is the theory of\ $\mathbb{I}$, i.e. the set of all logical formulas $\beta$ in $\mathcal{QL}$ that are true in $\mathbb{I}$.  
\end{thm}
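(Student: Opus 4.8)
The plan is to follow the classical Cook-style relative completeness argument, adapted to the subspace (projection-valued) quantum setting, and to reduce the whole theorem to a single \emph{key lemma}: for every program $S$ and every postcondition $\gamma$, the Hoare triple $\{\mathrm{wlp}.S.\gamma\}S\{\gamma\}$ is derivable from $\mathrm{Th}(\mathbb{I})$, where $\mathrm{wlp}.S.\gamma$ denotes a $\mathcal{QL}$-formula whose subspace semantics in $\mathbb{I}$ equals the weakest liberal precondition $\mathrm{wlp}.\llbracket S\rrbracket_\mathbb{I}.\llbracket\gamma\rrbracket_\mathbb{I}$. Expressiveness of $\mathbb{I}$ is exactly what guarantees such a formula exists. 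Granting this lemma, the theorem follows quickly: if $\models^\mathbb{I}_\mathit{par}\{\beta\}S\{\gamma\}$, then by definition $\llbracket S\rrbracket_\mathbb{I}(\llbracket\beta\rrbracket_\mathbb{I})\subseteq\llbracket\gamma\rrbracket_\mathbb{I}$, which by the defining (largest-subspace) property of the weakest precondition unwinds to $\llbracket\beta\rrbracket_\mathbb{I}\subseteq\llbracket\mathrm{wlp}.S.\gamma\rrbracket_\mathbb{I}$. Thus the entailment $\beta\vdash\mathrm{wlp}.S.\gamma$ holds in $\mathbb{I}$ and so is available as a premise of the consequence rule once the oracle $\mathrm{Th}(\mathbb{I})$ is in place. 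Combining this entailment with the key-lemma derivation $\{\mathrm{wlp}.S.\gamma\}S\{\gamma\}$ via (R.Con) yields $\{\beta\}S\{\gamma\}$.

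I would prove the key lemma by structural induction on $S$. The base cases are immediate, since the weakest precondition there coincides, up to a $\mathbb{I}$-valid equivalence supplied by $\mathrm{Th}(\mathbb{I})$, with the precondition written into the corresponding axiom: $\mathrm{wlp}.\mathbf{skip}.\gamma\equiv\gamma$ for (Ax.Sk); $\mathbf{0}(q)^\ast(\gamma)$ for (Ax.In); and $\tau^\ast(\gamma)$ for (Ax.UT), the latter two being exactly the term-adjoint forms dictated by the Schr\"{o}dinger--Heisenberg duality (Lemma \ref{lem-duality}). For sequential composition I would use the compositional identity $\mathrm{wlp}.(S_1;S_2).\gamma\equiv\mathrm{wlp}.S_1.(\mathrm{wlp}.S_2.\gamma)$, apply the induction hypothesis to each factor, and glue the two derivations with (R.SC). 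For the case statement I would take the premises $\{\mathrm{wlp}.S_m.\gamma\}S_m\{\gamma\}$ furnished by the induction hypothesis and feed them into (R.IF), obtaining the precondition $\bigvee_m(M_m(\overline{q})\wedge\mathrm{wlp}.S_m.\gamma)$; I then verify, at the level of subspaces in $\mathbb{I}$, that this join equals $\mathrm{wlp}.(\mathbf{if}\dots\mathbf{fi}).\gamma$, so that (R.Con) with the corresponding $\mathrm{Th}(\mathbb{I})$-equivalence rewrites the precondition as needed.

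The genuinely hard case, as in the classical theory, is the loop $W=\mathbf{while}\ M[\overline{q}]=1\ \mathbf{do}\ S\ \mathbf{od}$. Here I would take the invariant $\theta:=\mathrm{wlp}.W.\gamma$ and set $\beta:=\mathrm{wlp}.S.\theta$, the crux being to establish the fixpoint identity $\theta\equiv(M_0(\overline{q})\wedge\gamma)\vee(M_1(\overline{q})\wedge\beta)$ as a subspace equality valid in $\mathbb{I}$. Once this is in hand, the induction hypothesis gives derivability of $\{\beta\}S\{\theta\}$, rule (R.LP) yields $\{(M_0(\overline{q})\wedge\gamma)\vee(M_1(\overline{q})\wedge\beta)\}W\{\gamma\}$, and (R.Con) together with the fixpoint equivalence rewrites the precondition to $\theta$, closing the case. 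I expect this fixpoint identity to be the main obstacle. Its proof rests on the limit characterization of the loop: the denotational semantics of $W$ is the supremum of its finite unfoldings, so $\mathrm{wlp}.W.\gamma$ is the intersection of the weakest preconditions of those unfoldings, and the displayed equivalence expresses one step of this unfolding. Justifying it requires showing that taking images under the projective super-operators for $M_0,M_1$, the join $\vee$, and the weakest-precondition operator for $S$ all commute appropriately with monotone limits of states. The Monotonicity, Convex-combination and Limit lemma (Lemma \ref{lem-convex}) is the precise tool for controlling these limits, and its careful deployment in the possibly infinite-dimensional case, where joins and supports are not automatically continuous, is where the delicate work lies.
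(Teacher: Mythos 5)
Your proposal is correct and takes essentially the same approach as the paper: the paper's proof is only a one-line outline deferring to the Cook-style weakest-liberal-precondition arguments of \cite{Ying11,Zhou19} reformulated in $\mathcal{QL}$, and your plan --- expressiveness supplying a $\mathcal{QL}$-formula for $\mathrm{wlp}.S.\gamma$, the key lemma $\{\mathrm{wlp}.S.\gamma\}S\{\gamma\}$ proved by structural induction with the loop fixpoint identity as the crux, and (R.Con) with the oracle $\mathrm{Th}(\mathbb{I})$ to conclude --- is precisely that standard argument.
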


\begin{proof} (Outline) It suffices to reformulate the proof of relative completeness given in \cite{Ying11,Zhou19} in the logical language of $\mathcal{QL}$. \end{proof}

An interesting problem that remains open is to determine a set of quantum operations (including unitary operators and measurement) commonly used in quantum computing (e.g. Hadamard gate, CNOT, measurement in the computational basis) that is expressible. This problem seems not easy if we consider not only finite-dimensional quantum variables but also infinite-dimensional ones with the Hilbert space defined in (\ref{infinite-Hilbert}). 

\section{Applications}\label{sec-Applications}

As applications of the theoretical results developed in the previous sections, in this section, we give some examples to demonstrate how can first order logic $\mathcal{QL}$ with quantum variables as an assertion logic and QHL as a program logic can work together to reason about quantum programs in a more convenient and economic way.   

\subsection{Adaptation Rules}\label{sec-adaptation-rules}

As is well-known \cite{Apt09,Apt19} and discussed in Section \ref{Intro}, adaptation rules can often significantly simplify verification of classical programs and are expected to play the same role for quantum programs. The consequence rule (R.Con) in QHL given in the last subsection is an adaptation rule. In this subsection, as the first application of $\mathcal{QL}$, we show how it can be used to define more adaptation rules for quantum programs. To this end, let us first introduce:

\begin{defn}Let $\mathbb{I}$ be an interpretation. Then:\begin{enumerate}\item We say that a quantum program $S$ terminates in $\mathbb{I}$, written $\mathbb{I}\models S:{\rm Term}$, if $\llbracket S\rrbracket_\mathbb{I}(I)=I$, where $I$ is the identity operator. 
\item A quantum program $S$ is called term representable in $\mathbb{I}$ if there exists a quantum term $\tau$ such that $\mathit{var}(\tau)\subseteq\mathit{var}(S)$ and $\llbracket S\rrbracket_\mathbb{I}(\llbracket\tau \rrbracket_\mathbb{I}^\ast(X))=X$ for any $X\in S(\hs_{\mathit{var}(S)}).$   
\end{enumerate}
\end{defn}

A large number of adaptation rules have been introduced for classical programs in the literature. Here, we only generalise some of the most popular presented in Section 3.8 of \cite{Apt09} and Subsection 5.1 of \cite{Apt19} to the quantum case as examples showing the applicability of assertion logic $\mathcal{QL}$. The quantum generalisations of these rules are presented in Figure \ref{fig auxi-rules}. It should be noticed that the connectives of conjunction and disjunction in Birkhoff-von Neumann quantum logic are employed in the rules (Invariance), (Conjunction) and (Disjunction), and the term-adjoint formulas and quantifiers over quantum variables newly introduced in this paper are used in the rules ($\exists$-Introduction) and (Hoare Adaptation). In particular, the Hoare adaptation rule is crucial for reasoning about procedure calls and recursion \cite{Hoare71}, and has been extended for reasoning about method calls in object-oriented programs \cite{Apt19}. We expect that its quantum generalisation given in Figure \ref{fig auxi-rules} will play a similar role in quantum programming.    

\begin{figure}[h]\centering
\begin{equation*}\begin{split}
&({\rm Invariance})\ \ \ \frac{\{\beta\}S\{\gamma\}\quad \mathit{free}(\delta)\cap\mathit{var}(S)=\emptyset}{\{\beta\wedge\delta\}S\{\gamma\wedge\delta\}}\qquad ({\rm Substitution})\ \ \ \frac{\{\beta\}S\{\gamma\}\quad \mathit{var}(\tau)\cap\mathit{var}(S)=\emptyset}{\left\{\tau^\ast(\beta)\right\}S\left\{\tau^\ast(\gamma)\right\}}\\
&({\rm Conjunction})\ \ \  
\frac{\left\{\beta_1\right\}S\left\{\gamma_1\right\}\quad \left\{\beta_2\right\}S\left\{\gamma_2\right\}}{\left\{\beta_1\wedge\beta_2\right\}S\left\{\gamma_1\wedge\gamma_2\right\}}\qquad\qquad  
({\rm Disjunction})\ \ \
\frac{\left\{\beta_1\right\}S\{\gamma\}\quad \left\{\beta_2\right\}S\{\gamma\}}{\left\{\beta_1\vee\beta_2\right\}S\{\gamma\}}\\
&(\exists{\rm -Introduction})\ \ \ \frac{\{\beta\}S\{\gamma\}\quad \overline{q}\cap[\mathit{var}(S)\cap\mathit{free}(\gamma)]=\emptyset\quad S\ {\rm terminates}}{\{(\exists \overline{q})\beta\}S\{\gamma\}}\\
&({\rm Hoare\ Adaptation})\ \ \ \frac{\begin{split}\{\beta\}S\{\gamma\}\quad&\mathit{var}(S)\subseteq\overline{p}\quad \overline{q}=\mathit{free}(\beta)\cup\mathit{free}(\gamma)\setminus[\mathit{free}(\delta)\cup\overline{p}]\\ &S\ {\rm is\ term\ representable}\end{split}}{\{(\exists\overline{q})[\beta\wedge(\forall\overline{p})(\gamma\rightarrow\delta)]\}S\{\delta\}}
\end{split}\end{equation*}
\caption{Auxiliary Rules}\label{fig auxi-rules}
\end{figure}

\begin{thm}[Soundness of Adaptation Rules]\label{thm-auxi-rules} All of the proof rules in Figure \ref{fig auxi-rules} are sound in the sense of partial correctness.
\end{thm}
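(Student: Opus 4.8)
The plan is to prove each rule semantically, reducing validity of the conclusion triple to that of its premise(s) through the subspace representation of Theorem~\ref{thm-subspace-satisfaction}. First I would record a state-level reformulation of partial correctness: combining Lemma~\ref{subspace-satisfaction} with the monotonicity and convexity of Lemma~\ref{lem-convex}, the relation $\models^{\mathbb{I}}_\mathit{par}\{\beta\}S\{\gamma\}$ holds iff for every $\rho$, $(\mathbb{I},\rho)\models\beta$ implies $(\mathbb{I},\sem{S}(\rho))\models\gamma$. Alongside this I would isolate three structural facts about the denotational semantics $\sem{S}$, which is a trace-nonincreasing completely positive map: (i) the induced map on $S(\hs_\mathit{Var})$, $X\mapsto\supp\sem{S}(P_X)$, is monotone and join-preserving, since the support of a CP-image depends only on the support of its argument; (ii) $\sem{S}$ is localised, $\sem{S}=\mathcal{S}\otimes\mathcal{I}_{\mathit{Var}\setminus\mathit{var}(S)}$, so it commutes with $\sem{\tau}$ whenever $\mathit{var}(\tau)\cap\mathit{var}(S)=\emptyset$; and (iii) for any set $W$ of variables disjoint from $\mathit{var}(S)$ the restriction $\sem{S}(\rho)\downarrow W$ has support inside $\supp(\rho\downarrow W)$, with equality when $S$ terminates (so that $\sem{S}$ is trace-preserving).

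With these in hand the first four rules are routine. For (Conjunction) I use $\sem{\beta_1\wedge\beta_2}=\sem{\beta_1}\cap\sem{\beta_2}$ (Theorem~\ref{thm-subspace-satisfaction}(3)) and monotonicity of the image to get $\sem{S}(\sem{\beta_1}\cap\sem{\beta_2})\subseteq\sem{S}(\sem{\beta_1})\cap\sem{S}(\sem{\beta_2})\subseteq\sem{\gamma_1}\cap\sem{\gamma_2}$. For (Disjunction) I use $\sem{\beta_1\vee\beta_2}=\sem{\beta_1}\vee\sem{\beta_2}$ together with join-preservation, giving $\sem{S}(\sem{\beta_1}\vee\sem{\beta_2})=\sem{S}(\sem{\beta_1})\vee\sem{S}(\sem{\beta_2})\subseteq\sem{\gamma}$. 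For (Substitution), clause (4) of Definition~\ref{QFO-semantics} turns $(\mathbb{I},\rho)\models\tau^\ast(\beta)$ into $(\mathbb{I},\sem{\tau}(\rho))\models\beta$; the premise then yields $(\mathbb{I},\sem{S}(\sem{\tau}(\rho)))\models\gamma$, and commutation (ii) rewrites this as $(\mathbb{I},\sem{\tau}(\sem{S}(\rho)))\models\gamma$, i.e. $(\mathbb{I},\sem{S}(\rho))\models\tau^\ast(\gamma)$. For (Invariance) I combine the premise with fact (iii) and the coincidence Lemma~\ref{lem-coincide}: since $\mathit{free}(\delta)\cap\mathit{var}(S)=\emptyset$, running $S$ leaves the support of the $\mathit{free}(\delta)$-restriction unchanged (or shrinks it), so $(\mathbb{I},\rho)\models\delta$ is inherited by $\sem{S}(\rho)$ via the monotonicity of Lemma~\ref{lem-convex}.

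The quantifier rules are the substantive part. For ($\exists$-Introduction) I would expand $\sem{(\exists\overline{q})\beta}$ using (\ref{sem-exists}) and the Schr\"{o}dinger--Heisenberg duality of Lemma~\ref{lem-duality}, writing it as a join $\bigvee_{\mathit{var}(\tau)\subseteq\overline{q}}\bigl(\semp{\tau}(\sem{\beta}^\perp)\bigr)^\perp$ of the $\tau$-preimages $\{\sigma:\sem{\tau}(\sigma)\in\sem{\beta}\}$ of $\sem{\beta}$. Pushing $\sem{S}$ through this join (fact (i)) reduces the goal to a single $\tau$: for $\rho$ with $\sem{\tau}(\rho)\in\sem{\beta}$ the premise gives $(\mathbb{I},\sem{S}(\sem{\tau}(\rho)))\models\gamma$, and I would then use the side condition on $\overline{q}$, commutation (ii), and --- crucially --- termination of $S$ through fact (iii) to transport satisfaction of $\gamma$ back to $\sem{S}(\rho)$ via the coincidence lemma on $\mathit{free}(\gamma)$. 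For (Hoare Adaptation) I would invoke term-representability to fix a term $\tau$ with $\mathit{var}(\tau)\subseteq\mathit{var}(S)\subseteq\overline{p}$ and $\sem{S}\circ\semp{\tau}=\mathrm{id}$ on $S(\hs_{\mathit{var}(S)})$; instantiating the $(\forall\overline{p})(\gamma\rightarrow\delta)$ conjunct of the precondition at this $\tau$ (Theorem~\ref{thm-subspace-satisfaction}(5)) relates the $S$-image of the $\gamma$-region to $\sem{\delta}$, and combining it with the premise $\{\beta\}S\{\gamma\}$ along the existential witness $\overline{q}$ should yield $\sem{S}(\rho)\in\sem{\delta}$.

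The main obstacle will be the Hoare adaptation rule, and to a lesser degree $\exists$-introduction: unlike the propositional rules, these do not reduce to a single algebraic identity but require threading the quantifier semantics through the image map, using term-representability to \emph{invert} $S$ in the Heisenberg picture, and exploiting the side conditions together with termination simultaneously. The two technical lemmas doing the heavy lifting --- join-preservation of the program's image map, and the exact preservation of a restriction by a terminating (trace-preserving) program --- are where I expect the real care to be needed, and the soundness of these two rules is delicate precisely because it hinges on the interaction of their side conditions with termination and term-representability.
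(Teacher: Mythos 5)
Your plan follows the same architecture as the paper's proof of (Hoare Adaptation): work in the subspace semantics of Theorem~\ref{thm-subspace-satisfaction}, expand the existential precondition as a join of $\llbracket\tau\rrbracket^\ast_\mathbb{I}$-images, push $\llbracket S\rrbracket_\mathbb{I}$ through the join, commute it with $\llbracket\tau\rrbracket^\ast_\mathbb{I}$ by variable disjointness, and then use term representability together with instantiation of the universal quantifier at the representing term; the propositional rules are dispatched in the paper's appendix by essentially the algebra you describe. However, there is a genuine gap at the one step you leave as ``should yield $\sem{S}(\rho)\in\llbracket\delta\rrbracket_\mathbb{I}$,'' and it is exactly the step the paper singles out as the essential one. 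The connective $\rightarrow$ in $(\forall\overline{p})(\gamma\rightarrow\delta)$ is not classical implication: by Remark~\ref{remark-Sasaki} it is the Sasaki implication $a\rightarrow b=a^\perp\vee(a\wedge b)$, and discharging it requires three facts your proposal never mentions: (a) sub-distributivity of the image map over this implication, $\llbracket S\rrbracket_\mathbb{I}\bigl(\llbracket\sigma_0\rrbracket^\ast_\mathbb{I}(\llbracket\gamma\rrbracket_\mathbb{I}\rightarrow\llbracket\delta\rrbracket_\mathbb{I})\bigr)\subseteq \llbracket S\rrbracket_\mathbb{I}\bigl(\llbracket\sigma_0\rrbracket^\ast_\mathbb{I}(\llbracket\gamma\rrbracket_\mathbb{I})\bigr)\rightarrow \llbracket S\rrbracket_\mathbb{I}\bigl(\llbracket\sigma_0\rrbracket^\ast_\mathbb{I}(\llbracket\delta\rrbracket_\mathbb{I})\bigr)$ (the paper's Lemma A.5(4)); (b) the use of term representability not merely to ``invert $S$'' but to rewrite $\llbracket\gamma\rrbracket_\mathbb{I}=\llbracket S\rrbracket_\mathbb{I}\bigl(\llbracket\sigma_0\rrbracket^\ast_\mathbb{I}(\llbracket\gamma\rrbracket_\mathbb{I})\bigr)$ and $\llbracket\delta\rrbracket_\mathbb{I}=\llbracket S\rrbracket_\mathbb{I}\bigl(\llbracket\sigma_0\rrbracket^\ast_\mathbb{I}(\llbracket\delta\rrbracket_\mathbb{I})\bigr)$, so that the intersection $\llbracket\gamma\rrbracket_\mathbb{I}\cap\llbracket S\rrbracket_\mathbb{I}\bigl(\llbracket(\forall\overline{p})(\gamma\rightarrow\delta)\rrbracket_\mathbb{I}\bigr)$ takes literally the shape $a\wedge(a\rightarrow b)$; and (c) the quantum modus ponens law $a\wedge(a\rightarrow b)\leq b$ for the Sasaki implication, which is guaranteed by ortho-modularity of $S(\hs)$ (Theorem~\ref{thm-Sasaki}) and fails in a general ortholattice or for other choices of implication. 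Without (a)--(c) your chain of inclusions does not close; this is what the paper means when it says the laws of Birkhoff-von Neumann quantum logic, ortho-modularity in particular, are used ``in an essential way,'' and it is precisely the content your proposal black-boxes.

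A second, smaller defect: in ($\exists$-Introduction) you invoke your commutation fact (ii), which needs $\mathit{var}(\tau)\cap\mathit{var}(S)=\emptyset$. But the rule's side condition is only $\overline{q}\cap[\mathit{var}(S)\cap\mathit{free}(\gamma)]=\emptyset$, so a term $\tau$ with $\mathit{var}(\tau)\subseteq\overline{q}$ may act on variables of $S$ (namely those not free in $\gamma$), and then $\llbracket S\rrbracket_\mathbb{I}$ and $\llbracket\tau\rrbracket_\mathbb{I}$ need not commute. The argument there has to run through the coincidence property on $\mathit{free}(\gamma)$ (Lemma~\ref{lem-coincide}) together with termination of $S$; you gesture at both, but they cannot rescue a commutation step that the side condition does not license.
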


\begin{proof} Here, we choose to prove the soundness of (Hoare Adaptation) because several key laws in Birkhoff-von Neumann quantum logic are used in the proof in an essential way, including the ortho-modularity (see Theorem \ref{thm-Sasaki}). The soundness of other rules are proved in the Appendix. Assume that $\models_\mathit{par}\{\beta\}S\{\gamma\}$, $\mathit{var}(S)\subseteq\overline{p}$ and $\overline{q}=\mathit{free}(\beta)\cup\mathit{free}(\gamma)\setminus[\mathit{free}(\delta)\cup\overline{p}].$
We want to prove \begin{equation}\label{adapt-1f}\models_\mathit{par}\{(\exists\overline{q})[\beta\wedge(\forall\overline{p})(\gamma\rightarrow\delta)]\}S\{\delta\}.\end{equation} 

First, we have: for any interpretation $\mathbb{I}$,
\begin{align*}\llbracket S\rrbracket_\mathbb{I} (\llbracket (\exists\overline{q})[\beta\wedge(\forall\overline{p})(\gamma\rightarrow\delta)]\rrbracket_\mathbb{I} &= \llbracket S\rrbracket_\mathbb{I}\left(\bigvee_{\mathit{var}(\tau)\subseteq\overline{q}}\llbracket\tau \rrbracket_\mathbb{I}^\ast \left(\llbracket\beta \rrbracket_\mathbb{I}\cap \llbracket(\forall\overline{p})(\gamma\rightarrow\delta)\rrbracket_\mathbb{I}\right)\right)\\
&=\bigvee_{\mathit{var}(\tau)\subseteq\overline{q}}\llbracket S\rrbracket_\mathbb{I}\left(\llbracket\tau \rrbracket_\mathbb{I}^\ast(\llbracket\beta \rrbracket_\mathbb{I}\cap \llbracket(\forall\overline{p})(\gamma\rightarrow\delta) \rrbracket_\mathbb{I})\right)
\end{align*}
where the last equality is derived using Lemma A.5 (2). Therefore, it suffices to show that for any quantum term $\tau$ with $\mathit{var}(\tau)\subseteq\overline{q}$,
$\llbracket S\rrbracket_\mathbb{I}\left(\llbracket\tau \rrbracket_\mathbb{I}^\ast(\llbracket\beta \rrbracket_\mathbb{I}\cap \llbracket(\forall\overline{p})(\gamma\rightarrow\delta) \rrbracket_\mathbb{I})\right)\subseteq 
\llbracket\delta\rrbracket_\mathbb{I}.$ We observe: \begin{align}\label{adapt-3}\llbracket S\rrbracket_\mathbb{I}\left(\llbracket\tau \rrbracket_\mathbb{I}^\ast(\llbracket\beta \rrbracket_\mathbb{I}\cap \llbracket(\forall\overline{p})(\gamma\rightarrow\delta) \rrbracket_\mathbb{I})\right)&=\llbracket\tau \rrbracket_\mathbb{I}^\ast\left(\llbracket S\rrbracket_\mathbb{I}(\llbracket\beta \rrbracket_\mathbb{I}\cap \llbracket(\forall\overline{p})(\gamma\rightarrow\delta) \rrbracket_\mathbb{I})\right)\\ \label{adapt-4}&\subseteq \llbracket\tau \rrbracket_\mathbb{I}^\ast\left(\llbracket S\rrbracket_\mathbb{I}(\llbracket\beta \rrbracket_\mathbb{I}) \cap \llbracket S\rrbracket_\mathbb{I}(\llbracket(\forall\overline{p})(\gamma\rightarrow\delta) \rrbracket_\mathbb{I})\right)\\ \label{adapt-5}&\subseteq \llbracket\tau \rrbracket_\mathbb{I}^\ast\left(\llbracket\gamma \rrbracket_\mathbb{I} \cap \llbracket S\rrbracket_\mathbb{I}(\llbracket(\forall\overline{p})(\gamma\rightarrow\delta) \rrbracket_\mathbb{I})\right)\end{align}
Here, (\ref{adapt-3}) is true because $\mathit{var}(S)\cap\mathit{var}(\tau)=\emptyset$, which is implied by the assumption that $\mathit{var}(S)\subseteq\overline{p}$ and $\mathit{var}(\tau)\subseteq\overline{q}=\mathit{free}(\beta)\cup\mathit{free}(\gamma)\setminus[\mathit{free}(\delta)\cup\overline{p}];$ (\ref{adapt-4}) is obtained using Lemma A.5 (2); and (\ref{adapt-5}) is derived from the assumption $\models_\mathit{par}\{\beta\}S\{\gamma\}$, i.e.   
$\llbracket S\rrbracket_\mathbb{I}(\llbracket\beta \rrbracket_\mathbb{I})\subseteq\llbracket\gamma \rrbracket_\mathbb{I}.$

Let us assume that $S$ is term representable in $\mathbb{I}$. Then there exists a quantum term $\sigma_0$ such that $\mathit{var}(\sigma_0)\subseteq\mathit{var}(S)$ and 
$\llbracket S\rrbracket_\mathbb{I}(\llbracket\sigma_0 \rrbracket_\mathbb{I}(X))=X$ for all $X$. Furthermore, we have:
\begin{align*}\llbracket S\rrbracket_\mathbb{I}(\llbracket (\forall\overline{p})(\gamma\rightarrow\delta)\rrbracket_\mathbb{I}&=\llbracket S\rrbracket_\mathbb{I}\left(\bigcap_{\mathit{var}(\sigma)\subseteq\overline{p}}\llbracket \sigma\rrbracket_\mathbb{I}^\ast(\llbracket \gamma\rightarrow\rrbracket_\mathbb{I})\right)\subseteq \llbracket S\rrbracket_\mathbb{I}\left(\llbracket\sigma_0 \rrbracket_\mathbb{I}^\ast(\llbracket\gamma\rightarrow\delta \rrbracket_\mathbb{I})\right)
\\ &= \llbracket S\rrbracket_\mathbb{I}\left(\llbracket\sigma_0 \rrbracket_\mathbb{I}^\ast(\llbracket\gamma\rrbracket_\mathbb{I}\rightarrow\llbracket\delta \rrbracket_\mathbb{I})\right)
\subseteq \llbracket S\rrbracket_\mathbb{I}\left(\llbracket\sigma_0 \rrbracket_\mathbb{I}^\ast(\llbracket\gamma\rrbracket_\mathbb{I})\right)\rightarrow\llbracket S\rrbracket_\mathbb{I}\left(\llbracket\sigma_0 \rrbracket_\mathbb{I}^\ast(\llbracket\delta \rrbracket_\mathbb{I})\right).
\end{align*} Here, the last inclusion is derived by Lemma A.5 (4). It follows that 
\begin{equation}\label{adapt-6}\begin{split}&\llbracket\gamma \rrbracket_\mathbb{I} \cap \llbracket S\rrbracket_\mathbb{I}(\llbracket(\forall\overline{p})(\gamma\rightarrow\delta) \rrbracket_\mathbb{I})=\llbracket S\rrbracket_\mathbb{I}\left(\llbracket \sigma_0\rrbracket_\mathbb{I}^\ast(\llbracket\gamma \rrbracket_\mathbb{I})\right) \cap \llbracket S\rrbracket_\mathbb{I}(\llbracket(\forall\overline{p})(\gamma\rightarrow\delta) \rrbracket_\mathbb{I})\\ &\subseteq \llbracket S\rrbracket_\mathbb{I}\left(\llbracket \sigma_0\rrbracket_\mathbb{I}^\ast(\llbracket\gamma \rrbracket_\mathbb{I})\right) \cap
\left[\llbracket S\rrbracket_\mathbb{I}\left(\llbracket\sigma_0 \rrbracket_\mathbb{I}^\ast(\llbracket\gamma\rrbracket_\mathbb{I})\right)\rightarrow\llbracket S\rrbracket_\mathbb{I}\left(\llbracket\sigma_0 \rrbracket_\mathbb{I}^\ast(\llbracket\delta \rrbracket_\mathbb{I})\right)\right]\\
&\subseteq \llbracket S\rrbracket_\mathbb{I}\left(\llbracket\sigma_0 \rrbracket_\mathbb{I}^\ast(\llbracket\delta \rrbracket_\mathbb{I})\right)=\llbracket\delta \rrbracket_\mathbb{I}.
\end{split}\end{equation} 
Note that the second inclusion in (\ref{adapt-6}) is derived by the fact that the Sasaki implication $\rightarrow$ satisfies $a\wedge (a\rightarrow b)\leq b$ (see Remark \ref{remark-Sasaki}), which is in turn guaranteed by the Ortho-modularity (see Theorem \ref{thm-Sasaki}). Substituting (\ref{adapt-6}) into (\ref{adapt-5}), we obtain:
$\llbracket S\rrbracket_\mathbb{I}\left(\llbracket\tau \rrbracket_\mathbb{I}^\ast(\llbracket\beta \rrbracket_\mathbb{I}\cap \llbracket(\forall\overline{p})(\gamma\rightarrow\delta) \rrbracket_\mathbb{I})\right)\subseteq \llbracket\tau \rrbracket_\mathbb{I}^\ast\left(\llbracket\delta \rrbracket_\mathbb{I} \right)=\llbracket\delta \rrbracket_\mathbb{I}$ because $\mathit{var}(\tau)\subseteq\overline{q}=\mathit{free}(\beta)\cup\mathit{free}(\gamma)\setminus[\mathit{free}(\delta)\cup\overline{p}]$ and thus $\mathit{var}(\tau)\cap\mathit{free}(\delta)=\emptyset.$ Thus, we compete the proof.
\end{proof}


\subsection{Quantum Assertions in Runtime Checking}

Let us now turn to consider a practical application. As discussed in the Introduction, a runtime checking scheme was developed in \cite{Liu21} to assert that a program variable is in the subspace spanned by a set of quantum states, say $|\psi_1\rangle,...,|\psi_n\rangle$. Unfortunately, this scheme was interpreted incorrectly as to assert that the program variable is in one of states $|\psi_1\rangle,...,|\psi_n\rangle$ due to the misunderstanding of logical connective (disjunction) in quantum logic. By adopting the formally defined Birkhoff-von Neumann quantum logic as an assertion logic, as suggested in this paper, such a confusion between quantum disjunction and its classical counterpart can be easily avoided.   

\begin{exam}\label{q-connectives} Let $\hs$ be the Hilbert space of a quantum program variable $q$. For each $|\psi\rangle\in\hs$, we write $[\psi]$ for the atomic proposition that variable $q$ is in the $1$-dimensional closed subspace of $\hs$ generated by the single state $|\psi\rangle$. Then the proposition that $q$ is in the closed subspace generated by a family of states $|\psi_1\rangle,...,|\psi_n\rangle$ can be written as $\bigvee_{i=1}^n[\psi_i]=[\psi_1,...,\psi_n].$ 
Moreover, if $\{|\psi_0\rangle,|\psi_1\rangle,...,|\psi_n\rangle\}$ is an orthonormal basis of $\hs$, then \begin{equation}\label{example-negation}[\psi_1,...,\psi_n]=\neg[\psi_0],\end{equation} and more generally, $[\psi_{k},\psi_{k+1},...,\psi_n]=\neg \bigvee_{i=0}^{k-1}[\psi_i]$ for any $1\leq k<n$.    
\end{exam}

The benefit of employing Birkhoff-von Neumann quantum logic as an assertion logic for quantum programs is not limited to precise specification of quantum assertions. Indeed, it also often provides a more economic way to specify quantum assertions. For example, as pointed out in the Introduction, if $q$ denotes an $N$-qubit system, then without a logical language the quantum assertion (\ref{example-negation}) needs to be stored as a $2^N\times 2^N$ matrix in implementation. But the logical representation given as the right-hand side of (\ref{example-negation}) is much more compact. 

Moreover, the adaptation rules in Figure \ref{fig auxi-rules} are often helpful for quantum assertion checking. For example, suppose we want to assert that the output of a quantum program $S$ is always in a subspace $Y$ of its state space $\hs$ for all inputs from a subspace $X$ of $\hs$; that is, $\models_\mathit{par}\{X\}S\{Y\}$. We choose a basis $|\psi_1\rangle,...,|\psi_n\rangle$ of $X$. Then $X=\bigvee_{i=1}^n[\psi_i]$, and by rule (Disjunction), it suffices to check $\models_\mathit{par}\{[\psi_i]\}S\{Y\}$ for $i=1,...,n$.   
We even expect that automatic tools for quantum assertion checking can be implemented based on the logical mechanism developed in this paper. 



\section{Conclusion}\label{sec-Con} 

In this paper, we defined an extension of Birkhoff-von Neumann quantum logic, namely, a first-order logic $\mathcal{QL}$ with quantum variables, as an assertion language for quantum programs. In particular, $\mathcal{QL}$ was incorporated into quantum Hoare logic (QHL) so that the relative completeness of QHL can be formulated in a more formal way than that in the previous literature, and a series of adaptation rules can be derived to ease the verification, analysis and runtime checking of quantum programs. But several interesting problems about $\mathcal{QL}$ itself as well as the combination of $\mathcal{QL}$ and QHL are still unsolved:   

{\vskip 3pt}

\textbf{More Quantifiers over Quantum Variables}: The quantification over quantum variables in $\mathcal{QL}$ is defined by allowed quantum operations on the quantified variables (see equation (\ref{quantum-quantifier}), clause (5) in Definition \ref{QFO-semantics} and clause (4) in the definition of the semantics of QT$_=$ in Subsection \ref{subsec-equational}). But there are some other interesting ways to introduce quantifiers in $\mathcal{QL}$. For example, a universally quantified formula $(\forall \overline{q})\beta$ with quantum variables $\overline{q}$ can be interpreted according to different levels of the correlation between $\overline{q}$ and other quantum variables:   
\begin{itemize}\item \textit{Product quantification}: $(\mathbb{I},\rho)\models (\forall_p\overline{q})\beta\ {\rm iff}\  (\mathbb{I}, \sigma\otimes\rho\downarrow(\mathit{Var}\setminus\overline{q})\models\beta\ {\rm for\ any}\ \sigma\in\mathcal{D}(\hs_{\overline{q}}).$
\item \textit{Separation quantification}: $(\mathbb{I},\rho)\models (\forall_s\overline{q})\beta\ {\rm iff}\ \left(\mathbb{I},\sum_i(\sigma_i\otimes\rho_i)\right) \models\beta\ {\rm for\ any}\ \sigma_i\in\mathcal{D}(\hs_{\overline{q}})\ {\rm and}\ \rho_i\in\mathcal{D}(\hs_{\mathit{Var}\setminus\overline{q}}) {\rm with}\ \sum_i\rho_i=\rho\downarrow (\mathit{Var}\setminus\overline{q}).$
\item \textit{Entanglement quantification}: $(\mathbb{I},\rho)\models (\forall_e \overline{q})\beta\ {\rm iff\ for\ any}\ \rho^\prime\ {\rm with}\ \rho^\prime\downarrow(\mathit{free}(\beta)\setminus\overline{q})=\rho\downarrow(\mathit{free}(\beta)\setminus\overline{q}), (\mathbb{I},\rho^\prime)\models\beta.$
\end{itemize}
 Here, $\mathit{Var}$ denotes the set of all quantum variables, $\otimes$ stands for tensor product, $\rho\downarrow X$ is the restriction of a quantum state on a subset $X\subseteq\mathit{Var}$ of quantum variables. 

Obviously, the above three quantifications over quantum variables and the one studied in this paper are useful in different circumstances, and an extension of $\mathcal{QL}$ with these new quantifiers can serve as a stronger logic tool for reasoning about quantum computation and quantum information. 
For example, it can help to deal with ghost (auxiliary) quantum variables considered in \cite{Unruh19b}. At the same time, a series of new problems arise in this new logic; in particular, quantifier elimination, which will be closely connected to some fundamental issues about correlation between quantum systems, we believe,  one way or another. 

{\vskip 3pt}

\textbf{Assertion Languages for Other Quantum Program Logics}: 
$\mathcal{QL}$ is designed as an assertion logic for quantum Hoare logic (QHL). Several extensions of QHL has been proposed in the literature, including relational quantum Hoare logic (qRHL) \cite{Unruh19a,Li21,Barthe20} and quantum separation logic (QSL) \cite{Zhou21,Le22}. However, the assertion languages for all of these quantum program logics have not been formally defined. It seems that $\mathcal{QL}$ can be directly used as an assertion language for qRHL, but it is not the case for QSL. Some useful quantum generalisations of separation conjunction and implication were introduced in \cite{Le22,Zhou21}. But we believe that more research on QSL is needed in order to find quantum separation connectives with the presence of entanglement in a more serious consideration. Furthermore, it would be nice to define them in a formal logical language so that an expansion of $\mathcal{QL}$ with them can serve as an assertion language of QSL.

{\vskip 3pt}

\textbf{Adding Classical Variables}: For simplifying the presentation, quantum Hoare logic (QHL) was originally designed in \cite{Ying11} for purely quantum programs without classical variables. This simplification does not reduce its expressive power because classical computation can be simulated by quantum computation. In practical applications, however, it is often much more convenient to handle quantum variables and classical variables separately. 
The \textbf{while}-language with both classical and quantum variables is introduced in \cite{Ying11a} where a correctness formula (Hoare triple) is defined with the pre/postcondition as a pair of a classical first-order logical formula and a quantum predicate so that the former specifies the properties of classical variables and the latter for quantum variables.      
Later, a QHL with both quantum and classical variables was developed in \cite{Feng21}.
A limitation of the logic in \cite{Feng21} is that preconditions and postconditions are defined to be so-called classical-quantum predicates, each of which is represented as a family of Hermitian operators indexed by the states of classical variables. Such a representation is quite cumbersome, and should cause the issue of (double) explosion of the state spaces of both classical and quantum variables.  
In particular, the compactness offered by a logical language, even for classical variables, is totally lost.  
We believe that an assertion logic can significantly simplify reasoning about quantum algorithms with the program logic in \cite{Feng21} and thus improve its applicability.   
This then requires us to combine the original first-order quantum logic QL with classical variables (see Subsection \ref{Sec-QL-FO}) and $\mathcal{QL}$ with quantum variables newly introduced in this paper into a single logic system. It seems that the correctness formulas defined in \cite{Ying11a} are more convenient that those in \cite{Feng21} for this purpose.    

{\vskip 5pt}
 
\textbf{Acknowledgment}:  The author would like to thank Prof. Yuan Feng and Dr. Li Zhou for useful discussions. This work was partially supported by the National Key R\&D Program of China (2018YFA0306701) and the National Natural Science Foundation of China (61832015).

\end{document}